\newcommand{\ptcxx}[1]{{\color{red}\mnote{{\color{red}{\bf ptc:}
#1} }}}
\renewcommand{\ptcxx}[1]{}
\newcommand{\cg}{{\check{g}}}
\newcommand{\cphi}{{\check{\phi}}}
\newcommand{\cR}{\check{R}}
\newcommand{\cbeta}{\check{\beta}}
\newcommand{\csigma}{\check{\sigma}}
\newcommand{\e}{{\varepsilon}}
\newcommand{\tL}{\tilde L}%
\newcommand{\tR}{\tilde R}%
\newcommand{\calO}{\mathcal O}
\newcommand{\calR}{\mathcal R}
\newcommand{\calL}{\mathcal L}
\newcommand{\calU}{\mathcal U}
\newcommand{\calV}{\mathcal V}
\definecolor{bluem}{rgb}{0,0,0.5}
\definecolor{mycolor}{cmyk}{0.5,0.1,0.5,0}
\definecolor{michel}{rgb}{0.5,0.9,0.9}
\definecolor{turquoise}{rgb}{0.25,0.8,0.7}
\definecolor{bluem}{rgb}{0,0,0.5}
\definecolor{MDB}{rgb}{0,0.08,0.45}
\definecolor{MyDarkBlue}{rgb}{0,0.08,0.45}
\definecolor{MLM}{cmyk}{0.1,0.8,0,0.1}
\definecolor{MyLightMagenta}{cmyk}{0.1,0.8,0,0.1}
\definecolor{HP}{rgb}{1,0.09,0.58}
\newcommand{\za}{{\mathring a}}
\newcommand{\zu}{{\mathring u}}
\newcommand{\ztR}{{\mathring{\tR}}}
\newcommand{\ztg}{{\mathring{\tg}}}
\newcommand{\zR}{{\mathring R}}
\newcommand{\zsigma}{{\mathring \tsigma }}
\newcommand{\zbeta}{{\mathring \beta }}
\newcommand{\zw}{{\mathring w}}
\newcommand{\zphi}{{\mathring \phi}}
\newcommand{\zhR}{{\mathring \hR}}
\global\let\AddToReset=\@addtoreset}
\global\let\AddToReset=\@addtoreset}
 \newcommand{\tsigma}{{\tilde \sigma}}
 \newcommand{\zhsigma}{{\mathring \hsigma}}
 \newcommand{\ztsigma}{{\mathring \tsigma}}
 \newcommand{\hsigma}{{\hat \sigma}}
 \newcommand{\hR}{{\hat R}}
 \newcommand{\hg}{{\hat g}}
\newcommand{\zf}{\mathring f}
\newtheorem{Theorem} {\sc  Theorem\rm} [section]
\newtheorem{Lemma} [Theorem] {\sc  Lemma\rm}
\newtheorem{lemma} [Theorem] {\sc  Lemma\rm}
\newtheorem{proposition} [Theorem] {\sc  Proposition\rm}
\newtheorem{Proposition} [Theorem] {\sc  Proposition\rm}
\newcommand{\beqar}{\begin{deqarr}}
\newcommand{\eeqar}{\end{deqarr}}
\newcommand{\beaa}{\begin{eqnarray*}}
\newcommand{\eeaa}{\end{eqnarray*}}
\newcommand{\bel}[1]{\begin{equation}\label{#1}}
\newcommand{\bea}{\begin{eqnarray}}
\newcommand{\bean}{\begin{eqnarray}\nonumber}
\newcommand{\beal}[1]{\begin{eqnarray}\label{#1}}
\newcommand{\eea}{\end{eqnarray}}
\newcommand{\eeal}[1]{\label{#1}\end{eqnarray}}
\def\typeout{:<+ #.tex}\include{#}\typeout{:<-}1{\typeout{:<+ #1.tex}\include{#1}\typeout{:<-}}
\newcommand{\qed}{\hfill $\Box$ \medskip}
\newcommand{\proof}{\noindent {\sc Proof:\ }}
\newcommand{\be}{\begin{equation}}
\newcommand{\eeq}{\end{equation}}
\newcommand{\ee}{\end{equation}}
\newcommand{\beqa}{\begin{eqnarray}}
\newcommand{\eeqa}{\end{eqnarray}}
\newcommand{\beqan}{\begin{eqnarray*}}
\newcommand{\eeqan}{\end{eqnarray*}}
\newcommand{\ba}{\begin{array}}
\newcommand{\ea}{\end{array}}
\newcommand{\cC}{{\cal C}}
\newcommand{\const}{\mbox{\rm const}} 
\newcommand{\calC}{\mathcal C}
\DeclareFontFamily{OT1}{rsfs}{} \DeclareFontShape{OT1}{rsfs}{m}{n}{
<-7> rsfs5 <7-10> rsfs7 <10-> rsfs10}{}
\DeclareMathAlphabet{\mycal}{OT1}{rsfs}{m}{n}
\newcommand{\R}{\mathbb R}
\newcommand{\RR}{\mathbb R}
\newcommand{\bit}{\begin{itemize}}
\newcommand{\eit}{\end{itemize}}
\newcommand{\del}{\partial}
\newcommand{\tg}{{\tilde g}}
\newcommand{\tphi}{{\tilde \phi}}
\newcommand{\dtphi}{\underline{{\tilde \phi}}}
\newcommand{\utphi}{\overline{{\tilde \phi}}}
\newcommand{\tpsi}{{\tilde \psi}}
\newcommand{\tr}{{\mbox{\rm tr\,}}}
\newcounter{shownewstuffflag}
\newcommand{\startnewstuff}{\ifnum\value{shownewstuffflag}>0\color{blue}\fi}
\newcommand{\finishnewstuff}{\ifnum\value{shownewstuffflag}>0\color{black}\fi}
\newcounter{oldeq}
\newcounter{mnotecount}[section]
\renewcommand{\themnotecount}{\thesection.\arabic{mnotecount}}
\newcommand{\mnote}[1]
{\protect{\stepcounter{mnotecount}}$^{\mbox{\footnotesize $
\bullet$\themnotecount}}$ \marginpar{
\raggedright\tiny\em
$\!\!\!\!\!\!\,\bullet$\themnotecount: #1} }
\def\beq{\begin{equation}}
\def\eeq{\;. \end{equation}}
\newcommand{\zh}{\,{\mathring{\! h}}}
\newcommand{\zg}{{\mathring g}}
\newcommand{\eq}[1]{(\ref{#1})}
\begin{document}
\title{Initial data sets with  ends of cylindrical type:\\ I. The Lichnerowicz equation}
\author{Piotr T. Chru\'sciel\thanks{Supported in part
by the Polish Ministry of Science and Higher Education grant Nr
N N201 372736 and by IHES, Bures-sur-Yvette.}
\\  IHES, Bures-sur-Yvette, and University of Vienna
\\
\\
Rafe Mazzeo \thanks{Supported in part by the NSF grant DMS-1105050}
\\ Stanford University
}

\maketitle\thispagestyle{fancy} \rhead{\bfseries
UWThPh-2011-43}

\abstract{We construct large classes of vacuum general
relativistic initial data sets, possibly with a cosmological
constant $\Lambda\in \R$, containing ends of cylindrical type.}

\tableofcontents



\section{Introduction} \label{Sintro}
There are several classes of general relativistic initial data sets which have
been extensively studied, including:
\begin{enumerate}
\item compact manifolds,
\item manifolds with asymptotically flat ends,   and
\item manifolds with asymptotically hyperbolic ends.
\end{enumerate}
There is another type of interesting asymptotic geometry which appears naturally in
general relativistic studies, namely:
\begin{enumerate}
\item[4.] manifolds with ends of cylindrical type.
\end{enumerate}
The problem is then to construct solutions of the general relativistic vacuum constraint equations:
\begin{equation}
\begin{split}
& R(g) = 2\Lambda+ |K|_g^2-(\tr _g K)^2 \\
& \mbox{div}_g \, K + \nabla \tr_g K = 0  \;,
\end{split}
\label{8XI0.2}
\end{equation}
%
so that the initial data set $(M,g,K)$
contains ends of cylindrical type.

The simplest such solution is the cylinder $\R \times S^{n-1}$ with the standard product metric
and with extrinsic curvature tensor $K \equiv 0$. Its vacuum development when the cosmological constant $\Lambda$ is
zero is the interior Schwarzschild solution, and when $\Lambda > 0$ the Nariai solution. When $\Lambda = 0$, other
examples are provided by the static slices of extreme Kerr solutions or of the Majumdar-Papapetrou solutions
(see Appendix~\ref{A5I1.2}), as well as those CMC slices in the Schwarzchild-Kruskal-Szekeres space-times
which are asymptotic to slices of constant area radius $r<2m$.

Data of this type have already been studied in~\cite{GabachDain1,GabachDain2,HannamHusaNiall,GabachClement,Waxenegger,BMW,HannamEtAlPRL,BaumgarteNaculich},
but no  systematic analysis exists in the literature. The object of this paper is to initiate such a study.
As is already known from the study of the Yamabe problem on manifolds with ends of cylindrical type,
it is natural and indeed necessary to consider not only initial data sets where the metric is asymptotically cylindrical,
but also metrics which are asymptotically periodic. For the constant scalar curvature problem, these are
asymptotic to the \emph{Delaunay} metrics, cf.~\cite{Byde,CPP,ChPollack,KMPS,MPU1}), which in the relativistic
setting are the metrics induced on the static slices of the maximally extended Schwarzschild-de Sitter solutions.
We refer also to~\cite{SchoenSingular,P93,MPU1,MPU2,MPa,MPo,KMPS,Rat,Byde,CPP,ChPollack,CaffarelliGidasSpruck,ChenLin95,Marques}
for the construction, and properties, of complete constant positive scalar curvature metrics with \emph{asymptotically Delaunay} ends.
Exactly periodic solutions of \eqref{8XI0.2} are obtained by lifting solutions of the constraint equations
from $S^1 \times N$ to the cyclic cover $\RR \times N$, where $N$ is any compact manifold. In particular,  the
lifts to $\R \times S^2$ of initial data sets for the Gowdy metrics on $S^1\times S^2$ provide a large family of non-CMC
periodic solutions.

For all of these reasons, we include in our general considerations
\begin{enumerate}
\item[5.] manifolds with \emph{asymptotically periodic} ends.
\end{enumerate}
%
More generally still, some of the analysis here applies to the class of
\begin{enumerate}
\item[6.] manifolds with \emph{cylindrically bounded} ends.
\end{enumerate}
By this we mean that on each end, the metric is uniformly equivalent to a cylindrical metric,
with uniform estimates on derivatives up to some order. This is a particularly useful category because
it includes not only metrics which are asymptotically cylindrical or periodic, but also metrics which
are conformal to either of these types, with a conformal factor which is bounded above and below.
In fact, the solutions to \eqref{8XI0.2} we obtain starting from background asymptotically cylindrical
or periodic metrics are usually of this  {\it conformally asymptotically cylindrical or periodic} type,
so it is quite natural to include this broader class into our considerations to the extent possible.
We shall often refer to the geometries in cases 4)-6) as being~\emph{ends of cylindrical type}.

Finally, we shall incorporate this study of solutions with ends of cylindrical type into the more familiar
study of solutions of \eqref{8XI0.2} with ends which are asymptotically Euclidean or asymptotically
hyperbolic. In other words, we are interested in finding solutions with ends of different types,
some cylindrical and others asymptotically Euclidean or asymptotically hyperbolic.
There is a slight generalization of asymptotically Euclidean geometry which is easy to include, namely
\begin{enumerate}
\item[7.] manifolds with \emph{asymptotically conical} ends.
\end{enumerate}
This means that the metric on that end is asymptotic to $dr^2 +r^2 h$ as $r \to \infty$, where
$(N,h)$ is a compact Riemannian manifold.  The standard asymptotically Euclidean case occurs
when $N = S^{n-1}$ and $h$ is the standard round metric.  For simplicity, we use either of these
designations, i.e.\ either asymptotically Euclidean or conic, with the understanding that the results
apply to both unless explicitly stated otherwise.

The reader should keep in mind that there is experimental evidence suggesting that we live in a world with strictly
positive cosmological constant $\Lambda$~\cite{WoodVasey:2007jb,Riess:2006fw,Komatsu:2010fb},
whence the need for a better understanding of the space of solutions of Einstein equations with $\Lambda>0$.
In the simplest time-symmetric setting, this leads to the study of manifolds with constant positive scalar curvature
in which case, as discussed above, manifolds with ends of cylindrical type appear as natural models.
We note that the topology of manifolds carrying complete metrics with positive scalar curvature,
bounded sectional curvature and injectivity radius
bounded away from zero has been recently classified in~\cite{BeBeMaillot} using Ricci flow. Any such manifold
which is also orientable is a connected sum of copies of $S^1 \times S^2$ and quotients of $S^3$ by finite rotation groups;
if the manifold is noncompact, then infinite connected sums may occur.

We shall be using the standard conformal method. This is well understood within the class of CMC initial data on compact
manifolds~\cite{Jimconstraints}, at least when the trace $\tau = g^{ij}K_{ij}$ is large enough, in the sense that
\bel{8XI0.1}
\tau^2 \ge  \frac  {2n}{(n-1)}\Lambda  \;. 
\ee
(There are now a number of results in various geometric settings which relax this condition,
see \cite{HPP,HNT3,MaxwellNonCMC,DGH,GicquaudS,VinceJim:noncmc} and references therein, but this more general problem is still far from settled.)   Since this paper is meant
to be a preliminary general investigation of the constraint equations for cylindrical geometries, we shall be assuming
the condition \eqref{8XI0.1} in almost all of the results below. Note that in the special case where $K=\frac{\tau}{n} g$,
i.e.\ the extrinsic curvature is pure trace,  insertion of \eqref{8XI0.1} into \eq{8XI0.2} yields $R(g)\leq 0$. We are \emph{not},
however, assuming that $K$ is pure-trace \emph{nor} that $R(g)\leq 0$ unless explicitly indicated otherwise in some cases below.

We recall that for the constant scalar curvature equation, the barrier method, i.e.\ the use of sub- and supersolutions, is effective
when $R(g) \leq 0$; but that when studying noncompact metrics with constant positive scalar curvature it seems to be necessary
to use more intricate parametrix-based methods.  Because we do assume \eqref{8XI0.1} throughout, we are able to use barrier methods
in all that follows.  We hope to return in future work to the study of \eqref{8XI0.2} for manifolds with ends of cylindrical
type where \eqref{8XI0.1} does not hold and more complicated analytic methods become necessary.

Our focus here is specifically on the Lichnerowicz equation and not the vector constraint equation. Thus in all that follows we
simply assume the existence of $TT$ tensors satisfying various asymptotic conditions. Large classes of such tensors are
constructed in an accompanying paper~\cite{CMP}. Alternatively, we could also use in some of our results the compactly
supported $TT$ tensors constructed in~\cite{ErwannTT}.
Many of the results below are phrased using a somewhat general form of the semilinear elliptic scalar constraint equation,
and depend only on certain structural features of the Lichnerowicz equation. In particular, we carry out our investigations {\it without}
assuming that $d\tau=0$, in hopes
that our results can, for example, be adapted for use as in \cite{DGH,HPP}. The results here directly provide solutions of the constraint
equations only when $\tau $ is constant.

As a guide to the rest of the paper, the next section reviews the Lichnerowicz equation and its behaviour under
conformal rescaling, as well as the monotone iteration scheme and the method of sub- and supersolutions.
Section 3 describes various solutions of the Lichnerowicz-type equation which depend on only one variable.
There are several cases, depending on the signs, which we assume to be constant, of the different coefficient functions.
These are the basic models for the geometry of solutions on ends of cylindrical type.

We call special attention to the class of periodic
solutions which generalize the constant scalar curvature Delaunay metrics. These {\it constraint Delaunay
solutions}, which arise when the TT tensor has constant nonzero norm, should be regarded as deformations
of the standard Delaunay solutions. Continuing on, the main
analysis is contained in \S 4.
 We separate this into two cases, depending on whether the scalar curvature
of the initial metric is nonnegative,   or else strictly negative  outside a compact set. We describe the relationships
of these conditions with the sign of the Yamabe invariant of the noncompact manifold, and prove a number of existence
theorems for solutions of \eqref{8XI0.2} with ends of cylindrical type in these two cases. The following two sections indicate how
to obtain solutions which have some cylindrical ends and others which are either asymptotically hyperbolic (\S 5)
or else asymptotically conical (\S 6).

In all cases, the existence proofs rely on construction of suitable barrier functions. Appendix A contains a proof of
existence of solutions of the Lichnerowicz-type equation using the monotone iteration scheme and assuming the existence of
suitable barrier functions, without any completeness of uniformity assumptions. We also describe there some generalities
about barrier functions and
give examples for the special geometries of interest here. Finally, Appendix B gives a number of examples from
the relativity literature where initial data sets with ends of cylindrical type are encountered.

\section{The Lichnerowicz equation}   \label{sSLe}

Fix a cosmological constant $\Lambda$ and a symmetric tensor field $\tL$ on a Riemannian manifold
$(M,\tg)$, as well as a smooth bounded function $\tau$. This last function represents the trace of the
extrinsic curvature tensor. 
As noted earlier, we do not assume that $d\tau = 0$. 
We also do not need to assume that $\tL$ is transverse-traceless, though it is in the application
to the constraint equations. To simplify notation, set
\bel{conf216} \tsigma ^2:=\frac
 {n-2}{4(n-1)}|\tL|_\tg^2\;,\quad  \beta:=\left[\frac{n-2}{4n}\tau^2 - \frac
 {n-2}{2(n-1)}\Lambda\right],\ee
which is convenient since only these quantities, rather than $\tL$, $\tau$ or $\Lambda$, appear in the constraint
equations. The symbol $\tsigma^2$ is meant to remind the reader that this function is nonnegative, but is slightly
misleading since  $\tsigma ^2$ may not be the square of a smooth function. All of these functions are as regular as the metric
$g$ and the extrinsic curvature $K$.

We shall be studying the \emph{Lichnerowicz equation}
\bel{conf214a} \Delta_\tg \tphi - \frac {n-2}{4(n-1)}\tR \tphi =
\beta\tphi^{(n+2)/(n-2)} - \tsigma ^2 \tphi^{(2-3n)/(n-2)},
\ee
which corresponds to the first of the two equations in \eqref{8XI0.2}. We write it more simply as
\bel{conf214}
L_{\tg} \tphi = \beta \tphi^\alpha - \tsigma^2 \tphi^{-\gamma}.
\ee
Here and for the rest of the paper, $L_\tg$ denotes the conformal Laplacian,
\[
L_{\tg} = \Delta_{\tg} - \frac{n-2}{4(n-1)} \tR,
\]
and we also always set
\begin{equation}
c(n)  = \frac{n-2}{4(n-1)}, \quad \alpha = \frac{n+2}{n-2} \quad \mbox{and}\quad \gamma = \frac{3n-2}{n-2}.
\label{constants}
\end{equation}

An important property of \eqref{conf214} is the following conformal transformation property. Suppose
that $\hat{g} = u^{\frac{4}{n-2}}\tg$. It is well known that for any function $\hat{\phi}$ it holds that
\[
L_{\tg}( u \hat{\phi}) = u^{\alpha} L_{\hat{g}} \hat{\phi},
\]
where $L_{\hat{g}}$ is the conformal Laplacian associated to $\hat{g}$. Thus if we set $\tphi = u \hat{\phi}$ into
\eqref{conf214},   this last equation becomes
\[
u^{\alpha} L_{\hat{g}} \hat{\phi} = \beta u^\alpha \hat{\phi}^\alpha - \tsigma^2 u^{-\gamma} \hat\phi^{-\gamma}.
\]
Dividing by $u^\alpha$ and defining
\begin{equation}
\hat{\sigma}^2 = u^{-\gamma - \alpha} \tsigma^2,
\label{hatsig}
\end{equation}
then we have simply that
\begin{equation}
L_{\hat{g}} \hat{\phi} = \beta \hat{\phi}^\alpha - \hat{\sigma}^2 \hat{\phi}^{-\gamma}.
\label{transfLich}
\end{equation}
Note in particular that while $\tsigma^2$ transforms by a power of $u$, the coefficient function $\beta$
is the same in the transformed equation.

We shall be seeking solutions $\tphi$ to \eqref{conf214} with controlled asymptotic behavior
in the following cases:
\begin{enumerate}
 \item[A.] $(M,\tg)$ is a complete manifold with a finite number of ends of cylindrical type;
 \item[B.] $(M,\tg)$ is a complete manifold with a finite number of asymptotically conic ends
and a finite number of ends of cylindrical type, with $\tR\ge 0$;
\item[C.] $(M,\tg)$ is a complete manifold with a finite number of  ends of cylindrical type
and a finite number of asymptotically hyperbolic ends. In this case we further assume that $\tR < 0$
sufficiently far out on all ends, and hence require that
$\beta > 0$ if $\tau = \const$.
\end{enumerate}

As explained in the introduction, we use the method of sub- and supersolutions throughout.
Recall that $\tphi_+$ is a supersolution of \eq{conf214} if
\bel{conf214sup}
L_{\tg} \tphi_+ \le  -\tsigma ^2 \tphi_+^{-\gamma} + \beta\tphi^{\alpha}_+\;,
\ee
while $\tphi_-$ is a subsolution of \eq{conf214} if
 \bel{conf214sub}
L_{\tg} \tphi_- \ge -\tsigma ^2 \tphi_-^{-\gamma} + \beta\tphi^{\alpha}_-\;.
\ee
We review in Appendix A below the proof of the monotone iteration scheme, which
uses a sub- and supersolution to produce an actual solution, without assuming any asymptotic conditions
on the metric.  This only requires knowledge
of the solvability properties of linear equations of the type $\Delta_{\tg} - h$,
where $h \geq 0$, on compact manifolds with boundary.

There are quite a few separate cases of \eqref{conf214} to consider, depending on whether $\tR < 0$
or $\tR \geq 0$ on the ends, on whether $\tsigma^2 \equiv 0$ or not, on the sign of $\beta$, and finally on
the different types of asymptotic geometries described above.  Some of these \emph{cannot} occur
simultaneously under our hypotheses. For example, asymptotically conic or
asymptotically hyperbolic ends, and different choices of $\beta$, preclude certain of these from occurring.
Since we are not attempting to be encyclopedic here, we shall focus on some of the key combinations
of hypotheses and omit discussion of others.  In particular, we shall always assume that
\bel{8XI0.5}
 \tsigma ^2 \not \equiv 0.
\ee
The case $\tsigma^2 \equiv 0$ is the Yamabe problem.  In the case of ends of cylindrical type,
this requires significantly different techniques than the ones used here. Note finally that
we do not discuss in any detailed way cases where $\tR$ changes sign on the ends; this occurs
in many important examples, but requires much more work to understand properly.

\section{Radial solutions}
 \label{sS10XI.1}
In this section we study the equation \eqref{conf214} on the product cylinder $(\R \times N, \tg =
dx^2 + \mathring g)$, where $\mathring g$ is a metric with constant scalar curvature $\mathring R$
on the compact manifold $N^{n-1}$. We assume further that $\tsigma ^2$ is a (positive)
constant, and for the duration of this section we also assume that $\beta$ is a constant, not necessarily positive.
Note that all of these assumptions occur naturally for the constraint equations; indeed,
$$
\tilde L =  \lambda (dx^2 - \frac 1 n \tilde g)
$$
is transverse traceless for any $\lambda\in \R$, and $|\tilde L|^2_\tg$ is
constant. 

Our goal here is to exhibit the wide variety of global radial solutions of this problem on
the cylinder, where by radial we mean solutions which depend only on $x$.  These are nothing
more than solutions of the ODE reduction of \eqref{conf214}.  Their importance is that each of these
solutions can occur in the asymptotic limiting behavior of a more general solution on a manifold
with ends of cylindrical type. As we have already noted, there are many types of solutions listed here,
and we shall focus on just a few of these in this paper.

Before embarking on this analysis, let us note that when $(N,\zg)$ is $S^{2}$ with its standard round metric,
we can invoke the Birkhoff theorem, or its generalizations which incorporate the case $\Lambda \neq 0$
  (see, e.g., \cite{Stanciulescu}), to conclude that the associated space-time
evolutions belong to the Schwarzschild-Kottler family; these are also known individually as the
Schwarzschild-Tangherlini, or Schwarzschild-de Sitter, or Schwarzschild-anti de Sitter solutions. We refer to all of these special solutions as the Kottler space-times.  Hence in this case, the initial data sets
described in this section yield an exhaustive description of the metrics on spherically symmetric CMC hypersurfaces
in the Kottler space-times~\cite{Estabrook,NiallEdwardCMCSchwarzschild}. In particular all the solutions for which
the conformal factor tends to zero at some finite value of $x$ lead to space-times which become singular
except if they correspond to Minkowski, de Sitter or anti-de Sitter space-time.

For functions depending only on $x$, the Lichnerowicz equation reduces to
\bel{8XI0.9}
\tphi'' -a \tphi  = - \tsigma ^2 \tphi^{-\gamma} + \beta\tphi^{\alpha},
\ee
where $a = c(n)  \tR$ is, by assumption, constant. Solutions correspond to the motion of a particle in the potential
$$
V(\tphi) = -\frac a 2 \tphi^2 - \frac{\tsigma ^2}{ \gamma-1}\tphi^{1-\gamma} - \frac \beta {\alpha+1} \tphi ^{\alpha+1};
$$
the equivalent phase space formulation uses the Hamiltonian
\[
 H(\tphi,\tpsi) :=
 \frac12 \tilde\psi^2 - \frac{a}{2}\tphi^2  - \frac{\tsigma ^2}{\gamma-1}\tphi^{1-\gamma}
- \frac{\beta}{1+\alpha} \tphi^{1+\alpha}.
\]
If $\tpsi = \dot{\tphi}$ where $\tphi$ is a solution, then $H(\tphi(t), \dot{\tphi}(t))$ is independent of $t$, or in other words,
the pair $(\tphi(x), \tpsi(x))$ remains within a level set of $H$. Since our interest is in solutions which remain bounded
away from zero and infinity for all $x$, or even better, solutions which are periodic, it is convenient to study these level sets.

Since
\[
\nabla H = ( -a \tphi + \tsigma ^2 \tphi^{-\gamma} - \beta \tphi^\alpha,  \tpsi),
\]
the critical points are all of the form $(\tphi_0,0)$, where
\begin{equation}
f(\tphi_0) := \tsigma ^2 \tphi^{-\gamma-1}_0 - \beta \tphi^{\alpha-1}_0 = a.
\label{crit}
\end{equation}
We record also that the Hessian of $H$ equals
\[
\nabla^2 H =
\left( \begin{matrix}  -a - \tsigma ^2 \gamma \tphi^{-\gamma - 1}-\alpha\beta \tphi^{\alpha-1} &  0 \\
0 & 1 \end{matrix}
\right)
 \;.
\]

The following discussion is organized  into six cases: $\beta >0$, $= 0$ and $< 0$, and $\tsigma^2 = 0$ or
$\tsigma^2 > 0$; within each of these we consider $a < 0$ and $a \geq 0$,  either separately or together
($a=0$ and $a>0$ always behave qualitatively the same). This covers the twelve possible situations.
We are, of course, only interested in solution curves in the right half-plane where $\tphi \geq 0$.
When $\tsigma^2 = 0$, the origin $(0,0)$ is always a critical point of $H$, and its presence influences
the behaviour of nearby solution curves. In this case, any solution curve which reaches the
line $\tphi = 0$ away from the origin does so at some finite value of the parameter $x$ (which
we refer to as time here). When $\tsigma^2 > 0$, then $H$ is not defined on the line $\tphi = 0$,
and no solution curve reaches this line in finite time (this uses the fact that the number $\gamma$
appearing in the exponent is greater than $2$). We omit discussion of critical points and (portions of)
level curves in the open left half-plane. We illustrate many of these cases with diagrams which exhibit
typical level curves of the function $H$. The notation $H = \mbox{crit}$ indicates a level set which contains
a critical point, and similarly $H < \mbox{crit}$ or $H > \mbox{crit}$ indicates sub- or supercritical level sets.

\begin{itemize}
\item[i)] $\beta > 0$ and $\tsigma^2  = 0$: in this case, the function $f$ is monotone decreasing, with
\[
\lim_{\tphi \to 0} f = 0 \quad \mbox{and} \qquad \lim_{\tphi \to \infty} f = -\infty.
\]
Thus \eqref{crit} has no solution when $a \geq 0$, while if $a < 0$, then there is a unique positive solution $\tphi_0$,
which is always unstable. There is also a critical point at $(0,0)$, the stability of which is determined by the sign of $a$.

Representative level curves of $H$ are shown in Figure~\ref{Fig1}; the left plot illustrates the situation $a < 0$
and the right one $a \geq 0$. For any $a$ there exist curves defined for all $x$; these are either unbounded
in $\tphi$ in both directions or else are unbounded in one direction and tend to the critical point
(either $(0,0)$ or $(\tphi_0,0)$) in the other. All other solution curves reach the line $\tphi = 0$ away
from the origin at finite time, and hence remain in the right half-plane only on a finite interval
or else a semi-infinite ray.


\begin{figure}
\includegraphics[width=.5\textwidth]{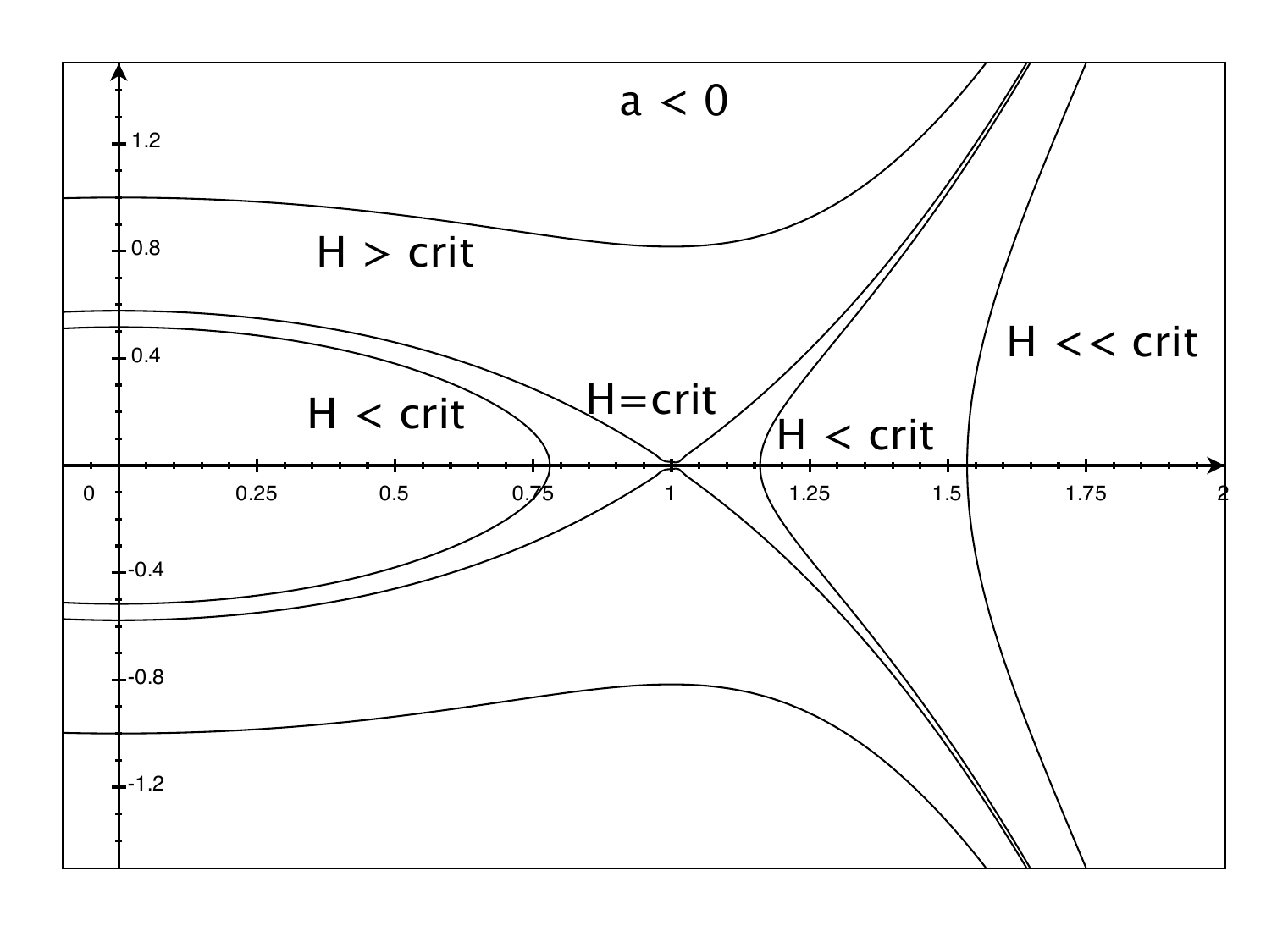}
\includegraphics[width=.5\textwidth]{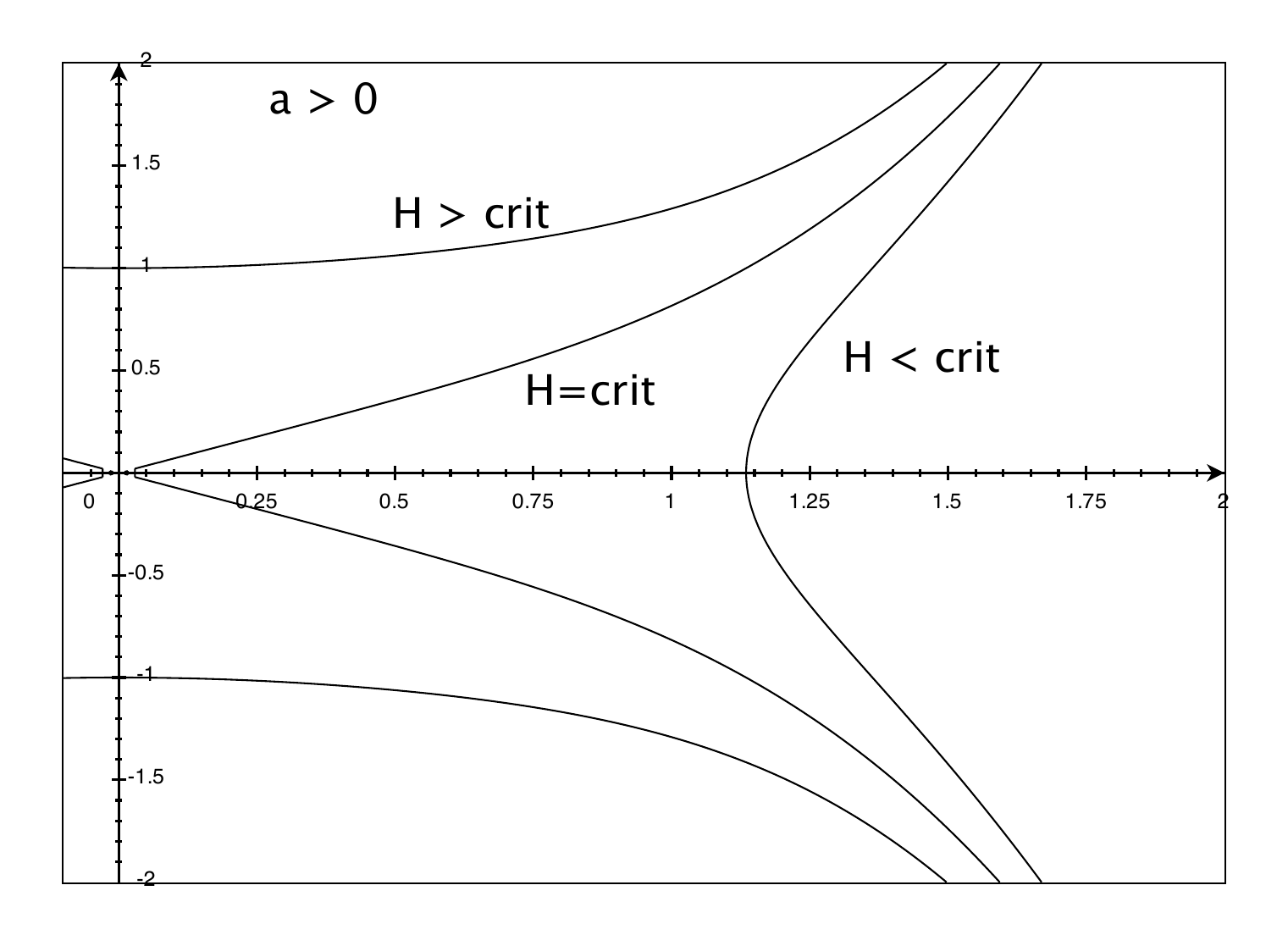}
\caption{\label{Fig1}Typical level curves of $H$ when $\beta > 0$, $\tsigma^2 = 0$}
\end{figure}

%
\item[ii)]  $\beta > 0$ and $\tsigma  \neq 0$: now $f$ decreases monotonically from $+\infty$ to $-\infty$,
so there is a unique positive solution $\tphi_0$ to \eqref{crit} for any value of $a$; this is always an unstable critical point.
Referring to the plot on the left in Figure~\ref{Fig2}, in the subcritical energy levels $H < \mbox{crit}$,
$\tphi$ either tends to infinity in both directions or tends to zero in both directions. In the supercritical
energy levels, $\tphi$ tends to $0$ in one direction and to infinity in the other. The nonconstant solutions
at the critical level are asymptotic to the critical point $(\tphi_0,0)$ at one end, while $\tphi$ tends to
either $0$ or infinity at the other.


\item[iii)] $\beta = 0$ and $\tsigma^2  = 0$; excluding the trivial case $a=0$, we see that when $a < 0$ solution
curves are all halves of ellipses which reach the line $\tphi = 0$ in finite time (this is the heavier curve
in the right plot of Figure~\ref{Fig2}). When $a > 0$, solution curves are hyperbolae for which $\tphi $
is unbounded in either both directions, or else in only one direction and reach $\tphi = 0$ in finite
time in the other; the exceptions are the curves at critical energy which exist for all time and lie along rays,
tending toward the origin in one direction and to infinity in the other.

\begin{figure}
\includegraphics[width=.5\textwidth]{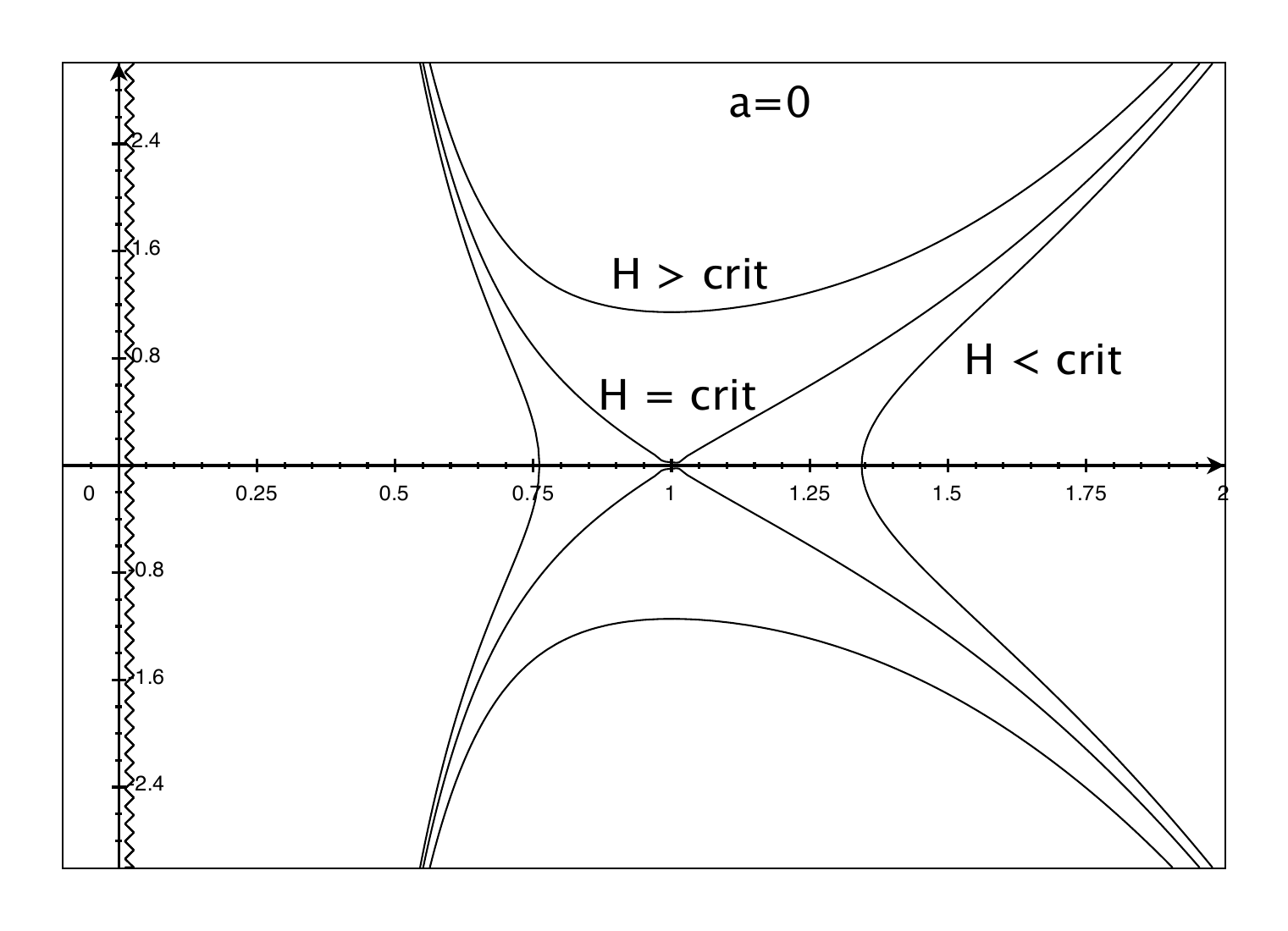}
\includegraphics[width=.5\textwidth]{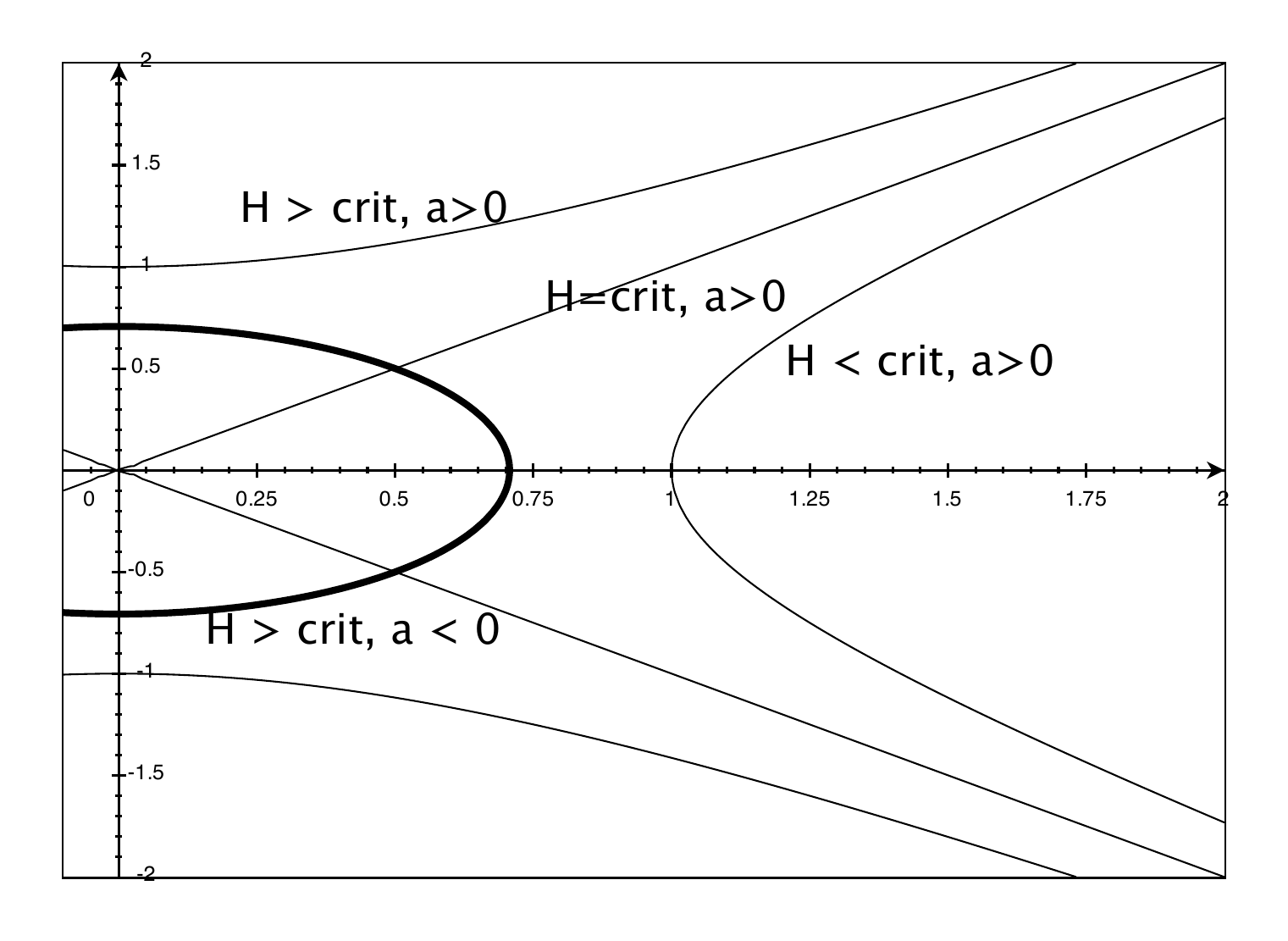}
\caption{\label{Fig2}Typical level curves of $H$ when $\beta, \tsigma ^2 >0$ ($a=0$), and when $\beta = \tsigma^2 = 0$
for $a \neq 0$; $a > 0$ corresponds to the heavier curve}
\end{figure}

\item[iv)] $\beta = 0$ and $\tsigma^2  > 0$: because $\tsigma^2$ is nonzero, the solution curves remain
in the open right halfplane for all time. Since the function $f$ decreases monotonically from $+\infty$ to $0$,
there is a critical point only when $a > 0$, and this is unstable, see the left plot of Figure~\ref{Fig3}.
All solution curves have $\tphi$ either unbounded or tending to zero in one or both directions.
When $a\ge 0$,  there is no critical point and, as in the right on Figure~\ref{Fig3}, all solution
curves have $\tphi \to 0$ both as $t \to \pm \infty$.

\begin{figure}
\includegraphics[width=.5\textwidth]{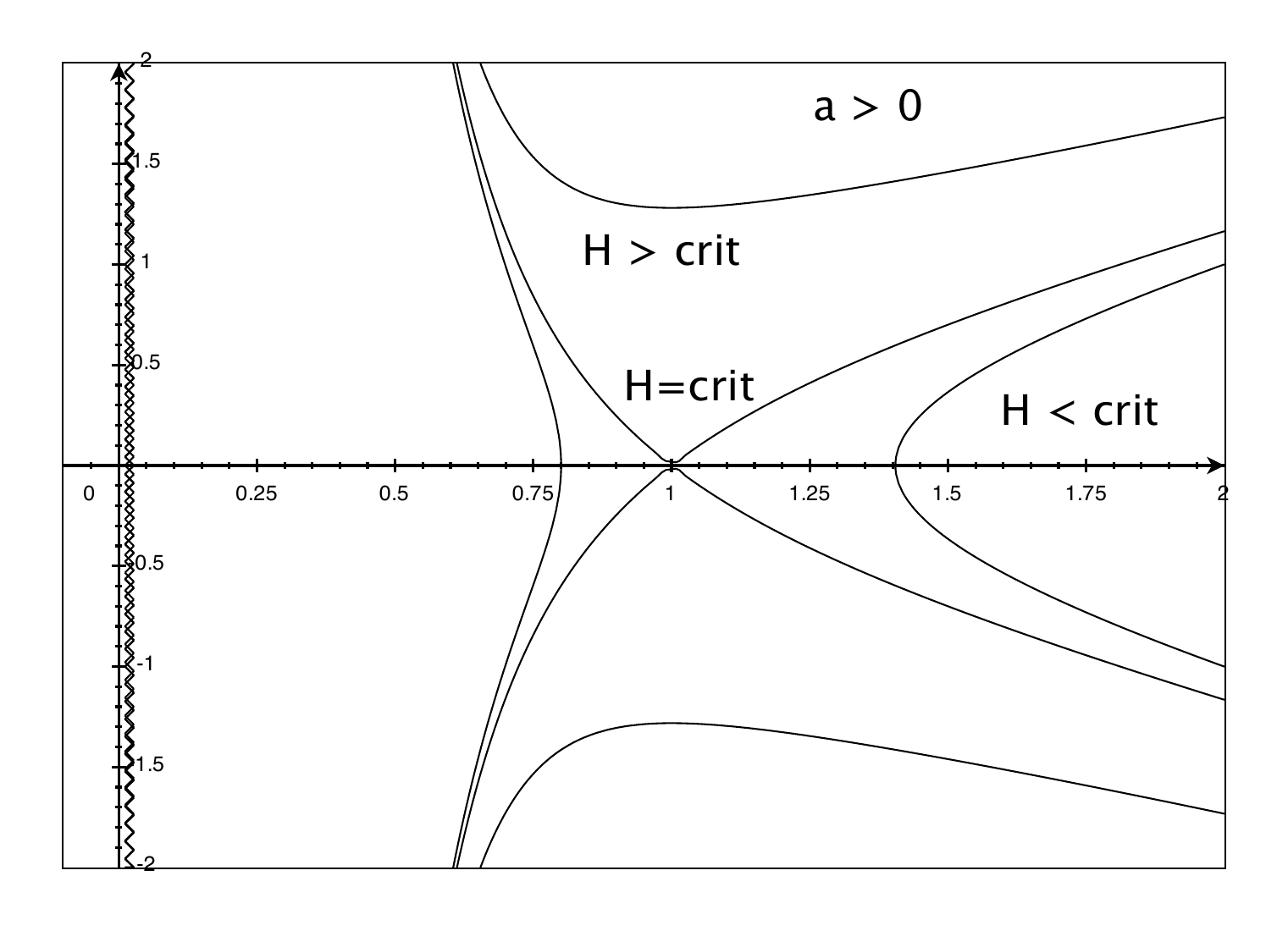}
\includegraphics[width=.5\textwidth]{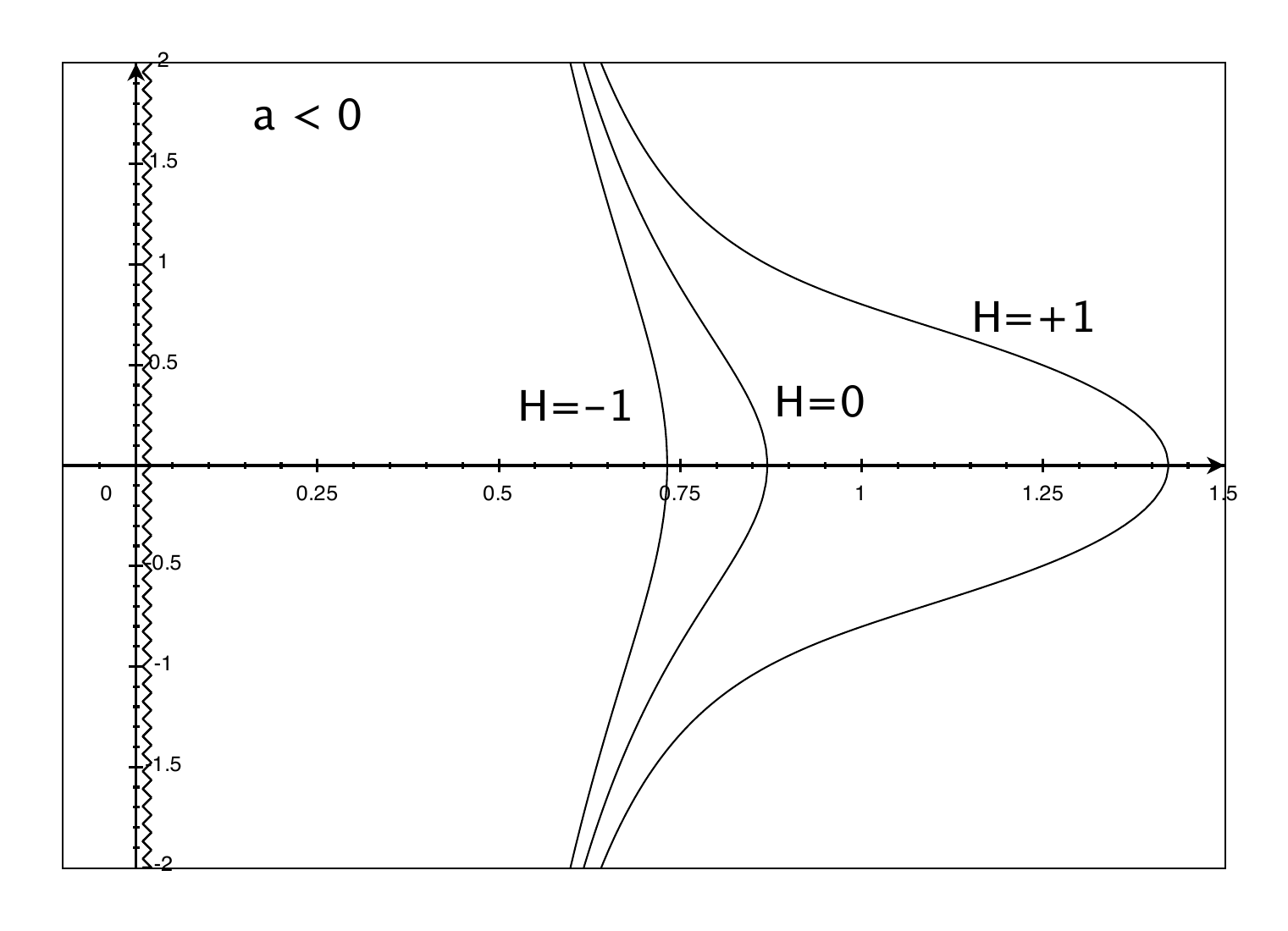}
\caption{\label{Fig3}Typical level curves of $H$ when $\beta = 0$, $\tsigma ^2 >0$, $a\neq 0$}
\end{figure}

\item[v)] $\beta < 0$ and $\tsigma^2  = 0$: there is a critical point at $(0,0)$, and another at
some point $(\tphi_0,0)$ when $a > 0$, see the left plot of Figure~\ref{Fig4}. When $a>0$, there is
a homoclinic orbit connecting $(0,0)$ to itself in infinite time in both directions. The supercritical orbits, which
lie outside this, reach $\tphi = 0$ in finite time both forwards and backwards. The orbits with
energy less than zero and greater than $H(\tphi_0, 0)$ are periodic (these are the Delaunay solutions).
When $a \leq 0$, all solutions exist only on a finite time interval.

\item[vi)] $\beta < 0$ and  $\tsigma^2 > 0$:  in this case $f$ is convex, with a unique critical point $\tphi_0$:
\[
\tphi_{\mathrm{crit}} = \left( \frac{\tsigma ^2 (\gamma+1)}{-\beta(\alpha-1)}\right)^{\frac{1}{\alpha+\gamma}} =
\left( -\frac{\tsigma ^2}{\beta} (n-1)\right)^{\frac{n-2}{4n}}.
\]
A further calculation shows that
 \bel{5XII0.1}
   f(\tphi_{\mathrm{crit}}) =
 (-\beta)^{\frac{n-1}{n}} \tsigma
^{\frac{2}{n}} (n-1)^{\frac{1-n}{n}} n =: a_0.
\ee
If $a < a_0$, equation \eqref{crit} has no solutions, and only one when $a = a_0$. In these cases, all solutions
(except $(\tphi, \tilde\psi) \equiv (\tphi_0,0)$) have $\tphi \to 0$ both as $t \to \pm \infty$.
When $a > a_0$ there are two critical points, $(\phi_0^{\pm},0)$, see the right plot of Figure~\ref{Fig4},
with $0 < \tphi_0^- < \tphi_0^+$ and $H(\tphi_0^-) < H(\tphi_0^+)$.  There is a homoclinic orbit
connecting $(\tphi_0^-,0)$ to itself and defined for all $x \in \RR$. Noncritical solution curves inside
this orbit are periodic, and we call these the {\it constraint Delaunay solutions}. Solution curves outside
the closure of this homoclinic orbit have $\tphi \to 0$ either as $x \to \infty$ or $-\infty$ or both.

\begin{figure}
\includegraphics[width=.5\textwidth]{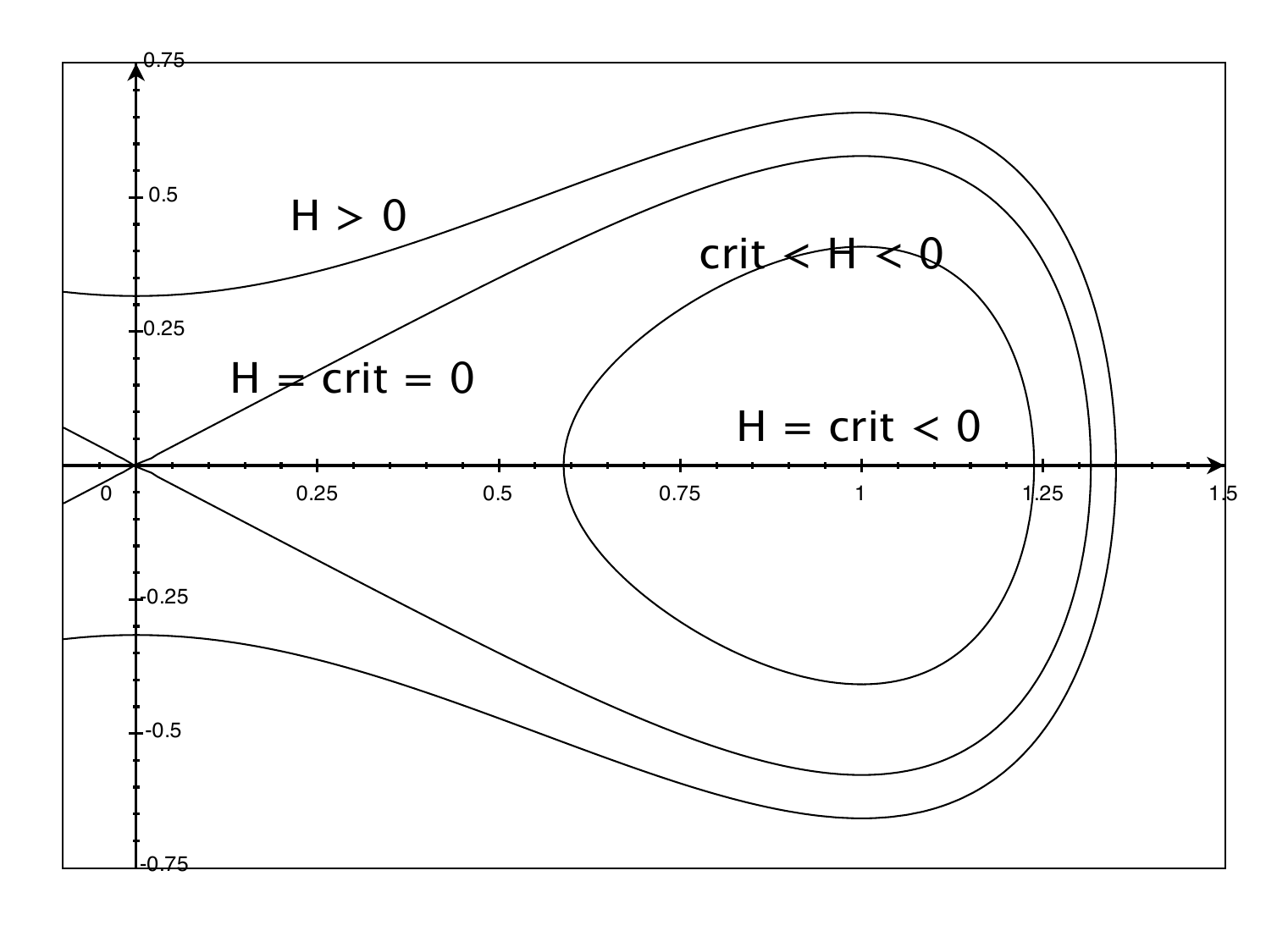}
\includegraphics[width=.5\textwidth]{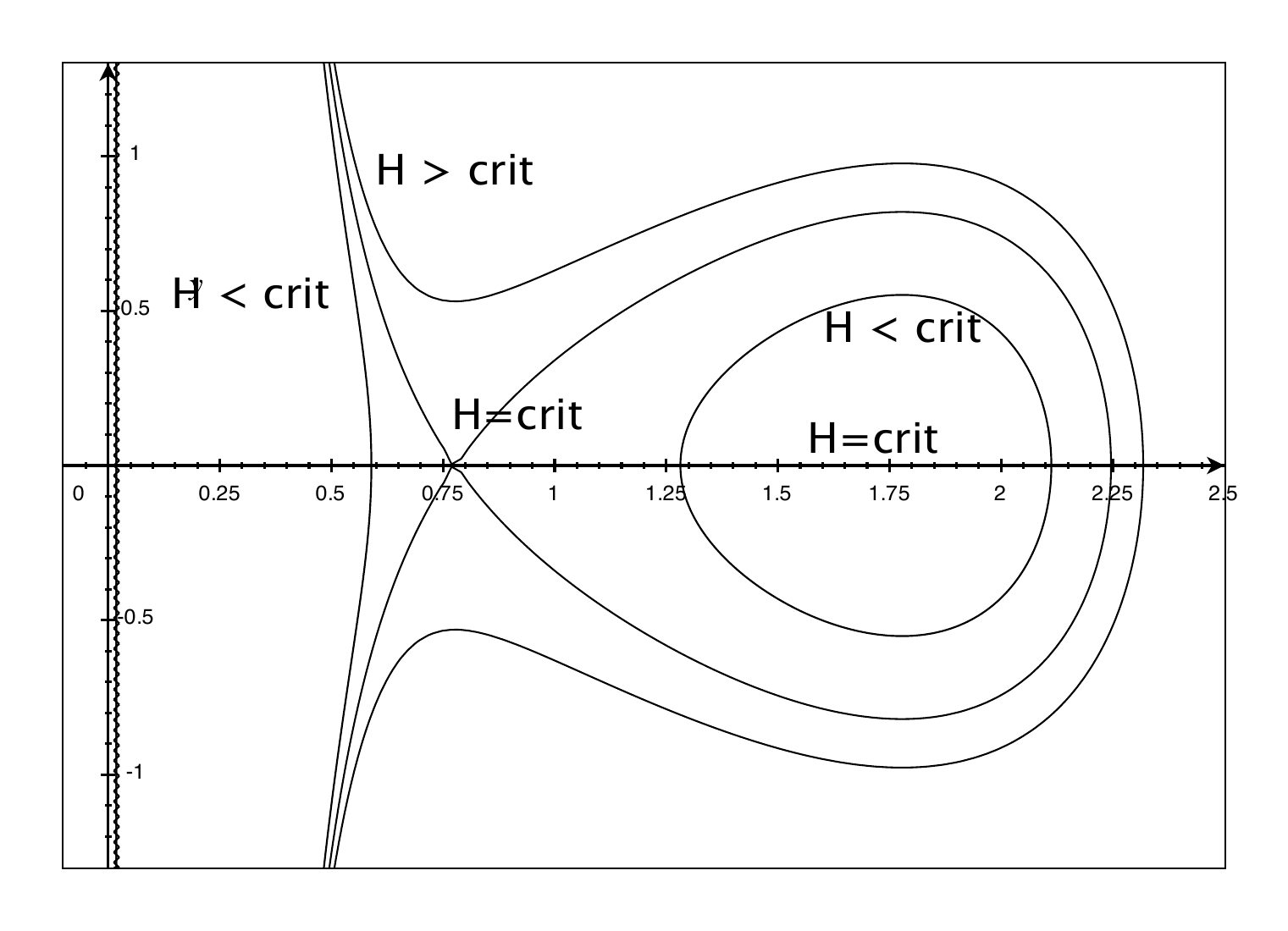}
\caption{\label{Fig4}Typical level sets of $H$ which include periodic orbits when $\beta < 0$, $\tsigma^2 = 0$ (left) and $\tsigma^2 > 0$ (right); in the central enclosed regions in each figure, H=crit denotes the stationary point $(\tphi_0,0)$.}
\end{figure}
\end{itemize}


To summarize, the {\it only} cases where there exist solutions for which $\tphi$ remains bounded
and uniformly positive for all $x$ are v) and vi), and in these two situations, the relevant solutions
are either constant or else periodic. It is worth emphasizing that the minimum `necksize' for the family
of constraint Delaunay solutions in case vi) is strictly positive.

\section{Manifolds with ends of cylindrical type}
In this section we consider complete
manifolds $(M,\tg)$ which have a finite number of ends, each of cylindrical type.
Thus each end $E_\ell$ of $M$ is identified with the product $\RR^+ \times N^{n-1}_\ell$, where $N_\ell$ is compact.
In the most general scenario, the metric $\tg$ satisfies
\bel{18XII11.1}
c_1 (dx^2 + \zg_\ell) \leq \left. \tg \right|_{E_\ell} \leq c_2 (dx^2 + \zg_\ell)
\ee
for some positive constants $c_1, c_2$ and fixed Riemannian metric $\zg_\ell$ on $N_\ell$.  We are specifically
interested in the cases when $\tg$ on each end is conformal to a metric which is either asymptotically cylindrical or
asymptotically periodic, but the more general condition \eqref{18XII11.1} will be allowed in some of our results.
To set notation, we say that $\tg$ is asymptotically conformally cylindrical on the end $E_\ell$ if
$\tg = w^{\frac{4}{n-2}} \check{g}$ there, where
\[
\check{g}  = dx^2 + \zg_\ell + \calO(e^{-\nu x}),
\]
for some metric $\zg$ on $N_\ell$; $\tg$ is called asymptotically conformally periodic if $\check{g}$ is asymptotic at the
same exponential rate to a metric $\zg$ on $\RR \times N_\ell$ which is periodic in $x$ with period $T$.  For convenience
we often drop the index $\ell$ when the argument being made does not specifically address issues related
to the presence of more than one end.

We also suppose that the functions $\beta$ and $\tsigma^2$ converge to
functions $\zbeta$ and $\zsigma^2$ which are either independent of $x$
(in the conformally asymptotically cylindrical case) or else periodic with period $T$.
Finally, we assume in both cases that the conformal factor $w$ converges
to a function $\zw$, which is a smooth positive function on $N$.  In all cases,
convergence is at the rate $e^{-\nu x}$. While the methods here do not require much
differentiability, we do need to assume that $\tR \to \ztR$ in those results which assert existence of limits of the solution.


We shall construct solutions $\tphi$ to the Lichnerowicz equation under various hypotheses.
There is a preliminary very general result which assumes only that $\tR$ is negative and bounded away
from $0$ far out in each end, but which requires almost nothing else about the structure of $\tg$
except that it is of cylindrical type. The rest of the results here assume some form of either of the
conditions $\tR \leq -c$ or $\tR \geq 0$.
To obtain solutions which have good asymptotic behavior, one needs to make more hypotheses than just cylindrical boundedness, and we prove such results in the asymptotically conformally
cylindrical and periodic settings. If the putative solution $\tphi$ converges to a limit $\zphi$ (which is either
a function on $N$ or else periodic on $\RR \times N$), then $\zphi$ must be a solution of the
{\it reduced Lichnerowicz equation}
\begin{equation}
\label{redlich}
\Delta_\zg \zphi - c(n) \ztR  \zphi =\zbeta\zphi^{\alpha} - \zsigma^2 \zphi^{-\gamma}.
\end{equation}
In the asymptotically periodic case, when $\zphi$ is a function on $N$, this is {\it not}
the Lichnerowicz equation on $N$, since the constants $c(n)$, $\alpha$ and $\gamma$
are for an $n$-dimensional manifold rather than an $(n-1)$-dimensional one. On the other
hand, in the asymptotically periodic case this is the standard Lichnerowicz equation on
the $n$-dimensional manifold $S^1 \times N$.
In any case, as part of the work ahead, we must prove existence of solutions to \eqref{redlich}.

In several places below we shall use the Yamabe invariant of the conformal class $[\tg]$ on $M$. This is the value
\begin{equation}
\label{25X0.1x}
\begin{aligned}
Y(M,[\tg]) =  & \inf_{u\in \calC_0^\infty\atop{\ 0 \leq u\not \equiv 0}} Q_{\tg}(u), \\
& \mbox{where} \ Q_{\tg}(u) =  \frac{ \frac12 \int_M ( |\nabla u|^2 + c(n) \tR u^2)}{\big(\int_ M u^{ \frac{2n}{n-2}} \big)^{ \frac{n-2}{n}}}.
\end{aligned}
\end{equation}
If $K$ is a compact domain in $M$ with smooth boundary, then $Y(K, [\tg])$ denotes the infimum of the same quantity
amongst smooth nonnegative functions with support in $K$.   We recall the well-known
\begin{lemma}
 \label{L19XII11.1}
Let $\lambda_0$ denote the lowest eigenvalue of $-L_{\tg}$ on $K$ with Dirichlet boundary conditions.  Then $\lambda_0 \neq 0$
if and only if $Y(K, [\tg]) \neq 0$,  and if this is the case, then the signs of $\lambda_0$ and $Y(K, [\tg])$ are the same.
\label{sgnYam}
\end{lemma}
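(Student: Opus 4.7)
The plan is to compare $\lambda_0$ and $Y(K,[\tg])$ via the common quadratic form
$$
Q(u) := \int_K \bigl( |\nabla u|^2 + c(n)\tR u^2 \bigr),
$$
since integration by parts on $H_0^1(K)$ yields $Q(u) = -\int_K u\, L_{\tg} u$. Thus $Q(u)$ is simultaneously the numerator of the Rayleigh quotient for $-L_{\tg}$ (with denominator $\|u\|_{L^2}^2$) and, up to a factor of $1/2$, of the Yamabe quotient $Q_{\tg}(u)$ (with denominator $\|u\|_{L^{2n/(n-2)}}^2$). Both denominators are strictly positive on nonzero test functions, so the signs of $\lambda_0$ and $Y(K,[\tg])$ are controlled by when $Q$ can be made negative, zero, or must remain positive.

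Let $\phi_0 \in H_0^1(K)$ be the first Dirichlet eigenfunction of $-L_{\tg}$, chosen positive in the interior by the strong maximum principle and smooth up to $\partial K$. Testing the Yamabe quotient on $\phi_0$ (or a smooth compactly supported approximation) gives
$$
Q_{\tg}(\phi_0) = \frac{\lambda_0 \|\phi_0\|_{L^2}^2}{2\|\phi_0\|_{L^{2n/(n-2)}}^2},
$$
which has the same sign as $\lambda_0$. In particular $Y(K,[\tg]) \leq Q_{\tg}(\phi_0)$ is negative when $\lambda_0 < 0$ and nonpositive when $\lambda_0 = 0$. Conversely, the Rayleigh--Ritz inequality $Q(u) \geq \lambda_0 \|u\|_{L^2}^2$ for all $u \in H_0^1(K)$ forces $Y(K,[\tg]) \geq 0$ whenever $\lambda_0 \geq 0$. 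Combined, these two observations give $\lambda_0 < 0 \Leftrightarrow Y(K,[\tg]) < 0$ and $\lambda_0 = 0 \Rightarrow Y(K,[\tg]) = 0$.

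The only mildly delicate step is $\lambda_0 > 0 \Rightarrow Y(K,[\tg]) > 0$, since nonnegativity of $Q$ does not a priori imply a definite lower bound against the Sobolev denominator. For this I would fix $\epsilon \in (0,1)$ and split
$$
Q(u) = \epsilon\, Q(u) + (1-\epsilon) Q(u) \geq \epsilon \int_K |\nabla u|^2 + \int_K \bigl( \epsilon c(n)\tR + (1-\epsilon)\lambda_0 \bigr) u^2,
$$
where Rayleigh--Ritz $Q(u) \geq \lambda_0 \|u\|_{L^2}^2$ is used on one copy of $Q$. Since $\tR$ is bounded on the compact set $K$ and $\lambda_0 > 0$, one can pick $\epsilon > 0$ small enough that the pointwise coefficient of $u^2$ is nonnegative on $K$, giving $Q(u) \geq \epsilon \int_K |\nabla u|^2$. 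Combined with the Sobolev embedding $\|u\|_{L^{2n/(n-2)}}^2 \leq C_S \int_K |\nabla u|^2$ on the bounded domain $K$, this yields $Q_{\tg}(u) \geq \epsilon/(2C_S) > 0$, hence $Y(K,[\tg]) > 0$. The reverse implications $Y > 0 \Rightarrow \lambda_0 > 0$ and $Y = 0 \Rightarrow \lambda_0 = 0$ then follow by eliminating the other two possible cases. The main obstacle, the above upgrade from nonnegativity to strict positivity, is neatly dispatched by choosing $\epsilon$ in terms of $\lambda_0$ and $\min_K \tR$.
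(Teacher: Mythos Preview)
Your proof is correct and follows the same overall strategy as the paper's: both recognize that the Rayleigh quotient for $-L_{\tg}$ and the Yamabe quotient share the numerator $Q(u)=\int_K(|\nabla u|^2+c(n)\tR u^2)$, so the signs match for any fixed test function, and the only subtle point is ruling out $\lambda_0>0$ with $Y(K,[\tg])=0$.

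The execution of that last step differs. The paper uses the two-constant Sobolev inequality $\|u\|_{2^*}^2\le A\|\nabla u\|_2^2+B\|u\|_2^2$, rewrites the right side as $A'$ times the common numerator plus $B'\|u\|_2^2$, and combines this with the identity $\calR_{\tg}(u)\|u\|_2^2=Q_{\tg}(u)\|u\|_{2^*}^2$ to obtain $\calR_{\tg}(u)\le C'Q_{\tg}(u)$ once both are nonnegative. Your argument instead splits $Q(u)=\epsilon Q(u)+(1-\epsilon)Q(u)$, feeds Rayleigh--Ritz into one copy, and chooses $\epsilon$ so small (using $\lambda_0>0$ and boundedness of $\tR$ on the compact $K$) that the zeroth-order term becomes nonnegative; this leaves $Q(u)\ge\epsilon\|\nabla u\|_2^2$, and the one-constant Dirichlet Sobolev inequality on $K$ finishes. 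Your route is slightly more quantitative (it produces an explicit lower bound $\epsilon/(2C_S)$ for $Y$ in terms of $\lambda_0$, $\min_K\tR$, and the Sobolev constant) and avoids the mild algebraic manipulation at the end of the paper's argument; the paper's route has the virtue of never invoking the Poincar\'e/pure-gradient form of Sobolev, only the general two-constant version.
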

\begin{proof}
Let $\calR_{\tg}$ denote the Rayleigh quotient for $-L_{\tg}$,
\begin{equation}
\calR_{\tg}(v) = \frac{ \frac12 \int_K ( |\nabla v|^2 + c(n) \tR v^2)}{\int_ K v^2}.
\label{Rayleigh}
\end{equation}
Since
\begin{equation}
\calR_{\tg}(v) \int_K v^2 = Q_{\tg}(v) \big(\int_ K  v^{ \frac{2n}{n-2}} \big)^{ \frac{n-2}{n} } ,
\label{RQ}
\end{equation}
we see that $Q_{\tg}(v)$ and $\calR_{\tg}(v)$ have the same sign for any $v$. {}From this it follows directly that
$\lambda_0 < 0$ if and only if $Y(K,[\tg]) < 0$.

Thus we may now assume that $\lambda_0 \geq 0$ and $Y(K, [\tg]) \geq 0$, and it remains to show that
either both of these quantities vanish, or neither of them do.

Applying H\"older's inequality to \eqref{RQ} and using that $\calR_{\tg}(v) \geq 0$ shows that
$Q_{\tg}(v) \leq C \calR_{\tg}(v)$ for some constant $C$ independent of $v$, hence if $\lambda_0 = 0$
then $Y(K, [\tg]) = 0$.

For the converse, recall the Sobolev inequality on $K$, which states that
$$
\big(\int v^{2n/(n-2)} \big)^{ (n-2)/n}  \leq A \int |\nabla v|^2 + B \int v^2\;,
$$
for some constants $A$ and $B$. Rewrite the right side as
\begin{multline*}
2A \big( \frac12 \int |\nabla v|^2 + c(n) \tR v^2)  +  \int (B-  A c(n) \tR) v^2 \\
\leq  A' \big( \frac12 \int |\nabla v|^2 + c(n) \tR v^2)  +  B' \int   v^2\;,
\end{multline*}
for some other constants $A', B' > 0$. Dividing by $\int v^2$ and using that
$Q_{\tg}(v), \calR_{\tg} (v) \geq 0$ for all $v$, we obtain
\begin{equation}
\calR_{\tg}(v) \leq Q_{\tg}(v) ( A' \calR_\tg(v) + B ') \Rightarrow \calR_{\tg}(v) \leq C' Q_{\tg}(v)
\label{Sob}
\end{equation}
for any $v \not \equiv 0$. This shows that if $Y(K, [\tg]) = 0$ then $\lambda_0 = 0$.
This proves all the assertions in the lemma. \qed
\end{proof}

In what follows we will assume that $\tR$ remains bounded away from zero outside a compact set, or has constant sign everywhere, which is certainly restrictive.
There are
many cases of interest where this fails and the limiting scalar curvature $\zR$ changes sign.  As already mentioned, those cases
are not discussed here since they require fairly different techniques.

\subsection{$\tR<0$ outside of a compact set}
 \label{ss2I1.2}
We begin with a general existence theorem assuming only that the scalar curvature
$\tR $ is negative outside of a compact set. In this section we are mainly interested in complete
manifolds with all ends of cylindrical type, but neither of these conditions is assumed in our next result:

\begin{Proposition}
 \label{P19XII0.1} Let $(M,\tg)$ contain at least one cylindrical end.
Suppose that there exist  constants $c ,C>0$ and a compact set $K$  such that
\bel{19XII11.1}
 c\le \beta\le C \;,
 \quad
-C \le \tR\big|_{M\setminus K}\le -c
 \;.
\ee
(No bound on $\tsigma^2$ is required.) Then  there exists a solution of the Lichnerowicz equation.
The conformally rescaled metric is complete if $\tg$ is, and any end  of cylindrical type remains cylindrical.
\end{Proposition}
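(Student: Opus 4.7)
\proof[Proof plan.]
The plan is to produce a positive subsolution $\tphi_-$ and a positive supersolution $\tphi_+$ of \eqref{conf214} with $\tphi_-\le\tphi_+$ everywhere on $M$, and then to invoke the monotone iteration scheme of Appendix~A to extract a smooth positive solution $\tphi$ wedged between them. The bounds that the barriers provide will also give completeness and preservation of the cylindrical type of each end.

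For the subsolution I propose $\tphi_-=\epsilon$, a sufficiently small positive constant. Since $L_{\tg}\epsilon=-c(n)\tR\,\epsilon$, the inequality \eqref{conf214sub} to check is
\[
\tsigma^2\epsilon^{-\gamma}\;\ge\;\beta\epsilon^\alpha+c(n)\tR\,\epsilon .
\]
On $M\setminus K$ the hypothesis $-\tR\ge c$, together with $\beta\le C$ and $\tsigma^2\ge 0$, gives this inequality as soon as $\epsilon^{\alpha-1}\le c(n)c/C$, and the size of $\tsigma^2$ plays no role. On $K$ both $\tR$ and $\tsigma^2$ are automatically bounded (being continuous on a compact set). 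Where $\tsigma^2>0$ on $K$ the singular term $\tsigma^2\epsilon^{-\gamma}$ dominates everything else as $\epsilon\to 0$; at the (possibly) exceptional points of $K$ where $\tsigma^2=0$ and $\tR>0$ simultaneously, the constant $\epsilon$ is not itself a subsolution, and I replace it on a small neighbourhood of such a point by a local subsolution (for instance, a small positive multiple of the first Dirichlet eigenfunction of $-L_{\tg}$ on that neighbourhood) and take the pointwise maximum with $\epsilon$, which remains a subsolution in the sense required by Appendix~A.

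For the supersolution I take $\tphi_+=C_1$ a large positive constant, reducing \eqref{conf214sup} to the pointwise condition
\[
\tsigma^2 C_1^{-\gamma}+c(n)|\tR|\,C_1\;\le\;\beta\,C_1^\alpha .
\]
When $\tsigma^2$ is bounded on $M$, this holds uniformly once $C_1$ is chosen large enough: indeed, using $\beta\ge c$ and $|\tR|\le C$ on $M\setminus K$ (and bounded on $K$), the relation $\alpha>1$ and $\gamma>0$ together with the global bound on $\tsigma^2$ make the right-hand side dominate. When $\tsigma^2$ is unbounded on one of the cylindrical ends, the constant choice fails and I instead take the smoothed pointwise maximum
\[
\tphi_+=\max\!\Bigl(C_1,\bigl(A(1+\tsigma^2)\bigr)^{1/(\alpha+\gamma)}\Bigr)
\]
for a suitable constant $A$, arranged so that $\beta\tphi_+^\alpha$ dominates $\tsigma^2\tphi_+^{-\gamma}$ by a definite fraction, the remaining term $c(n)|\tR|\tphi_+$ being absorbed as before. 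Equivalently, I may precondition by the conformal change \eqref{hatsig} with a positive $u$ satisfying $u^{\alpha+\gamma}\sim 1+\tsigma^2$, reduce to the bounded case for $\hat\sigma^2$, and transplant the resulting solution back as $\tphi=u\hat\phi$. This supersolution construction is the main technical point.

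With $\tphi_-\le\tphi_+$ established globally, Appendix~A furnishes a smooth positive solution $\tphi$ with $\tphi_-\le\tphi\le\tphi_+$. On every cylindrical end $\tphi$ is then pinched between positive constants, so the conformal factor $\tphi^{4/(n-2)}$ is pinched between positive constants there as well, and the rescaled metric $\tphi^{4/(n-2)}\tg$ is uniformly equivalent to $\tg$ on each end. Consequently completeness of $\tg$ passes to the rescaled metric, and every end of cylindrical type of $(M,\tg)$ remains of cylindrical type after the rescaling, as claimed. The only real obstacle along the way is obtaining a supersolution when $\tsigma^2$ is unbounded on an end, and either of the two devices just sketched disposes of it.
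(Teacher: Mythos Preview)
Your subsolution construction has a genuine gap on the compact set $K$. At a point $p\in K$ where $\tsigma^2(p)=0$ and $\tR(p)\ge 0$, the constant $\epsilon$ fails to be a subsolution (since $\beta\ge c>0$), as you correctly note. However, your proposed patch does not work: on a \emph{small} neighbourhood $U$ of such a point one has $\tR>0$ throughout $U$, so the lowest Dirichlet eigenvalue $\lambda_0$ of $-L_{\tg}=-\Delta_{\tg}+c(n)\tR$ on $U$ is strictly \emph{positive}. The eigenfunction $\phi_0$ then satisfies $L_{\tg}(t\phi_0)=-t\lambda_0\phi_0<0$, whereas the subsolution inequality near $p$ (where $\tsigma^2$ is small) requires $L_{\tg}\tphi_-\ge\beta\tphi_-^\alpha>0$. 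So $t\phi_0$ is not a subsolution there, and taking a maximum with $\epsilon$ does not help. The obstruction is not local: on any small domain with $\tR>0$ there is simply no positive function with $L_{\tg}\tphi_-\ge\beta\tphi_-^\alpha$.

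The paper's proof avoids this difficulty by a preliminary \emph{global} conformal change. The presence of a cylindrical end with $\tR\le -c$ forces the Yamabe invariant of a sufficiently large compact piece $K'\supset K$ to be negative; equivalently the lowest Dirichlet eigenvalue of $-L_{\tg}$ on $K'$ is negative. The corresponding positive eigenfunction $u_0$ on $K'$ satisfies $L_{\tg}u_0>0$, and patching $\max\{Au_0,1\}$ (equal to $1$ outside $K'$) and mollifying yields a conformal factor $\tilde u$ with $L_{\tg}\tilde u\ge c'>0$ everywhere, hence $\hat R<0$ everywhere for $\hat g=\tilde u^{4/(n-2)}\tg$. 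After this step the hypotheses \eqref{19XII11.1} hold with $K=\emptyset$, and then small and large constants are immediately sub- and supersolutions. The key point you are missing is that one must use the global negativity of the Yamabe invariant (coming from the cylindrical end), not a local construction near the bad set.

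Your supersolution discussion for unbounded $\tsigma^2$ is also problematic: the function $(A(1+\tsigma^2))^{1/(\alpha+\gamma)}$ involves $\tsigma^2$, and its Laplacian brings in $\nabla\tsigma^2$ and $\Delta\tsigma^2$, on which there are no hypotheses; the same objection applies to your alternative conformal preconditioning. (To be fair, the paper's own constant supersolution $1/\eta$ also tacitly requires $\tsigma^2$ to be bounded, so this issue is shared.)
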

\begin{proof}
We begin by using a result of Aviles and McOwen \cite[Thm.\ A]{aviles:mcowen:noncompact:1988} that, given
the hypotheses here, there exists a
conformal deformation $\hat{g} = u^{\frac{4}{n-2}} \tg$ such that $ \hat{R} \leq -c < 0$ everywhere.
Since the first paragraph (p.~230 in~\cite{aviles:mcowen:noncompact:1988}) of their proof is not clear to us,
we provide an alternative argument for that step of their proof.

It is not hard to see,  by the assumption on the strict negativity of $\tR$ far out on the cylindrical end in
particular, that we have $Y(K, [\tg]) < 0$.
By Lemma~\ref{sgnYam}, we must also have $\lambda_0 < 0$,
where $\lambda_0$ is the lowest eigenvalue of $-L_{\tg}$ on $K$. If $u_0$ is the corresponding eigenfunction, then
$L_{\tg} u_0 = -\lambda_0 u_0$, and this is strictly positive in the interior of $K$.  Choose a large constant $A$ so
that $A u_0 > 1$ in some slightly smaller compact set $K'$ in the interior of $K$ but $A u_0 < 1$ near $\del K$.
Then set $u = \max\{ Au_0, 1\}$ in $K$ and $u = 1$ outside of $K$.  Thus $L_{\tg} u \geq c > 0$ weakly everywhere.
Since $u$ is smooth away from
the hypersurface $H$ where $Au_0 = 1$ (and by changing $A$ slightly we can always assume that $H$ is a smooth submanifold),
it is not hard to see that this inequality is true in a distributional sense as well. Hence we can choose a mollification $\tilde{u}$
of $u$ which is smooth and satisfies $L_{\tg} \tilde{u} \geq c/2$ everywhere.
Thus if $\hat{g} = \tilde{u}^{\frac{4}{n-2}} \tg$, then
\[
\hR = -\frac{1}{c(n)} \tilde{u}^{-\alpha} L_{\tg} \tilde{u}
\]
is everywhere negative and bounded between two negative constants.%

We replace $\tg$ by the metric $\hg$ and rename it $\tg$ again; we can thus assume
that \eq{19XII11.1} holds with $K=\emptyset$.

With this reduction, we can immediately find a solution to the Lichnerowicz equation. Indeed, the bounds \eq{19XII11.1}
 imply that the constant function $\eta $ is a subsolution and $1/\eta$ is a supersolution when $0<\eta \ll 1$.  The result
now follows from Proposition \ref{P13XIII0.1} in Appendix A.
\qed
\end{proof}

If we drop the hypothesis on the lower bound $\tR\ge -C$,  then we can proceed as in the later part
of the proof of \cite[Theorem A]{aviles:mcowen:noncompact:1988} to find a conformal factor $u$ which is bounded
away from $0$ but not necessarily bounded above, so that the corresponding metric is complete, but does not necessarily
have cylindrically bounded ends. It is probable that one can go on from this to find a solution of the Lichnerowicz
equation. This certainly requires more work and we did not attempt it.

Proposition \ref{P19XII0.1} is satisfactory in its generality, but has the defect that it gives very little
information about the behavior of solutions far out in the ends. For metrics which are
asymptotically cylindrical or asymptotically periodic, we can say much more.  We take this
up in the next two subsections.

\subsubsection{Conformally asymptotically cylindrical ends}
\label{ss2I1.5}
Let $(M, \tg)$ be a complete manifold with conformally asymptotically cylindrical ends. Thus, as in the
beginning of this section, $\tg = w^{\frac{4}{n-2}}g$, and all quantities considered here
decay to asymptotic limits, which are functions or metrics on $N$, at the rate $e^{-\nu x}$
along with at least two derivatives. We assume also that in each end,
\[
\ztR\le -c \  \mbox{and} \ \beta\ge c  \ \mbox{for some $c >0$}.
\]
We now sharpen Proposition~\ref{P19XII0.1} in the sense that if we select any positive solution $\zphi$ for the
reduced Lichnerowicz equation \eqref{redlich} on each end, then there exists a solution
$\tphi$ which converges to $\zphi$ at infinity.  Hence even if we start with an asymptotically cylindrical metric,
the solution metric is conformally asymptotically cylindrical. (Note too that there are many natural example of
initial data sets with conformally asymptotically cylindrical ends, for example appropriate slices of the extremal Kerr metrics near their event horizon, see Appendix~\ref{A5I1.2}.)

The existence of a unique limit function $\zphi$ which solves \eqref{redlich} on each asymptotic cross-section
$N^{n-1}$ is straightforward: with our hypotheses on $\zR$ and $\beta$, the constants $\eta$ and $1/\eta$,
with $\eta$ sufficiently large, are sub- and supersolutions of \eqref{redlich}, and uniqueness follows from
the maximum principle.

Choose a smooth positive function $\phi$ which converges (at the rate $e^{-\nu x})$) to the limit function
$\zphi$ on each end. We shall look for solutions in the form $\phi (1 + u)$ where $u \to 0$ at infinity.
As in \S 2, we use the conformal transformation properties of the Lichnerowicz equation to write the equation
as follows. Setting $\hg = \phi^{\frac{4}{n-2}} \tg$, then
\begin{equation}
 \label{pL}
L_\hg (1+u) = \beta (1+u)^\alpha - \hsigma^2 (1+u)^{-\gamma},\ \mbox{where}\ \hsigma^2 = \tsigma^2 \phi^{-\frac{4n}{n-2}}.
\end{equation}
Taking the limit of the conformal transformation rule $L_{\tg} \phi = \phi^\alpha L_{\hg} 1$ as $x \to \infty$ gives
\[
(\Delta_{\ztg} - c(n) \ztR) \zphi = \zphi^\alpha ( -c(n) \zhR),
\]
and combining this with \eqref{redlich} for $\zphi$ results in
\[
\beta \zphi^\alpha - \ztsigma^2 \zphi^{-\gamma} = \zphi^{\alpha} (-c(n) \zhR) \Longrightarrow
c(n) \zhR  + \zbeta - \zhsigma^2 = 0,
\]
where $\zhsigma^2$ is the limit of $\hsigma^2$.  (We use here that $\del_x \phi, \del_x^2 \phi = o(1)$.)
We conclude the decay rate
\begin{equation}
 s := c(n) \hR + \beta - \hsigma^2= \calO(e^{-\nu x})
 \;.
\label{rbs}
\end{equation}

In order to construct the desired supersolution in the asymptotic region, we
rewrite \eqref{pL} using this estimate, also subtracting $\beta \alpha u$ from each side. This yields
\begin{equation}
\begin{aligned}
\big(\Delta_{\hg} - & (\hsigma^2 + \frac{4}{n-1}\beta + \calO(e^{-\nu x})) \big) u = \\
& \beta\left( (1+u)^\alpha - 1 - \alpha u\right) + \hsigma^2 \left( 1 - (1+u)^{-\gamma} \right)+ s
 \;.
\end{aligned}
\label{pL2}
\end{equation}
For simplicity, denote by $h$ the term of order zero in the linear operator on the left, i.e.
\[
h = \hsigma^2 + \frac{4}{n-2}\beta + \calO(e^{-\nu x}).
\]
Note that $h \geq c > 0$ for $x$ sufficiently large.

The motivation for this rearrangement is that the function
\[
r(u) = \beta\left( (1+u)^\alpha - 1 - \alpha u\right) + \hsigma^2 \left( 1 - (1+u)^{-\gamma} \right)
\]
satisfies $r(0) = 0$  and
is positive for $u > 0$, as can be seen directly by differentiating.

To continue, observe that given $\e > 0$ satisfying  $4\epsilon^2 < c$, we can choose $X_\e$ sufficiently large that
\[
\hsigma^2 + \frac{4}{n-1} \beta + \calO(e^{-\nu x})\geq 4\e^2\ \ \mbox{for}\ x \geq X_\e.
\]
Now, for any $0 < \delta < \e$, $\Delta_\tg e^{-\delta x} \le 2\e^2 e^{-\delta x}$, also for $x \geq X_\e$, whence
$$
\left(\Delta_{\hg}  - h\right) e^{-\delta x}  \le -2\epsilon^2  e^{-\delta x}
$$
in $[X_\e,\infty) \times N$. Finally, assuming that $\delta < \nu$, choosing
$$
\hat C \ge \frac { \| e^{\nu x} s\|_{L^\infty}}{2\epsilon^2} e^{(\delta-\nu)X_\epsilon}
$$
and $x \geq X_\e$, and using the positivity of $r$, we obtain
\[
\left(\Delta_{\hg} - h \right)  \hat C e^{-\delta x}  \le s \leq s + r(\hat C e^{-\delta x}).
\]
Comparing with \eqref{pL2}, this says simply that $1+\hat C e^{-\delta x}$ is a supersolution.

We now show that $1 - \hat C e^{-\delta x}$ is a subsolution in the asymptotic region if $\delta$ is sufficiently small.
Using \eqref{rbs}, we first calculate that
\begin{equation}
\begin{split}
(\Delta_\hg-c(n) \hR)(1-  \hat C e^{-\delta x}) =  (\Delta_\hg+ \beta-\hsigma^2 +s)(1-  \hat C e^{-\delta x}) \\
= \big( -\hat C\Delta_\hg e^{-\delta x}   +s(1-  \hat C e^{-\delta x}) \big) + \big( \beta-\hsigma^2 )(1-  \hat C e^{-\delta x})\big).
\end{split}
\label{15XII11.1}
\end{equation}
Setting $y = 1 - \hat C e^{-\delta x}$, then  $y < 1$ for all $x \geq 0$ and $\hat C\ge 0$.
Fixing $\mu \in (0,1)$, $\delta \in (0,\nu)$ and any $\hat C > 0$, we can choose $X =
X(\hat C, \delta)$ so large that for all $x \geq X$, we have $y \in [\mu, 1)$, and moreover
$$
 \frac{y-y^\alpha}{1-y}\ge \mu' > 0
 \;,
$$
for some positive number $\mu'$ which depends only on $\mu$ and $\alpha$. This follows directly from the fact that the function on the left has a continuous
extension to the closed interval $[\mu,1]$ and is everywhere positive there.  This gives
\[
1 - \hat C e^{-\delta x} \geq \mu' \hat C e^{-\delta x} + (1 - \hat C e^{-\delta x})^\alpha.
\]
It is also clear that
\[
 -\hsigma^2 (1-  \hat C e^{-\delta x}) \ge  -\hsigma^2 (1-  \hat C e^{-\delta x})^{-\gamma}.
\]
Using these last two inequalities on the right side of \eqref{15XII11.1}, we see that
\[
L_{\hat g} (1 - \hat C e^{-\delta x}) \geq \beta (1 - \hat C e^{-\delta x})^\alpha - \hsigma^2 (1-\hat C e^{-\delta x})^{-\gamma} + A,
\]
where
\[
A :=  -\hat C \Delta_\hg e^{-\delta x} + s (1-  \hat C e^{-\delta x})
  + \beta \mu' \hat C e^{-\delta x}.
\]
The first term on the right here is $\calO(\delta^2 e^{-\delta x})$, while the second is bounded by $-\| e^{\nu x} s\|_{L^\infty} e^{ -\nu x}$.
Choosing $0 < \delta \ll \nu$, we certainly have that $A \geq 0$ for $x$ large enough.  This proves that
$1 - \hat C e^{-\delta x}$ is a subsolution sufficiently far out on the end.




We have now shown that, under the present hypotheses, we can construct sub- and supersolutions $1\pm\hat{C} e^{-\delta x}$ of
\eq{pL2} outside a sufficiently large compact set in $M$, with $\hat C$ as large as desired. To produce global
barriers, we must argue a bit further.  Recall that there is a conformally related metric
$\bar{g} =\zeta^{\frac{4}{n-2}} \tg$, with $\zeta\to1$ and $\partial \zeta,\partial^2 \zeta$ going to zero exponentially fast,
which has $\bar{R} \leq -c < 0$ on all of $M$. So if we
write the Lichnerowicz equation relative to this metric, then sufficiently small and large positive
constants $\eta$ and $\eta^{-1}$ are sub- and supersolutions. This means that
$\eta \zeta$ and $\eta^{-1} \zeta$ are global sub- and supersolutions for the equation relative to $\hg$,
and hence $\eta \phi \zeta$ and $\eta^{-1} \phi \zeta$ are global sub- and supersolutions for the equation relative to $\tg$.

All this leads us eventually to the global weak sub- and supersolutions
$$
\max \{ \phi( 1-\hat C  e^{-\delta x}) ,  \eta \phi \zeta \}   \ \mbox{and} \ \min \{\phi(1+\hat C  e^{-\delta x}), \eta^{-1} \phi\zeta \}
$$
for the original equation relative to $\tg$; the lower bound $\mu$ which appears in the argument above should be determined by
$\mu := \inf  \eta   \zeta >0$. Notice that these functions converge exponentially quickly to one another as $x \to \infty$,
which controls as well the asymptotics of the solution which lies between them.

We have now proved:
\begin{Theorem}
 \label{T11XII0.1}
Fix any number $c > 0$ and suppose that $\beta\ge c$. Let $\tg$ be a metric with a finite number of asymptotically
conformally cylindrical ends on $M^n$, $n\ge 3$, such that the limiting scalar curvature $\ztR =\lim_{x\to\infty}\tR $
is strictly negative in each end.  Assume also that $\tg$, $\tR$, $\tsigma$ and $ \beta$ all approach their
limits on each end at the rate $e^{-\nu x}$.  Then there exists a solution $\tphi $ of the Lichnerowicz
equation so that $\tphi$ converges (at some exponential rate) to a limiting function $\zphi$, which
is a solution of \eqref{redlich}. In particular, the solution metric $\tphi^{\frac{4}{n-2}}\tg$ remains
asymptotically conformally cylindrical.
\end{Theorem}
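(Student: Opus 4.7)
The plan is to use the monotone iteration scheme of Proposition~\ref{P13XIII0.1}, so the task reduces to producing a global pair of sub- and supersolutions which pinch together at infinity and thereby force the desired asymptotic behaviour.

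First I would construct, on each cylindrical cross-section $N_\ell$, the limit function $\zphi$ by solving the reduced Lichnerowicz equation \eqref{redlich}. Because $\zbeta\ge c>0$ and $\ztR\le -c<0$ in the limit, the constants $\eta$ and $1/\eta$ are sub- and supersolutions of \eqref{redlich} for $\eta$ sufficiently large, so existence follows again from the iteration scheme, and uniqueness of the positive solution follows from the strong maximum principle applied to the difference of two solutions (the nonlinearity $\beta t^\alpha - \zsigma^2 t^{-\gamma}$ is monotone increasing in $t>0$). Next I would pick a smooth positive function $\phi$ on $M$ which matches $\zphi$ on each end up to an $\calO(e^{-\nu x})$ error, and substitute $\tphi=\phi(1+u)$. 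Using the conformal covariance $L_{\tg}(\phi\hphi)=\phi^\alpha L_{\hg}\hphi$ with $\hg=\phi^{4/(n-2)}\tg$, the equation becomes \eqref{pL2}, in which the residual $s=c(n)\hR+\beta-\hsigma^2$ decays like $e^{-\nu x}$ by \eqref{rbs}.

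The main technical step, which is the heart of the matter, is to show that $1\pm\hat C e^{-\delta x}$ are, respectively, a super- and subsolution of \eqref{pL2} on each end for $\hat C$ large, $0<\delta<\nu$ sufficiently small, and $x$ beyond some $X_\e$. For the supersolution this follows by absorbing the nonlinear remainder $r(u)$, which is nonnegative for $u\ge 0$, into the lower bound on the zeroth-order coefficient $h\ge 4\e^2$; for the subsolution one uses the elementary inequality $(y-y^\alpha)/(1-y)\ge \mu'>0$ on an interval $[\mu,1)$ to recover enough positivity to dominate the error term $s$. These are exactly the computations carried out in \eqref{pL2}--\eqref{15XII11.1} above, and I would simply record them as the asymptotic barrier construction.

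The main obstacle, and the remaining step, is to glue these asymptotic barriers into \emph{global} ones on all of $M$. For this I would follow the strategy used in Proposition~\ref{P19XII0.1}: the Aviles--McOwen type conformal deformation yields a metric $\bar g=\zeta^{4/(n-2)}\tg$ with $\bar R\le -c<0$ everywhere and $\zeta\to 1$ at rate $e^{-\nu x}$, so that small and large constants $\eta,\eta^{-1}$ are sub- and supersolutions of the Lichnerowicz equation relative to $\bar g$. Translating back, $\eta\phi\zeta$ and $\eta^{-1}\phi\zeta$ are global weak sub/supersolutions relative to $\tg$. Taking
\[
\tphi_-=\max\{\phi(1-\hat C e^{-\delta x}),\,\eta\phi\zeta\},\qquad \tphi_+=\min\{\phi(1+\hat C e^{-\delta x}),\,\eta^{-1}\phi\zeta\},
\]
where $\eta$ is chosen so that $\mu:=\inf(\eta\zeta)$ is compatible with the constant appearing in the subsolution argument, and $\hat C$ is taken large enough that the local asymptotic barriers dominate the constant ones outside some compact set, gives a valid pair of global barriers with $\tphi_-\le \tphi_+$ and $\tphi_+-\tphi_-=\calO(e^{-\delta x})$ on each end. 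Proposition~\ref{P13XIII0.1} then produces a solution $\tphi$ sandwiched between them, and the squeeze automatically yields $\tphi=\phi(1+\calO(e^{-\delta x}))\to\zphi$ exponentially. The solution metric $\tphi^{4/(n-2)}\tg$ therefore remains asymptotically conformally cylindrical, which is the conclusion.
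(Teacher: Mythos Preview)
Your proposal is correct and follows essentially the same route as the paper's own argument: construct $\zphi$ on each cross-section via constant barriers, pass to $\hg=\phi^{4/(n-2)}\tg$ and exploit the decay \eqref{rbs} to build the asymptotic barriers $1\pm\hat C e^{-\delta x}$ exactly as in \eqref{pL2}--\eqref{15XII11.1}, then globalise by invoking the Aviles--McOwen deformation of Proposition~\ref{P19XII0.1} to obtain $\eta\phi\zeta$, $\eta^{-1}\phi\zeta$ and take the max/min. The only cosmetic difference is that the paper notes $\zeta\equiv 1$ outside a compact set (rather than merely $\zeta\to 1$ exponentially), which is what the construction in Proposition~\ref{P19XII0.1} actually gives.
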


\subsubsection{Asymptotically periodic ends}
 \label{ss2I1.1}
As described at the beginning of this section, the metric $\tg$ is said to be conformally asymptotically periodic
on an end $E_\ell$ if, on that end, $\tg = w^{\frac{4}{n-2}}g$ where $g$ differs from
a metric $\check{g}$ which has period $T$ on $\RR \times N$, where the difference is a tensor which has local
H\"older norm decaying like $e^{-\nu x}$:
\bel{2I1.1}
 h:= g - \check{g}  \ \, \mbox{satisfies}\
 \| e^{ \nu x} h\| _{2+\ell,\alpha} < \infty \ \mbox{for some}\ \nu > 0
 \;.
\ee
The conformal factor $w$ is assumed to decay to a smooth periodic function
at the same exponential rate.  We also suppose that both $\sigma$
and $\beta$ decay like $e^{-\nu x}$ to functions $\check{\sigma}$ and $\check{\beta}$
which both  have the same period $T$.

All such ends are cylindrically bounded, so Proposition~\ref{P19XII0.1}
already guarantees the existence of solutions $\tphi$ of the Lichnerowicz
equation in this context if we assume that the scalar curvature is
negative and bounded away from zero outside a compact set.

One special feature of this case is that we can exploit known results on compact
manifolds.  Thus suppose that $\tg$ is asymptotic to the periodic metric $\cg$.
This induces a metric, which we give the same name, on $S^1_T \times N$
(here $S^1_T = \RR / T \mathbb Z$).  We can invoke the complete existence
theory for the Yamabe problem on compact manifolds to choose a
smooth positive conformal factor $\lambda$ on $S^1 \times N$
such that $\lambda^{\frac{4}{n-2}} \cg$ has constant scalar curvature.
This metric lifts to a periodic metric on $\RR \times N$, and we can obviously
replace $\cg$ by this new conformally related metric so as to assume
that $\tg \sim \check{w}^{\frac{4}{n-2}} \cg$ where $\cg$ has constant scalar curvature.
Note that even if $\cg$ is asymptotically cylindrical, i.e.\
$\cg = dx^2 + \zg$, the modified metric may be only asymptotically
periodic.  The constraint Delaunay metrics in \S\ref{sS10XI.1} are an example of this.

We now show that if the ends of $(M,\tg)$ are all asymptotically
conformally periodic, then there is a solution of the Lichnerowicz
equation which has the same structure.   To prove this, let $\cphi\in
\calC^{2,\alpha} (S^1\times N)$ be the (strictly positive) solution
of the limiting equation
\bel{2I1.4}
L_{\cg} \cphi = - \csigma ^2 \cphi^{-\gamma} + \cbeta\cphi^{\alpha}.
\ee
Assuming that $\cR \leq -c < 0$, existence is easy: the subsolution
is a small positive constant $\eta$ and the supersolution is a large positive constant $\eta^{-1}$.

Modifying the metric as above, we now search for solutions of
the form $\tphi = \phi + \psi$, where $\psi$ decays like $e^{-\nu x}$.
It is then possible to follow the steps in the argument of \S~\ref{ss2I1.5}
almost exactly. This leads to the

\begin{Theorem}
 \label{T2I1.1}
Suppose that $\beta \ge c > 0$, and let $\tg$ be a metric with a finite number of conformally asymptotically
periodic ends on $M^n$, $n \geq 3$. Assume that $\tR \leq -c < 0$
outside a compact set. Then there exists a solution $\tphi $ of the Lichnerowicz equation
so that each end remains asymptotically conformally periodic.
\end{Theorem}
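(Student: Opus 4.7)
The plan is to mimic the argument in \S\ref{ss2I1.5}, exploiting the modification of the asymptotic metric that is already indicated in the paragraph preceding the statement. First I would use the Yamabe theorem on the compact manifold $S^1_T\times N$ to replace $\cg$ by a conformally equivalent periodic metric of constant scalar curvature; this reduces the problem to the situation where $\cR$ is a (negative) constant on each end, so that the reduced equation \eqref{2I1.4} on $S^1_T\times N$ is solvable by constant sub- and supersolutions $\eta,\eta^{-1}$ with $\eta$ small (using $\cbeta\ge c$ and $\cR\le -c<0$), giving a unique strictly positive $\cphi\in\calC^{2,\alpha}(S^1_T\times N)$. Lifting $\cphi$ to an $x$-periodic function on $\R\times N$ on each end produces the target profile.

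Next I would choose a smooth positive global reference function $\phi$ on $M$ which on each end converges at rate $e^{-\nu x}$ to the corresponding $\check\phi$, and rewrite the Lichnerowicz equation in the form
\[
L_{\hg}(1+u)=\beta(1+u)^\alpha-\hsigma^2(1+u)^{-\gamma},\qquad \hg=\phi^{\frac{4}{n-2}}\tg,\quad \hsigma^2=\tsigma^2\phi^{-\frac{4n}{n-2}},
\]
so that we seek $\tphi=\phi(1+u)$ with $u\to 0$. Taking the limit of the conformal transformation rule along the end and using that $\cphi$ solves \eqref{2I1.4} yields, exactly as in \eqref{rbs}, the inhomogeneous remainder
\[
s:=c(n)\hR+\beta-\hsigma^2=\calO(e^{-\nu x}),
\]
which is now periodic modulo an exponentially small error rather than $x$-independent. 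Subtracting $\alpha\beta u$ from both sides as in \eqref{pL2} gives
\[
(\Delta_{\hg}-h)u=r(u)+s,\qquad h=\hsigma^2+\tfrac{4}{n-1}\beta+\calO(e^{-\nu x}),\qquad r(u)\ge 0\text{ for }u\ge 0,
\]
with $h\ge c>0$ far out on each end because of the uniform bounds on $\beta$ and the sign of $\hsigma^2$.

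The key step, and where periodicity enters nontrivially, is the barrier construction. The coefficient $h$ is no longer asymptotically constant, only bounded below, but the pure exponential $e^{-\delta x}$ still satisfies $\Delta_{\hg}e^{-\delta x}\le 2\e^2 e^{-\delta x}$ on each end provided $\delta<\e$ and $x$ is large, because the metric coefficients of $\hg$ differ from those of a periodic metric by $\calO(e^{-\nu x})$ and the leading $\partial_x^2$ term dominates. Choosing first $\e$ so that $4\e^2<\inf h$ far out, then $\delta<\min(\e,\nu)$, and finally $\hat C$ large enough to dominate $\|e^{\nu x}s\|_{L^\infty}$, the computation of \S\ref{ss2I1.5} goes through verbatim to show that $1+\hat C e^{-\delta x}$ is a local supersolution and, with the same $\mu,\mu'$ estimate on the interval $[\mu,1)$, $1-\hat C e^{-\delta x}$ is a local subsolution on each end (for $x\ge X_\e$). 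To globalize, I would again invoke the Aviles--McOwen reduction to obtain $\bar g=\zeta^{\frac{4}{n-2}}\tg$ with $\bar R\le -c<0$ everywhere, noting that since the ends are cylindrically bounded and $\tR\le -c$ there, $\zeta$ can be chosen to converge to $1$ at the same exponential rate, hence $\eta\phi\zeta$ and $\eta^{-1}\phi\zeta$ are global weak sub- and supersolutions. Taking
\[
\tphi_-=\max\{\phi(1-\hat C e^{-\delta x}),\,\eta\phi\zeta\},\qquad \tphi_+=\min\{\phi(1+\hat C e^{-\delta x}),\,\eta^{-1}\phi\zeta\}
\]
with $\mu:=\inf(\eta\zeta)$ furnishes ordered global barriers that squeeze together at rate $e^{-\delta x}$, and Proposition~\ref{P13XIII0.1} produces a solution $\tphi$ between them. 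The resulting $\tphi/\phi\to 1$ exponentially, so $\tphi^{\frac{4}{n-2}}\tg$ is conformally asymptotically periodic on each end.

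The main obstacle I expect is the barrier computation on the periodic background: unlike the asymptotically cylindrical case, the zeroth-order coefficient $h$ oscillates in $x$, so one cannot replace it by a constant and must work with the uniform lower bound $h\ge c$ and the fact that $\Delta_{\hg}$ differs from $\partial_x^2+\Delta_{\zg}$ by exponentially small terms. Once one accepts that the pure exponential $e^{-\delta x}$ is still a good test function in this oscillatory setting, the rest of the proof is a direct transcription of \S\ref{ss2I1.5}.
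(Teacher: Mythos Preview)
Your proposal is correct and follows essentially the same route as the paper, which itself simply says that one can ``follow the steps in the argument of \S\ref{ss2I1.5} almost exactly''; you have supplied the details the paper omits, including the correct observation that in the periodic setting the zeroth-order coefficient $h$ oscillates but remains uniformly bounded below, so the exponential test function $e^{-\delta x}$ still works via the estimate $(\Delta_{\hg}-h)e^{-\delta x}=(-\delta\,\Delta_{\hg}x+\delta^2|dx|^2_{\hg}-h)e^{-\delta x}$ for $\delta$ small (cf.\ Appendix~\ref{A16XII11.1}, case~4). The preliminary Yamabe step on $S^1_T\times N$ is optional---the paper solves \eqref{2I1.4} directly with the periodic $\cR\le -c$ using constant barriers---but your use of it does no harm.
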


There is an obvious extension of Theorems~\ref{T11XII0.1} and \ref{T2I1.1} which
allows $(M, \tg)$ to have ends of both types. We leave details to the reader.

\subsection{$\tR\ge 0$}
 \label{ss26XI0.1}
We now consider metrics on a manifold with a finite number of conformally asymptotically
cylindrical or periodic ends, assuming that $\tR$ enjoys certain positivity properties.

\subsubsection{Yamabe positivity}
 \label{ss21XII11}
We start with a discussion of the Yamabe invariant in the present context. The results here are only intended
to provide an alternative perspective to the results of \S \ref{ss21XII11.2}, and the analysis and results
there are independent of the ones here.

Our initial result corresponds to the first step in the proof of Proposition~\ref{P19XII0.1}.  In that setting,
if $\tR \leq -c < 0$ outside a compact set, then without any further conditions it is possible to make an
initial conformal change to arrange that $\tR \leq -c < 0$ on all of $M$.  The analogue
of this result when $\tR \geq 0$ requires an extra hypothesis on the Yamabe invariant, as defined
in \eqref{25X0.1x}.

It is well known that if $M$ is compact, then $-\infty < Y(M,[\tg]) \leq Y(S^n)$, and that the infimum
of the functional $Q_{\tg}$ is always realized. For manifolds with exactly cylindrical ends, a parallel though
somewhat less complete theory is developed in \cite{Ak-Bot}. Suppose that the ends are modelled on the products
$(\RR^+ \times N_\ell, dx^2 + \zg_\ell)$. As before, we assume for simplicity that there is only
one end and drop the subscript $\ell$. Although \cite{Ak-Bot} assumes that the restriction of
$\tg$ on the end is exactly of product type, it is not hard to extend their results to the case
of asymptotically cylindrical metrics, see \cite{AkCarrMaz}. Furthermore, if $\tg$ is conformally asymptotically
cylindrical, so $\tg = w^{4/(n-2)}\check{g}$, where $\check{g}$ is asymptotically cylindrical, then it
is obvious that $Y(M, [\tg]) = Y(M, [\check{g}])$. Thus we may talk about the Yamabe invariant
in this slightly more general setting.

One of the key points in \cite{Ak-Bot} is that $Y(M, [\tg])$ is regulated by the lowest eigenvalue $\mu_0$
of the operator
\begin{equation}
-\calL_{\zg} := -\Delta_{\zg} + c(n) \zR,
\label{redcL}
\end{equation}
which we have already encountered as the linear part of the reduced Lichnerowicz operator \eqref{redlich}.
(Just as for that operator, this is \emph{not} the conformal Laplacian on $N$ because the constant $c(n)$ is
appropriate for an $n$-dimensional manifold, not an $(n-1)$-dimensional one.)
The first main result \cite[Lemma 2.9]{Ak-Bot} is that $\mu_0 \geq 0$ if and only if $Y(M, [\tg]) > -\infty$.
It is not hard to see that
\begin{equation}
Y(M,[\tg]) \leq Y( \RR \times N, [dx^2 + \zg]) \leq Y(S^n),
\label{modifiedAubin}
\end{equation}
with the second equality sharp except when $(M,\tg)$ is conformal to the sphere, with its standard metric,
punctured at a finite number of points. If $Y(M,[\tg]) < Y(\RR \times N, [dt^2 + \zg])$ and $\mu_0 > 0$,
then there exists a metric in $[\tg]$ which minimizes the Yamabe functional, but this metric is incomplete,
with isolated conic singularities and the minimizing function $u$ decays exponentially along the ends.
(If $\mu_0 = 0$ the existence theory is more complicated, and the solutions when they exist are complete but of cusp type.)
The paper \cite{AkCarrMaz} contains a substantial generalization of their result and a closer investigation of
the asymptotics of the solution, even in the cylindrical setting.

Before stating our first result, we recall an alternate characterization of the positivity of the Yamabe invariant which
is needed in the argument.

\begin{lemma}
\label{Yamspec}
Let $(M,\tg)$ be a complete manifold with ends which are conformally asymptotically cylindrical or periodic.
Then the condition $Y(M, [\tg]) > 0$ is equivalent to the pair of conditions
\begin{eqnarray}
\mathrm{i)} &\mbox{the $L^2$ spectrum of $- L_{\tg}$ lies in $[0,\infty)$, and}
 \label{specconds1} \\
\mathrm{ii)} &\mbox{\ $0$ does not lie in the point spectrum of $-L_{\tg}$. }
\label{specconds2}
\end{eqnarray}
\end{lemma}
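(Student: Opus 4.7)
The equivalence reduces via the identity \eqref{RQ} to two separate equivalences. Since the factors relating $Q_{\tg}$ and $\calR_{\tg}$ in \eqref{RQ} are positive, condition (i)---the variational characterization of the $L^2$ spectrum of the essentially self-adjoint operator $-L_{\tg}$ (essential self-adjointness following from completeness of $(M,\tg)$ and boundedness of $\tR$) being contained in $[0,\infty)$---is exactly equivalent to $Y(M,[\tg]) \geq 0$. Thus the real content of the lemma is that, under this common assumption, $Y > 0$ holds if and only if $0$ is not an $L^2$ eigenvalue of $-L_{\tg}$.

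For the direction $\neg$(ii) $\Rightarrow Y \leq 0$, suppose $u \in L^2(M) \setminus \{0\}$ satisfies $-L_{\tg} u = 0$. Elliptic regularity together with the cylindrically bounded geometry (Moser iteration on unit balls of uniformly controlled geometry) upgrades this to $u \in C^\infty \cap L^\infty$, and the interpolation $\|u\|_{L^p}^p \leq \|u\|_\infty^{p-2}\|u\|_2^2$ places $u \in L^{2n/(n-2)}$. Choose cutoffs $\chi_R$ (depending on the end coordinate $x$) equal to $1$ on $\{|x|\leq R\}$ and vanishing for $\{|x| \geq 2R\}$ with $\|\nabla \chi_R\|_\infty \to 0$. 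Testing $-L_{\tg} u = 0$ against $\chi_R^2 u$ and integrating by parts yields
\[
\int_M \bigl(|\nabla(\chi_R u)|^2 + c(n)\tR(\chi_R u)^2\bigr)\,dV_{\tg} = \int_M u^2 |\nabla \chi_R|^2\,dV_{\tg} \longrightarrow 0,
\]
while $\|\chi_R u\|_{L^{2n/(n-2)}} \to \|u\|_{L^{2n/(n-2)}} > 0$ by dominated convergence. Hence $Q_{\tg}(\chi_R u) \to 0$, forcing $Y \leq 0$.

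For the converse, assume (i) and (ii) and argue by contradiction: if $Y = 0$, I aim to produce a non-zero $L^2$ null vector of $-L_{\tg}$. Take a non-negative minimizing sequence $u_k \in C_0^\infty(M)$ with $\|u_k\|_{L^{2n/(n-2)}} = 1$ and $\int(|\nabla u_k|^2 + c(n)\tR u_k^2) \to 0$. Boundedness of $\tR$ controls $\|\nabla u_k\|_{L^2}$ in terms of $\|u_k\|_{L^2}$, and a concentration--compactness analysis of the densities $u_k^{2n/(n-2)}$ admits three alternatives. Concentration at an interior point is excluded by $Y(M,[\tg]) = 0 < Y(S^n)$ via \eqref{modifiedAubin}, since a rescaled limit would be a non-zero Sobolev extremal on $\RR^n$ with energy $\geq Y(S^n)$. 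Escape along a cylindrical or periodic end is controlled by the asymptotic model operator $-\calL_{\zg}$ of \eqref{redcL}: the translation invariance (resp.\ periodicity) combined with (i) on the model, together with the Akutagawa--Botvinnik bound on the cross-sectional spectrum, prevents mass-escaping subsequences from having energy tending to zero. In the remaining case $u_k \rightharpoonup u \neq 0$ weakly in $H^1_{loc}$; lower semi-continuity of energy and (i) together give $\int(|\nabla u|^2 + c(n)\tR u^2) = 0$, so $-L_{\tg} u = 0$ distributionally, and the spectral gap of $-\calL_{\zg}$ forces exponential decay of $u$ along each end, placing $u \in L^2(M)$ and contradicting (ii).

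The principal obstacle lies in the converse: controlling the noncompactness of the minimizing problem, in particular ruling out escape to infinity and then upgrading a weak $L^{2n/(n-2)}$-limit to a genuine element of $L^2(M)$. Both steps depend on the asymptotic spectral structure encoded by $-\calL_{\zg}$ and on exponential decay along the ends. The easier direction, by contrast, is essentially a cutoff computation exploiting the interpolation $L^2 \cap L^\infty \subset L^{2n/(n-2)}$ afforded by the bounded geometry.
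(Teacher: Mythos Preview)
Your reduction in the first paragraph and the forward direction ($\neg$(ii) $\Rightarrow Y\le 0$) are correct and close to the paper's argument. The paper reaches $u_0\in L^{2n/(n-2)}$ via the Sobolev inequality on cylindrically bounded manifolds rather than Moser iteration plus $L^2\cap L^\infty$ interpolation, but the cutoff computation is the same.

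The converse, however, is only a sketch with real gaps. You normalize in $L^{2n/(n-2)}$ and then assert that boundedness of $\tR$ ``controls $\|\nabla u_k\|_{L^2}$ in terms of $\|u_k\|_{L^2}$'', but you never bound $\|u_k\|_{L^2}$, and on a cylinder this norm can diverge along a minimizing sequence for $Q_{\tg}$; without an $H^1$ bound the weak-compactness and dichotomy analysis is not justified. Ruling out escape to infinity requires the \emph{strict} inequality $Y(\RR\times N,[dx^2+\zg])>0$; your appeal to ``(i) on the model'' and ``the Akutagawa--Botvinnik bound'' does not supply this, and it is a separate ingredient (in the paper it comes from the Sobolev inequality on $M$ via \cite[Prop.~1.1]{AkCarrMaz}). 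Finally, even given a nonzero weak $H^1_{\mathrm{loc}}$ limit with vanishing energy, deducing $-L_{\tg}u=0$ and then exponential decay presupposes both that $u$ lies in the form domain and that $\mu_0>0$, which again need the missing inputs.

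The paper's proof avoids these difficulties by citing the Akutagawa--Botvinnik theory directly: the Sobolev inequality gives $Y(\RR\times N)>0$; then $Y(M)=0<Y(\RR\times N)$ triggers \cite[Thm.~A]{Ak-Bot}, producing an actual minimizer $u$ with $Q_{\tg}(u)=0$; \cite[Thm.~B]{Ak-Bot} gives exponential decay, hence $u\in L^2$; and the Euler--Lagrange equation at the zero level is exactly $-L_{\tg}u=0$. Your outline is essentially the skeleton of the proof of that Theorem~A, so you should either cite those results, or else renormalize in $L^2$ and argue spectrally: establish $\mu_0>0$, identify the essential spectrum of $-L_{\tg}$ as $[\mu_0,\infty)$, and conclude that $Y=0$ forces $0$ to be an isolated $L^2$ eigenvalue.
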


Before starting the proof, we remark that unlike Lemma~\ref{sgnYam}, where we only considered the signs of
these invariants on a compact manifold with boundary, here we consider their behaviour
on the noncompact manifold $M$.

\medskip

\begin{proof}
Denote by $\lambda_0(K)$ the lowest eigenvalue of $-L_{\tg}$ with Dirichlet boundary conditions
on any smoothly bounded compact subdomain of $M$, and $Y(K, [\tg])$ the Yamabe invariant
of any such subdomain. By Lemma~\ref{L19XII11.1}, these have the same sign for any $K$.
By (strict) domain monotonicity, if $\lambda_0(M) \geq 0$ then $\lambda_0(K) > 0$ for any $K$,
hence $Y(K, [\tg]) > 0$ and finally $Y(M, [\tg]) \geq 0$. Similar reasoning shows that
if $Y(M, [\tg]) > 0$ then $\lambda_0(M) \geq 0$.

It suffices now to prove that $0$ is an $L^2$ eigenvalue if and only if $Y(M, [\tg]) = 0$.
Suppose first that $0$ is an $L^2$ eigenvalue, and let $u_0$ be the corresponding
eigenfunction. Elliptic regularity implies that $u_0$ is as differentiable as the metric allows, and  completeness of $M$ together with the usual sequence of cutoffs  shows that $u_0 \in H^1(M)$, with $u_0 > 0$. Since the ends of $(M,\tg)$ are cylindrically
bounded, we claim that $(M, \tg)$ admits a Sobolev inequality (we substantiate this momentarily),
and hence $u_0 \in L^{2n/(n-2)}$ as well. It is then straightforward to check that cutting off $u_0$ to
be supported in larger and larger sets gives a minimizing sequence for $Q_{\tg}$, so that $Y(M, [\tg]) = 0$.
In fact, $u_0$ itself is a minimizer for this functional and $Q_{\tg}(u_0) =0$.

Regarding the claim about the existence of a Sobolev inequality, we recall the following standard facts
(discussed in more detail in \cite{AkCarrMaz}).  First, the existence of a Sobolev inequality is a
quasi-isometry invariant, so we may as well assume that $(M, \tg)$ has exact cylindrical ends.
Next, such an inequality is localizable, so it is enough to show that any cylinder $\RR \times N$,
where $N$ is compact, admits a Sobolev inequality. Finally, if a Sobolev inequality holds
for $N_1$ and $N_2$, then it holds for the Riemannian product $N_1 \times N_2$. Thus
since any compact manifold $N$ admits a Sobolev inequality, as does the real line, we see
that the same is true for the cylinder, and hence the manifold $M$.

Suppose conversely that $Y(M, [\tg]) = 0$. We now quote \cite[Proposition 1.1]{AkCarrMaz},
which guarantees in this setting that $Y(\RR \times N, [dx^2 + \zg]) > 0$ because $(M, [\tg])$
admits a Sobolev inequality.  But now we may apply \cite[Theorem A]{Ak-Bot}, which asserts
that since $Y(M, [\tg]) < Y( \RR \times N, [dx^2 + \zg])$, there exists a minimizer $u$ for
$Q_{\tg}$, so that $Q_{\tg}(u) = 0$.  Theorem B of that same paper asserts that $u$ decays
exponentially, and thus lies in $L^2$. Since the Euler-Lagrange equation for $Q_{\tg}$ for this function $u$ is
simply $-L_{\tg} u = 0$, we see that $0$ is an $L^2$ eigenvalue.
\qed
\end{proof}

\begin{proposition}
Suppose that $(M,\tg)$ has conformally asymptotically cylindrical or periodic ends, and that $Y(M, [\tg]) > 0$.
Then there exists $u \in \calC^\infty(M)$ with $u > 0$ on all of $M$ and $u \to u_0 > 0$ as $x \to \infty$,
where $u_0 \in \calC^\infty(N)$ in the cylindrical case and $u_0 \in \calC^\infty(S^1 \times N)$ in the periodic case,
such that $\check{g} = u^{4/(n-2)}\tg$ has  $\check{R} > 0$ everywhere.
\label{Rg0}
\end{proposition}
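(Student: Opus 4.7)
Since $\check R = -c(n)^{-1} u^{-\alpha} L_\tg u$ for the conformal change $\check g = u^{4/(n-2)}\tg$, it suffices to construct a smooth $u > 0$ on $M$ with $u \to u_0 > 0$ on each end at rate $e^{-\nu x}$ and $-L_\tg u > 0$ pointwise. I would carry this out in three steps.

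The key preparatory step is to build, on the compact asymptotic link, a smooth $u_0 > 0$ satisfying $-\calL u_0 \geq \kappa u_0 > 0$ for the reduced operator $\calL = \Delta - c(n) R$ of \eqref{redcL}. In the periodic case the link is the compact $n$-manifold $(S^1 \times N, \cg)$; restricting test functions in \eqref{25X0.1x} to a single period gives $Y(M,[\tg]) \leq Y(S^1 \times N, [\cg])$, so $Y(S^1 \times N, [\cg]) > 0$, and the classical solution of the Yamabe problem on this compact manifold produces a positive conformal factor $u_0$ making the rescaled metric have constant positive scalar curvature, which is equivalent to $-\calL u_0 > 0$. In the cylindrical case the link is the $(n-1)$-manifold $(N,\zg)$, and I combine \eqref{modifiedAubin} with a short test-function estimate on the infinite cylinder (using cutoffs of the form $\chi_R(x)\varphi_0(y)$, where $\varphi_0$ is the principal positive eigenfunction of $-\calL_\zg$) to show that $Y(M,[\tg]) > 0$ forces the principal eigenvalue $\mu_0$ of $-\calL_\zg$ to be strictly positive; I then take $u_0 := \varphi_0$.

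Next I pull $u_0$ back via projection on each end and extend smoothly through the compact body of $M$ to a globally defined positive function $v$; by the exponential decay of $\tg$ to the asymptotic model, $-L_\tg v \geq \tfrac{\kappa}{2} v > 0$ sufficiently far out, so $-L_\tg v \leq 0$ can fail only inside some compact set. I enclose this set in a compact $K \subset M$ with smooth boundary; by Lemma~\ref{L19XII11.1}, the first Dirichlet eigenvalue $\lambda_0(K)$ of $-L_\tg$ on $K$ is strictly positive, and the corresponding positive eigenfunction $\psi_K$ satisfies $-L_\tg \psi_K = \lambda_0(K)\psi_K > 0$ inside $K$. Then, exactly as in the proof of Proposition~\ref{P19XII0.1}, I choose $A$ large enough that $A\psi_K > v$ on a slightly smaller compact set in the interior of $K$ while $A\psi_K < v$ near $\partial K$, set $u := \max\{A\psi_K, v\}$ on $K$ and $u := v$ off $K$, and mollify near the gluing hypersurface to produce a smooth $\tilde u > 0$ satisfying $-L_\tg \tilde u > 0$ everywhere and $\tilde u \to u_0$ on each end.

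The genuinely nontrivial ingredient is the construction of $u_0$ on the link. The periodic case is clean because the full Yamabe theory on the compact $n$-manifold $S^1 \times N$ is available. The cylindrical case is more delicate because $(N,\zg)$ has the `wrong' dimension $n-1$, so $-\calL_\zg$ is not the conformal Laplacian of $N$; the crux is to use \eqref{modifiedAubin} to transfer the positivity of $Y(M,[\tg])$ into the strict positivity of the spectral gap $\mu_0$ of the link operator, after which the remaining steps are a straightforward adaptation of the gluing argument in Proposition~\ref{P19XII0.1}.
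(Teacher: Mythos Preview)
Your first two steps are essentially those of the paper: use the positivity of the bottom eigenvalue $\mu_0$ of the reduced operator $-\calL_{\zg}$ (deduced from $Y(M,[\tg])>0$ via \eqref{modifiedAubin} and \cite{Ak-Bot}) to build a positive $u_0$ on the link, extend it to a global $v$, and observe that $-L_\tg v>0$ outside a compact set.  So far so good.

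The gap is in your third step.  You want $\check R>0$, i.e.\ $L_\tg u<0$, which is the \emph{supersolution} inequality for $\Delta_\tg-c(n)\tR$.  The construction in Proposition~\ref{P19XII0.1} works because there one seeks $L_\tg u\ge c>0$, a \emph{subsolution} inequality, and the pointwise maximum of two subsolutions is again a (weak) subsolution: the convex kink along the crossing hypersurface contributes a nonnegative singular measure to $\Delta u$, which only helps.  In your situation the sign is reversed.  If $L_\tg(A\psi_K)<0$ and $L_\tg v<0$ near the crossing set $H=\{A\psi_K=v\}$, then $L_\tg\max\{A\psi_K,v\}$ picks up a \emph{positive} surface measure on $H$, and after mollification this becomes a pointwise contribution of order $\epsilon^{-1}|\nabla(A\psi_K-v)|$ to $\Delta\tilde u$ in an $\epsilon$-neighbourhood of $H$.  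Hence $L_\tg\tilde u$ is large and positive near $H$, not negative.  Replacing $\max$ by $\min$ does preserve the supersolution property but then $\min\{A\psi_K,v\}$ vanishes on $\partial K$ (since $\psi_K$ does), so you lose positivity.  There is no elementary patch for this; the asymmetry between the two signs of the Yamabe invariant is exactly why the paper treats this case differently.

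The paper's substitute for your step 3 is to solve a \emph{linear} equation.  After the preliminary conformal change making $\tR\ge c>0$ on the ends, one writes $u=1+v$ and solves $L_\tg v=\tilde f$ with $\tilde f$ compactly supported; global barriers $\pm\tilde\phi$ for this linear problem come either from the $L^2$ ground state of $-L_\tg$ (if the bottom of the spectrum is an isolated eigenvalue) or, in general, from Sullivan's theorem producing a positive $\phi$ with $L_\tg\phi=-\lambda\phi$ for $0<\lambda<\inf\mathrm{spec}(-L_\tg)$, patched with $e^{-\epsilon x}$ on the ends via a $\min$.  The resulting $u=1+v$ satisfies $L_\tg u=f<0$ by construction, and strict positivity of $u$ on the compact core is obtained a posteriori by a maximum-principle argument using the Dirichlet ground state on $K$ --- not by gluing.
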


\proof   We proceed by modifying the metric in two steps: in the first, we change conformally to a metric which has
strictly positive scalar curvature outside a compact set, and in the second we arrange that the scalar curvature
becomes positive everywhere. As noted earlier, we may as well replace the metric $\tg$ by the conformally
related one $\check{g}$ which is asymptotic to $dx^2 + \zg$ in the asymptotically cylindrical setting and
to $\zg$ in the asymptotically periodic setting. We discuss the former case first.

For the first step, by our hypothesis and \eqref{modifiedAubin}, we see that $Y( \RR \times N, [dx^2 + \zg]) > 0$.
By \cite[Lemma 2.9]{Ak-Bot}, this implies that $\mu_0 > 0$, where as above, $\mu_0$ is the lowest
eigenvalue of $-\calL_{\zg}$. Let $\psi_0$ be the ground state eigenfunction for $\mathcal L_{\zg}$, so that
$\mathcal L_{\zg} \psi_0 = -\mu_0 \psi_0$ and $\psi_0 > 0$ on $N$.

Now choose a function $\psi$ which is strictly positive on all of $M$ and which equals $\psi_0$ far out
on each end.  Define $\overline{g} = \psi^{\frac{4}{n-2}} \check{g}$. Then
\[
L_{\check{g}} \psi = - c(n) \overline{R} \psi^{\alpha}.
\]
Comparing this with the fact that $L_{\check{g}} \psi \to \mathcal L_{\zg} \psi_0 = -\mu_0 \psi_0$
as $x \to \infty$, we see that $\overline{R} \geq c > 0$ outside a compact set.

For simplicity, rename the new metric as $\tg$ again, so that we now prove our statement assuming
that $\tR \geq c > 0$ far out on the ends.

For the next step, choose a smooth function $f$ which equals $- c(n) \tR$ outside a large compact set and
such that $f \leq -c < 0$ everywhere. We claim that there is a unique bounded solution to the equation
$L_\tg u = f$.  Setting $u = 1 + v$, we can rewrite this equation as
\[
L_{\tg} v = f + c(n) \tR := \tilde{f}.
\]
By construction $\tilde{f}$ is compactly supported.

We construct barrier functions for this equation.  By Lemma~\ref{Yamspec},
the spectrum of $-L_{\tg}$ is nonnegative and $0$ is not an $L^2$ eigenvalue.
Since we have arranged that $\tR \geq c > 0$
on the ends, the continuous spectrum of $-L_{\tg}$ must lie in a ray $[c', \infty)$ for some $c' > 0$, hence the interval
$(0,c')$ contains only isolated $L^2$ eigenvalues of finite rank. Putting these facts together, we see that $0$ does
not lie in the closure of the spectrum.

To proceed, assume first that the bottom of the spectrum of $-L_{\tg}$ is an isolated $L^2$ eigenvalue $\lambda_0 > 0$.
Thus $\lambda_0$ is necessarily simple and there exists a strictly positive $L^2$ eigenfunction $\phi_0$ on $M$ such
that $-L_{\tg} \phi_0 = \lambda_0 \phi_0$.   Now, since $\tilde{f}$ is compactly supported and $\phi_0 > 0$
everywhere, we can choose $A \gg 0$ so that
\[
L_{\tg} A \phi_0 = - A \lambda_0 \phi_0 \leq \tilde{f}, \quad \mbox{and}\quad L_{\tg} (-A \phi_0) = A \lambda_0 \phi_0
\geq \tilde{f}.
\]
By Proposition~\ref{P8XII11.1} in Appendix~\ref{A16XII11.1}, this gives a solution $v$ to $L_{\tg} v = \tilde{f}$
with $-A \phi_0 \leq v \leq A \phi_0$. Since $\tg$ is conformally asymptotically cylindrical or periodic, it
is known that $\phi_0$ decays exponentially (see \cite{AkCarrMaz} for the cylindrical case and
\cite{MPU1} for the periodic case).

If there is no $L^2$ eigenvalue below the continuous spectrum, we can still obtain sub- and supersolutions as
follows.  Fix any $0 < \lambda < \inf \mbox{spec}\,(-L_{\tg})$. A theorem due to Sullivan \cite[Theorem 2.1]{Sullivan2}
gives the existence of a strictly positive function $\phi$ such that $L_{\tg} \phi = -\lambda \phi$. Strictly
speaking, Sullivan's theorem is only stated for the Laplace operator itself rather than for an operator
of the form $\Delta_{\tg} - c(n) \tR$. However,  inspection of the proof \cite[p.~337]{Sullivan2} shows
that the proof adapts immediately to this slightly more general setting since it relies only on the Harnack inequality, the
existence of which for $L_{\tg}$ is classical \cite{GT}.

Choose a large compact set $K \subset M$ so that $\tR \geq c > 0$ outside $K$. Then
$L_{\tg} e^{-\epsilon x} \leq -c' e^{-\epsilon x}$ outside of $K$.  Choose a slightly smaller compact set
$K' \subset K$ and a constant $\eta > 0$ so that $\eta \phi > e^{-\epsilon x}$ on $\del K'$ but
$\eta \phi < e^{-\epsilon x}$ on $\del K$, and then define $\tilde{\phi} = \min\{ \eta \phi, e^{-\epsilon x}\}$
in $K \setminus K'$, $\tilde{\phi} = \eta \phi$ in $K'$ and $\tilde{\phi} = e^{-\epsilon x}$ outside of $K$.
This is the minimum of two supersolutions in the overlap region $K \setminus K'$ and equal to a smooth
supersolution outside of the overlap, so is a global weak supersolution (compare Appendix~\ref{A16XII11.1}). It is also strictly positive and uniformly bounded.
Thus, as before, sufficiently large multiples of $-\tilde{\phi}$ and $\tilde{\phi}$ serve as barriers to find a solution
$v$ which, as before, decays exponentially.

To conclude the argument, the function $u = 1 + v$ solves $L_{\tg} u = f < 0$.  We know that $u > 0$ outside
a large compact set. To show that $u > 0$ everywhere, choose a
smoothly bounded compact set $K$ so that $\tR > 0$ outside $K$, and let $\lambda_0(K)$ be the lowest eigenvalue
of $-L_{\tg}$ with Dirichlet boundary conditions on $K$. If $K$ is sufficiently large, $\lambda_0(K) > 0$
and the associated eigenfunction $\phi_0$ is strictly positive.  A brief calculation shows that
\[
\Delta_{\tg} \frac{u}{\phi_0} + 2 \frac{\nabla \phi_0}{\phi_0} \cdot \nabla (\frac{u}{\phi_0}) -
\lambda_0(K) \frac{u}{\phi_0} = \frac{f}{\phi_0} < 0\;.
\]
This equation shows that $u/\phi_0$ cannot attain a nonpositive minimum at any point $q \in K$,
and since $u/\phi_0 \to +\infty$ at $\partial K$, we conclude that $u > 0$ everywhere.

The modifications of this argument to the conformally asymptotically periodic setting are straightforward
and left to the reader.
\qed

\subsubsection{The Lichnerowicz equation}
 \label{ss21XII11.2}

Let us now return to the main problem.  We assume that $(M, \tg)$ has conformally asymptotically cylindrical
or periodic ends, and $0 \leq \beta \leq C$. We also assume that one of the following two hypotheses holds:
\begin{itemize}
\item[a)] $Y(M, [\tg]) > 0$, so by Proposition~\ref{Rg0}, we may as well assume that $0 < c \leq \tR \leq C$
on all of $M$, or
\item[b)] $0\leq \tR \leq C$ on all of $M$ and $\tR + \beta \geq c > 0$ outside some compact set.
\end{itemize}

We shall use an argument of Maxwell~\cite{MaxwellNonCMC} (see \cite{Hebeyslides} for a similar idea) which
we explain first in the following setting. Let $h$, $a$, and $f$ be smooth functions on a compact Riemannian
manifold $(N,\zg)$, with $f\ge 0$, $a \ge 0$ and $f+h\geq c > 0$. Consider the equation
\begin{equation}
 \label{15X0.1} \Delta_{\zg} u - hu = fu^{\alpha} - a u^{-\gamma}
\end{equation}
(the specific values of $\alpha>1$ and $\gamma>0$ are unimportant here). We also require that $a \not \equiv 0$.

By the strict positivity of $h+f$ (and the compactness of $N$), there exists a unique $u_{1 }$ such that
\bel{6XII0.1}
 \Delta_\zg u_{1 } - (h+ f) u_{1 } = -a
 \;.
\ee
This function is strictly positive by the maximum principle.  If $t > 0$ is sufficiently small, then
$u_t = tu_{1 }$ is a subsolution of \eq{15X0.1}: indeed, $ta \le   a t^{-\gamma}u^{-\gamma}_{1}$ and
$ f tu_1 \ge ft^\alpha u^{\alpha}_{1}$ for $t \ll 1$, thus
\bel{8I1.1}
 \Delta_\zg u_{t} - hu_{t} = -ta + t  u_1 f \ge  - a t^{-\gamma}u^{-\gamma}_{1}
 +ft^\alpha u^{\alpha}_{1}
 \;.
\ee
On the other hand, a large positive constant provides a supersolution. Hence \eqref{15X0.1} has
a solution.

In our particular setting, if $\zbeta\ge 0$, $\ztR \geq 0$ and $\zbeta + \ztR \geq c > 0$, and
in addition, $\ztsigma^ 2\not \equiv 0$, then we immediately obtain a solution $\zphi$ of the
limiting equation \eqref{redlich} by this argument.

Now we adapt this same construction to a manifold $(M,\tg)$ with conformally asymptotically
cylindrical ends. The analogue of \eq{6XII0.1} is now
\bel{6XII0.2}
\Delta_\tg u_{1 } - (c(n) \tR + \beta) u_{1 } = - \tsigma^2
\;.
\ee
We assume that $\tR$, $\beta$ and $\tsigma^2$ converge to their asymptotic limits $\ztR$, $\zbeta$ and $\ztsigma^2 \not\equiv 0$
at an exponential rate $e^{-\nu x}$ in the sense that
$\tR - \ztR \in e^{-\nu x} \mathcal C^{0,\mu}(M)$  for some $\mu \in (0,1)$, and similarly for the other two functions.
Here $\calC^{0,\mu}(M)$ consists of all functions $u \in \calC^{0,\mu}_{\mathrm{loc}}(M)$ such that
\[
\sup_{X \geq 0}  ||u||_{0,\mu, [X, X+1] \times N} < \infty,
\]
where $|| \cdot ||_{0,\mu, [X, X+1]\times N}$ denotes the H\"older norm on the finite cylindrical section
where $X \leq x \leq X+1$.

As above, fix a solution $\zu$ of the limiting equation \eqref{redlich}.
In the asymptotically periodic case we make the obvious corresponding hypotheses, with $\ztR$, $\zbeta$
and $\ztsigma^2$ periodic, all with the same period as the metric, and we can then choose a periodic solution $\zu$.

Choose a function $\zu_1$ which coincides with $\zu$ far out on each end. We seek a solution of
\eq{6XII0.2} of the form $u_1 = \zu_1 +v$. Thus $v$ must satisfy
\bel{6XII0.5}
\Delta_\tg v - (c(n) \tR + \beta) v = - \tsigma^2  -  (\Delta_\tg \zu_1  - (c(n) \tR + \beta) \zu_1 )
 \;.
\ee
One readily checks that in either the asymptotically conformally cylindrical or periodic
settings, the right-hand side lies in $e^{-\nu x} \mathcal C^{0,\mu}(M)$.  Using the functions
\[
\phi = \min\{1, Ae^{- \nu' x}\}
\]
and $-\phi$ as super- and subsolutions, where $0 < \nu' < \nu$ and $A \gg 0$, we obtain a bounded
solution of this equation. Since $u_1 = \zu_1 + v  \to \zu > 0$ as $x \to \infty$, it must be strictly positive
outside a large compact set; the maximum principle applied to \eq{6XII0.2} gives $u_1 \geq c > 0$ everywhere.

Finally, the same calculation \eq{8I1.1} shows that $u_-:=tu_1$ is a strictly
positive subsolution of \eq{6XII0.2} when $0 < t \ll 1$, while a large constant
$u_+ := C$ provides a supersolution.

Calculations similar to  those of Section~\ref{ss2I1.5} (in fact simpler because for
$\tR \geq c > 0$  no careful study of the right hand side is needed)
%
show that $\zu\pm C e^{-\delta x}$ are also sub- and supersolutions
for a sufficiently small $\delta>0$. Hence
\bel{15II11.10}
 \tphi_-=\max( \zu-C e^{-\delta x}, u_-) \ \mbox{and} \
 \tphi_+=\min( \zu+C e^{-\delta x}, u_+)
\ee
are weak sub- and supersolutions of the Lichnerowicz equation.

Altogether, we have proved the
\begin{Theorem}
 \label{T8I1.1}
Let $(M,\tg)$ have a finite number of conformally asymptotically cylindrical and
periodic ends. Suppose that either of the hypotheses a) or b) above hold,
and
$$
\beta -  \zbeta  = \calO(e^{-\nu x})
  \;, \qquad \tsigma^2 -  \ztsigma^2  = \calO(e^{-\nu x})
  \;, \qquad \ztsigma^2 \not \equiv 0\;,
$$
for some functions  $\ztsigma^2>0$, $\zbeta$ on $N$ in the
asymptotically cylindrical ends, and for some functions
$\ztsigma^2$, $\zbeta$ on $S^1 \times N$ in the
asymptotically periodic ends. Then there exists a solution
$\tphi $ of the Lichnerowicz equation so that each end of
cylindrical type  remains of the same type for the metric
$\tphi^{\frac{4}{n-2}}\tg$.
\end{Theorem}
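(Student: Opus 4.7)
The plan is to closely follow the strategy indicated in the text just before the statement, which is a cylindrical/periodic analogue of Maxwell's argument, combined with the asymptotic sub/supersolution construction already developed in \S\ref{ss2I1.5}. Under hypothesis (a), I would first invoke Proposition~\ref{Rg0} to conformally change the metric so that $0 < c \leq \tR \leq C$ everywhere on $M$; under hypothesis (b) this reduction is unnecessary. After this reduction, in both (a) and (b) we have $c(n)\tR + \beta \geq c' > 0$ at least outside a compact set, and in fact on all of $M$ after a further harmless modification in case (b).

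Next I would construct the limiting solution. On each asymptotic model, either $N$ (cylindrical case) or $S^1 \times N$ (periodic case), the coefficients $\ztR$, $\zbeta$, $\ztsigma^2$ satisfy the hypotheses of the compact Maxwell-type lemma reproduced in the text: $\ztR + \zbeta \geq c > 0$, $\ztsigma^2 \geq 0$ and $\ztsigma^2 \not\equiv 0$. Thus by first solving the linear equation $\Delta_\zg \zu_1 - (c(n)\ztR + \zbeta)\zu_1 = -\ztsigma^2$, which admits a strictly positive solution by the maximum principle, one obtains that $t\zu_1$ and a large constant are sub- and supersolutions of the reduced Lichnerowicz equation \eqref{redlich}; monotone iteration (Proposition~\ref{P13XIII0.1}) then yields a strictly positive limiting solution $\zu$.

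Now I would transfer this to $M$. Let $\zu_1$ be a smooth strictly positive function on $M$ which coincides with the limit solution $\zu$ far out on each end. Writing $u_1 = \zu_1 + v$ reduces \eqref{6XII0.2} to \eqref{6XII0.5}, whose right-hand side lies in $e^{-\nu x}\calC^{0,\mu}(M)$ by the exponential decay hypotheses. Using $\pm A e^{-\nu' x}$ (truncated at $\pm 1$) as barriers, with $0 < \nu' < \nu$ and $A$ large, solvability of the linear equation gives a bounded solution $v$, hence a global $u_1$ with $u_1 \to \zu > 0$ at infinity. Since $c(n)\tR + \beta \geq 0$ on $M$ and $\tsigma^2 \geq 0$, the maximum principle applied to \eqref{6XII0.2} shows $u_1 \geq c > 0$ globally. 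The same computation \eqref{8I1.1} as in the compact case then yields that $u_- := t u_1$ with $0 < t \ll 1$ is a global subsolution and $u_+ := C$ with $C$ large is a global supersolution of the Lichnerowicz equation.

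Finally, to pin down the correct asymptotic behavior, I would combine these crude global barriers with asymptotic ones. Exactly as in \S\ref{ss2I1.5} (and in fact easier here because $\tR \geq 0$, so no delicate cancellation is required), $\zu \pm C e^{-\delta x}$ are asymptotic sub- and supersolutions for a sufficiently small $\delta > 0$ and large $C$. Taking the max/min as in \eqref{15II11.10} produces weak global sub- and supersolutions $\tphi_- \leq \tphi_+$ which agree with $\zu$ up to an $O(e^{-\delta x})$ error on each end. The monotone iteration scheme of Appendix A (Proposition~\ref{P13XIII0.1}) then produces a smooth solution $\tphi$ with $\tphi_- \leq \tphi \leq \tphi_+$, and the sandwich estimate forces $\tphi \to \zu$ exponentially on each end, so each end of cylindrical type remains of the same type in the rescaled metric $\tphi^{4/(n-2)}\tg$. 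The main obstacle is the global construction of $u_1$: one has to know that the limiting solution $\zu$ exists on each model, that the correction $v$ can be chosen bounded (not merely $L^2$), and that positivity of $u_1$ survives globally — this is where the reduction to $c(n)\tR + \beta \geq 0$ from hypotheses (a) or (b) is crucial.
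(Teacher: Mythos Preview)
Your proposal is correct and follows essentially the same route as the paper: solve the reduced equation on each model end via Maxwell's linear trick, lift to $M$ by solving \eqref{6XII0.5} for an exponentially decaying correction $v$ using the barriers $\pm\min\{1,Ae^{-\nu' x}\}$, obtain the crude global barriers $u_-=tu_1$ and $u_+=C$, and then splice with the asymptotic barriers $\zu\pm Ce^{-\delta x}$ via \eqref{15II11.10}. The only deviation is your remark about a ``further harmless modification'' in case (b) to force $c(n)\tR+\beta>0$ globally; the paper does not do this and simply works with $h=c(n)\tR+\beta\ge 0$, which is all that Proposition~\ref{P8XII11.1} requires (and in any event $\tR\ge 0$ on such a manifold already forces $Y(M,[\tg])>0$, so case (b) reduces to case (a) anyway).
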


\section{Manifolds with asymptotically hyperbolic and
cylindrical ends}
 \label{ss2I1.3}
The constructions in Section~\ref{ss2I1.2} generalize immediately to manifolds
which have a finite number of ends, some asymptotically hyperbolic, as defined
in~\cite{AndChDiss}, and the others cylindrically bounded.  The cylindrical
ends are handled as in the  Section~\ref{ss2I1.2}, while the asymptotically
hyperbolic ones are treated as in~\cite{AndChDiss}. The reader
should have no difficulties supplying the details to establish the following
\begin{Theorem}
\label{T21XII0.1}
Let $(M,\tg)$ be a Riemannian manifold of dimension $n\ge 3$  with a finite number of conformally
asymptotically cylindrical or periodic ends and a finite number of asymptotically hyperbolic ends.
Suppose that $\beta\ge c > 0$, and that in each asymptotically cylindrical end, the limiting
scalar curvature $\zR $ is negative. We assume moreover that $\beta$ approaches a constant and
$\tsigma^2$ approaches zero in the asymptotically hyperbolic ends with the usual rates as in
\cite{AndChDiss}, and in addition that on each cylindrical end
$$
\tsigma^2 -  \ztsigma^2  = O(e^{-\epsilon x})   \;,
$$
for some bounded function  $\ztsigma^2$ on $N$.  Then there exists a solution $\tphi $ of the
Lichnerowicz equation so that in the conformally rescaled metric, each end has the same
type as for $\tg$.
\end{Theorem}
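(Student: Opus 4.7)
\proof
The strategy is to assemble global weak sub- and supersolutions $\tphi_{\pm}$ of the Lichnerowicz equation \eqref{conf214} by combining the local barriers already constructed on each end and then invoking the monotone iteration scheme of Proposition~\ref{P13XIII0.1} in Appendix~A. With the hypothesis $\beta \geq c > 0$ and the assumption that $\zR < 0$ on the asymptotically cylindrical/periodic ends (together with the standard AH condition $\tR \to -n(n-1)$ on the hyperbolic ends), we are in the regime of Section~\ref{ss2I1.2}, so $\tR$ is negative outside a compact set. Thus, by the first step of the proof of Proposition~\ref{P19XII0.1} (i.e.\ the Aviles--McOwen-type conformal change, sharpened by our alternative argument), we may first conformally modify $\tg$ to a metric, still called $\tg$, whose scalar curvature satisfies $\tR \le -c < 0$ globally. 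The conformal factor used in this step is bounded above and below and tends to $1$ at infinity on each end, so it preserves the asymptotic type of each end.

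On each conformally asymptotically cylindrical or periodic end, we now proceed exactly as in the proof of Theorem~\ref{T11XII0.1} (and its periodic analogue Theorem~\ref{T2I1.1}). Namely, solve the reduced Lichnerowicz equation \eqref{redlich} for the limiting datum $(\zbeta,\ztsigma^2,\zg)$ to obtain a positive limit $\zphi$, extend $\zphi$ to a smooth positive function $\phi$ on $M$ equal to $\zphi$ far out on the given end, conformally rescale by $\phi$, and verify that $\phi(1 \pm \hat C e^{-\delta x})$ are respectively a supersolution and a subsolution of the transformed equation for $\hat C$ large, $\delta>0$ small, once one is sufficiently far out along the end. The rate hypothesis $\tsigma^2 - \ztsigma^2 = \calO(e^{-\e x})$ supplies the requisite decay of the inhomogeneous term $s$ as in \eqref{rbs}.

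On each asymptotically hyperbolic end, we follow~\cite{AndChDiss}. Because $\beta \to \beta_\infty > 0$ and $\tsigma^2 \to 0$ at the standard AH rates, the algebraic equation $\beta_\infty \tphi_0^{\alpha} = -c(n)\zR_\infty\, \tphi_0$ (with $\zR_\infty = -n(n-1)$) has a unique positive constant solution; a standard computation as in~\cite{AndChDiss} shows that $\tphi_0 \pm \rho^{\mu}$, in suitable defining-function coordinates $\rho$, provides local super- and subsolutions on a neighbourhood of the conformal boundary. These barriers tend to the same positive constant $\tphi_0$ at the conformal infinity, so the associated solution will have the desired AH behaviour.

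It remains to glue these end-barriers into global ones. On the compact middle region, since $\tR \le -c < 0$ everywhere, sufficiently small and large positive constants $\eta$ and $\eta^{-1}$ are global sub- and supersolutions of \eqref{conf214}. Choosing $\eta$ small enough (resp.\ $\eta^{-1}$ large enough) so that $\eta$ is dominated by (resp.\ $\eta^{-1}$ dominates) all the local end-barriers on their common domain of validity, we set
\[
\tphi_{-} := \max\{\,\eta,\,\phi(1 - \hat C e^{-\delta x})\text{ on each cyl.\ end},\, \tphi_0 - \rho^{\mu}\text{ on each AH end}\,\},
\]
\[
\tphi_{+} := \min\{\,\eta^{-1},\,\phi(1 + \hat C e^{-\delta x})\text{ on each cyl.\ end},\, \tphi_0 + \rho^{\mu}\text{ on each AH end}\,\},
\]
exactly in the spirit of \eqref{15II11.10}. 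The maximum of subsolutions is a weak subsolution and the minimum of supersolutions is a weak supersolution; by construction $0 < \tphi_{-} \le \tphi_{+}$. Proposition~\ref{P13XIII0.1} then furnishes a smooth solution $\tphi$ trapped between them. Because $\tphi_{\pm}$ coincide asymptotically on each end with the respective model solution ($\zphi$ on cylindrical ends, $\tphi_0$ on AH ends) up to exponentially small (resp.\ power-law) corrections, the resulting solution inherits the correct asymptotic type, and $\tphi^{4/(n-2)}\tg$ is asymptotically of the same type as $\tg$ on each end.

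The only delicate point is the compatibility of the barriers on overlaps: one must verify that the constant barriers $\eta$, $\eta^{-1}$ in the middle region dominate (in the appropriate sense) the exponentially decaying corrections from both the cylindrical and AH ends in their overlap. This is handled as in the proof of Theorem~\ref{T11XII0.1}: the constants can be chosen after the end-barriers are fixed, since all the latter are bounded above and uniformly bounded below away from zero. \qed
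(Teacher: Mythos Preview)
Your proof is correct and follows precisely the route the paper indicates: the paper itself does not give a detailed argument for Theorem~\ref{T21XII0.1}, but merely says that the cylindrical ends are handled as in Section~\ref{ss2I1.2} and the asymptotically hyperbolic ends as in~\cite{AndChDiss}, leaving the details to the reader. You have supplied exactly those details---the Aviles--McOwen-type preliminary conformal change, the end-specific barriers from Theorems~\ref{T11XII0.1}/\ref{T2I1.1} and from~\cite{AndChDiss}, their gluing via constant barriers and max/min, and the appeal to Proposition~\ref{P13XIII0.1}---in the way the authors intended.
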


\section{Manifolds with asymptotically flat and  cylindrical ends, $\tR\geq 0$}
 \label{S22I12.1}

We now extend the analysis of \S \ref{ss26XI0.1} and suppose that $(M,\tg)$ is a complete manifold with a
finite number of ends, each one either cylindrical with $\tR \geq 0$, or else asymptotically Euclidean or
conical. As mentioned in the introduction, it is simpler to refer to this last case as only asymptotically
Euclidean, the conical case being understood. We only consider the problem with
$$
\Lambda = 0,
 \;,
$$
and assume that $\beta\ge  0$, with both $\beta$ and  $\tsigma^2$ tending to zero in the asymptotically
Euclidean ends faster than $r^{-2- \epsilon}$ for some $ \epsilon >0$. We also assume that
$\tsigma^2 \to \zsigma^2\not \equiv 0$ and $\beta \to \zbeta$ exponentially fast along each asymptotically
conformally cylindrical (in which case $\zsigma^2 $ and $\zbeta$ are functions on $N$) or asymptotically
periodic end (in which case $\zsigma^2$ and $\zbeta$ are periodic on $\RR \times N$, with the same period as
the metric).  Finally, we assume that sufficiently far out in the cylindrical ends the  scalar curvature is
bounded away from zero.

Special cases of the construction in this section have been considered
in~\cite{GabachClement,Waxenegger,GabachDain1,GabachDain2}.
The reader is referred to~\cite{ChIY,FriedrichYamabe,christodoulou:murchadha,IMP2} and references therein for more information on the constraint equations on asymptotically flat manifolds.

We start by arranging that $\tg$ has controlled curvature in the asymptotically Euclidean
regions.  For some constant $r_0 \gg 0$ to be chosen below, let $\chi$ be a smooth nonnegative
function which equals one for $r\ge r_0$ in the asymptotically Euclidean ends and vanishes for
$r \leq r_0 - 1$ and elsewhere on $M$. Let $\mu \in \calC^\infty$ be a small, nonnegative function with
support in $\{r \leq 3r_0\}$ such that $\mu + \tR >0 $ when $\{r \leq 2r_0\}$. Now, let $v$ be a
solution of the equation
\begin{equation}
L_{\tg} v + \chi\tsigma^2 +\mu=0
\label{22I1.2}
\end{equation}
such that $v \to 0$ on all the ends. We assume that on each such end
$$
\tsigma^2 \in  \rho_e^{-\nu-2} \cC^{0,\alpha}_\tg
$$
for some $-\nu \in (2-n, 0)$ and that
$$
\tsigma^2 - \zsigma^2 \in \rho_c^{-\nu}\cC^{0,\alpha} \;,  \quad \zsigma^2 \not \equiv 0
$$
on each conformally asymptotically cylindrical end, with $\nu$ sufficiently small as determined by the dimension and by the
asymptotic behavior of $\tR$ in the cylindrical ends. It is proved in Appendix~\ref{A16XII11.1} that with these
decay conditions, and since $\tR \geq 0$, a solution $v$ exists.


It is straightforward from this method of proof that $v$ tends uniformly to zero as
$r_0 \nearrow \infty$  and $\mu$ is made smaller. Thus we may assume that
$$
1+v \ge \frac 12 \;.
$$
The scalar curvature of the metric $\overline{g} = (1+v)^{\frac 4 {n-2}} \tg $ is equal to
\bel{22I11.3}
\overline{R} = \frac{-L_{\tg} (1+v)}{c(n) (1+v)^{\frac {n+2}{n-2}}} =
\frac {\tR}  {(1+v)^{\frac {n+2}{n-2}}}+
 \frac { (\chi \tsigma^2 +\mu)}{c(n) (1+v)^{\frac {n+2}{n-2}}}
 \;.
\ee
{}From this we see that $\overline{R} > 0$ everywhere and $\overline{R} \geq c \tsigma^2$ for
$r \geq r_0$ in the asymptotically Euclidean and conical regions.
For simplicity, we write this new metric $\overline{g}$ as $\tg$ again.


We now adapt to this setting the construction of barrier functions from \S\ref{ss26XI0.1}.
The hypotheses in the cylindrical ends are identical to those in that section. Consider again
the equation \eq{6XII0.2},
\bel{6XII0.2x}
\Delta_\tg u_{1 } - (h+ f) u_{1 } = -a \;.
\ee
In each asymptotically conformally cylindrical end we solve the limiting equation
\bel{6XII0.3x}
\Delta_\zg \zu_{1 } - (\zh+ \zf) \zu_{1 } = -\za ;
\ee
the function $\zu_1$ on $N$ is strictly positive by the maximum principle. Similarly, on each
asymptotically periodic case we obtain a positive periodic solution $\zu_1$.

Now let $\zu$ denote any function on $M$ which coincides with $\zu_1$ far out in each cylindrical end
and which equals one on each asymptotically Euclidean or conical end.  We search for
a solution $u_1$ of \eq{6XII0.2x} by setting $u_1 = \zu +v$. The function $v$ must satisfy
\bel{6XII0.5x}
(\Delta_\tg - (h+ f)) v = -a -  (\Delta_\tg - (h+ f)) \zu \;.
\ee
This has suitable behavior in each asymptotic ends so that we may use the barrier functions described
in Appendix~\ref{A16XII11.1}. We obtain a solution $v$ which satisfies  $v \in e^{-\nu_1 x} \calC^{2,\alpha}(M)$
in the cylindrical ends, for some $\nu_1 \in (0,\nu)$, and such that $v$ decays faster than $ r^{-\epsilon}$
on the asymptotically Euclidean and conical ends.

The solution $u_1 = \zu +v$ of \eq{6XII0.2} tends to the strictly positive function $\zu$, hence $u_1$
must be strictly positive far out on the cylindrical ends. The maximum principle applied to \eq{6XII0.2x}
shows that $u_1 \geq c > 0$. The calculation \eq{8I1.1} shows that $u_-:=tu_1$ is a strictly
positive subsolution of \eq{6XII0.2} when $t > 0$ is sufficiently small.

On the other hand, a large constant $u_+ := C$ provides a supersolution. This is clear for any domain $K$
in $M$ which does not intersect the asymptotically Euclidean or conical regions since the scalar curvature
$\tR$ is strictly positive. On these other ends, we see that it is a supersolution using that
$\tR \ge c \tsigma^2$.

We now use the definition \eq{15II11.10} in the cylindrical ends and
\bel{15II11.10x}
 \tphi_-=\max( 1-C r^{-\delta }, u_-) \ \mbox{and} \
 \tphi_+=\min( 1+C r^{-\delta  }, u_+)
  \;.
\ee
on the asymptotically Euclidean and conic ends, for $\delta > 0$ sufficiently small.

All of this leads to the
\begin{Theorem}
 \label{T28XI0.1new}
Let $(M^n,\tg)$ be a complete Riemannian manifold, $n \geq 3$, with a finite number asymptotically Euclidean and conical
ends and a finite number of conformally asymptotically cylindrical and periodic ends.  Assume that $\tR\ge 0$
and $\tR \geq c > 0$ on the cylindrical ends. Then for any $\tsigma^2$ and $\beta\ge 0$ satisfying
$$
\tsigma^2\;,\ \beta \in \rho_e^{-\nu-2} \cC^{0,\alpha}
$$
in each asymptotically flat or conical end and
$$
 \tsigma^2 - \zsigma^2
 \;, \
 \beta -\zbeta
  \,
   \in \rho_c^{-\nu}\cC^{0,\alpha}_\tg
  \;, \qquad \ztsigma^2 \not \equiv 0\;,
$$
in each asymptotically conformally cylindrical end,  there
exists a solution $\tphi $ of the Lichnerowicz equation with
$\Lambda=0$ so that each end in the conformally rescaled metric
has the same asymptotic type.
\end{Theorem}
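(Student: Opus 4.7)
The plan is to hybridize the cylindrical-end construction of Theorem~\ref{T8I1.1} with the asymptotically Euclidean barrier method, following the outline already laid down in the discussion preceding the theorem. First, I would perform a preliminary conformal change to absorb the degeneracy of $\tR$ in the asymptotically Euclidean/conical ends. Namely, with $\chi$ a cutoff equal to $1$ for $r \geq r_0$ and $\mu$ a small nonnegative bump supported where $\tR$ can vanish, I would solve
\[
L_{\tg} v + \chi \tsigma^2 + \mu = 0
\]
with $v \to 0$ on all ends, using the mapping/barrier framework of Appendix~B. For $r_0$ large and $\mu$ small, $v$ is small, so $1+v \geq 1/2$; the rescaled metric $\bar g = (1+v)^{4/(n-2)}\tg$ then satisfies $\bar R > 0$ globally and, crucially, $\bar R \geq c\,\tsigma^2$ throughout the asymptotically Euclidean and conical regions. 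Relabelling $\bar g$ as $\tg$ reduces to the case where this strict positivity is built into the background.

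Next, following Maxwell's trick as used in \S\ref{ss26XI0.1}, I would construct a strictly positive function $u_1$ solving
\[
\Delta_{\tg} u_1 - (c(n)\tR + \beta)\, u_1 = -\tsigma^2.
\]
On each cylindrical or periodic end I solve the limiting equation for $\zu_1$ on $N$ (resp.\ on $S^1\times N$), use the maximum principle to obtain positivity, patch these to a global reference function $\zu$ which equals $1$ on each AE/conic end, and then solve for the correction $v = u_1 - \zu$, whose forcing lies in the appropriate weighted H\"older space (exponential decay on cylindrical ends, $O(r^{-\nu-2})$ on Euclidean ends). Appendix~B supplies the decay of $v$, hence $u_1 \to \zu_1 > 0$ at infinity in the cylindrical ends and $u_1 \to 1$ at infinity in the Euclidean/conic ends, and the maximum principle applied to the equation for $u_1$ forces $u_1 \geq c > 0$ globally.

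Having $u_1$ in hand, the calculation of \eqref{8I1.1} shows that $u_- = t u_1$ is a subsolution of the Lichnerowicz equation for all sufficiently small $t > 0$, while $u_+ = C$ (large constant) is a supersolution: on any compact piece and on cylindrical ends this uses $\tR \geq c > 0$, while on the AE/conic ends one uses the key inequality $\tR \geq c\,\tsigma^2$ engineered in the first step. To obtain solutions with the correct asymptotic type on each end, I would splice these crude barriers with the refined local barriers from the earlier sections: $\zu \pm Ce^{-\delta x}$ on the cylindrical ends (as in Section~\ref{ss2I1.5}) and $1 \pm Cr^{-\delta}$ on the asymptotically Euclidean/conic ends. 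Forming
\[
\tphi_- = \max(\,\text{local subsolutions},\, u_-),\qquad \tphi_+ = \min(\,\text{local supersolutions},\, u_+),
\]
as in \eqref{15II11.10} and \eqref{15II11.10x}, yields global weak sub- and supersolutions with the correct end behaviour. Applying the monotone iteration scheme (Proposition~\ref{P13XIII0.1}) then produces a solution $\tphi$ squeezed between them, and the squeezing forces $\tphi$ to converge to $\zu$ on each cylindrical/periodic end and to $1$ on each AE/conic end at the claimed rate.

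The main obstacle is the simultaneous management of ends of different types. Individually, each end type has its own natural function spaces and decay rates (weighted H\"older spaces with polynomial vs.\ exponential weights), and the auxiliary linear problems solved above must have a solution theory that accommodates both kinds of asymptotic behavior at once — this is why the appendix is invoked rather than appealing directly to prior results in the literature. A secondary subtlety is that the first conformal rescaling must not destroy positivity of $1+v$ and must preserve the asymptotic types; this forces careful choice of $r_0$, $\mu$, and the weight $\nu$ so that the ensuing linear problem is uniquely solvable with decaying right-hand side, which is precisely the hypothesis $\nu$ small relative to the cylindrical decay rate of $\tR$.
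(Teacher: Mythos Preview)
Your proposal is correct and follows essentially the same route as the paper: the preliminary conformal change via $L_{\tg}v + \chi\tsigma^2 + \mu = 0$ to force $\overline R>0$ and $\overline R \ge c\,\tsigma^2$ on the AE/conic ends, the Maxwell-type auxiliary linear equation for $u_1$ with limiting solution $\zu_1$ on the cylindrical ends and $\zu\equiv 1$ on the AE/conic ends, the barriers $u_-=tu_1$, $u_+=C$, and the splicing with $\zu\pm Ce^{-\delta x}$ and $1\pm Cr^{-\delta}$ are exactly what the paper does in the discussion preceding the theorem. The only slip is that the barrier/linear existence results you invoke live in Appendix~A (\S\ref{A16XII11.1}), not Appendix~B.
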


\appendix

\section{The barrier method for linear and semilinear elliptic equations}
 \label{A16XII11.1}
We review here, for the reader's convenience, two well-known results which are invoked repeatedly
in this paper to establish existence of solutions to the certain linear and semilinear elliptic equations
which arise in various geometric settings. As explained in the introduction, we rely entirely on barrier methods
in this paper rather than, for example, parametrix methods. While these barrier techniques rarely provide the
sharpest mapping properties or decay rates, they have several advantages; in particular, when they work,
they tend to be much simpler than other methods, and usually require much less regularity.

\medskip

\noindent{\bf Construction of barrier functions}
\begin{Proposition}
 \label{P8XII11.1}
Let $(M,\tg)$ be a smooth Riemannian manifold, and $h$ any nonnegative smooth function on $M$.
Suppose that $f$ is smooth and there exist two $\calC^0$ functions $\dtphi \leq \utphi$ which satisfy
\[
(\Delta_{\tg} - h) \dtphi \geq f, \qquad (\Delta_{\tg} - h) \utphi \leq f
\]
in the weak sense. Then there exists a smooth function $u$ such that
\begin{equation}
(\Delta_\tg -h) u = f\quad \mbox{and}\quad \dtphi \leq u \leq \utphi.
\label{5l1.1}
\end{equation}
\end{Proposition}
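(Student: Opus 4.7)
The approach is a standard Perron-type exhaustion argument, combined with the weak maximum principle to trap the approximate solutions between the given barriers. Since the equation is linear, no monotone iteration is required: on each relatively compact subdomain one simply solves a Dirichlet problem and then passes to a limit.

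First I would fix a sequence of smoothly bounded, relatively compact open sets $\Omega_1 \subset \subset \Omega_2 \subset \subset \cdots$ whose union is $M$. On each $\Omega_k$, choose a continuous boundary value $\psi_k$ on $\partial \Omega_k$ with $\dtphi \le \psi_k \le \utphi$; e.g., a mollification of $(\dtphi + \utphi)/2$ will do. Because $h\ge 0$, the bilinear form $a(u,v) = \int_{\Omega_k}(\langle\nabla u,\nabla v\rangle_{\tg} + h\,uv)\,dV_{\tg}$ is coercive on $H^1_0(\Omega_k)$, so Lax--Milgram combined with a standard $H^1$-extension of $\psi_k$ produces a unique weak solution $u_k \in H^1(\Omega_k)\cap C^0(\overline{\Omega_k})$ of the Dirichlet problem
\[
(\Delta_{\tg} - h)u_k = f \ \text{in } \Omega_k, \qquad u_k = \psi_k \ \text{on } \partial\Omega_k.
\]
Since $f$ and $h$ are smooth, interior elliptic regularity promotes $u_k$ to $C^\infty(\Omega_k)$.

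Next I would apply the weak maximum principle to $w := u_k - \utphi$. By hypothesis $\utphi$ is a weak supersolution and $u_k$ is a classical solution, so $w$ satisfies $(\Delta_{\tg} - h)w \ge 0$ weakly in $\Omega_k$ with $w \le 0$ on $\partial\Omega_k$; since $h \ge 0$, the Gilbarg--Trudinger weak maximum principle (for divergence form operators with nonnegative zeroth-order coefficient) yields $w \le 0$, i.e.\ $u_k \le \utphi$ in $\Omega_k$. The symmetric argument gives $u_k \ge \dtphi$. Hence the family $\{u_k\}$ is uniformly bounded in $L^\infty$ on every compact subset of $M$.

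Finally, a diagonal extraction produces the global solution. For each fixed $j$ and $k \ge j+2$, the $L^\infty$ bound from the barriers together with smoothness of $h$ and $f$ lets interior Schauder estimates on $\Omega_{j+1}$ (with data from $\Omega_{j+2}$) yield a uniform $C^{2,\alpha}(\overline{\Omega_j})$ bound on $u_k$. Arzel\`a--Ascoli then gives $C^2$-convergence on $\overline{\Omega_j}$ along a subsequence; diagonalizing over $j$ yields a limit $u \in C^2_{\mathrm{loc}}(M)$ with $(\Delta_{\tg}-h)u=f$ pointwise and $\dtphi \le u \le \utphi$. Elliptic bootstrap upgrades $u$ to $C^\infty(M)$, completing \eqref{5l1.1}.

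The one technical point requiring some care is the comparison step between the smooth $u_k$ and the merely $C^0$ supersolution $\utphi$: the inequality $(\Delta_{\tg}-h)\utphi \le f$ holds only in the sense of distributions, so one must verify that $w = u_k - \utphi$ is a genuine $H^1$ weak subsolution to which the standard maximum principle applies. This is routine given that $u_k \in H^1$ and $\utphi$ is continuous, but it is the only step where one really uses the weak formulation of the hypotheses; the remaining arguments are then entirely standard.
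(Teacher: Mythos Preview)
Your proposal is correct and follows essentially the same route as the paper: exhaust $M$ by smoothly bounded compact domains, solve the Dirichlet problem on each, trap the solutions between $\dtphi$ and $\utphi$ via the weak comparison principle, and extract a convergent subsequence by local elliptic estimates, Arzel\`a--Ascoli, and diagonalization. The paper simply takes $\dtphi|_{\partial M_j}$ as boundary data rather than an intermediate mollified function, and phrases the comparison step more tersely (``standard (weak) comparison principle''), but the logic is identical; your more explicit invocation of Lax--Milgram and Schauder theory, and your flagging of the $C^0$-barrier comparison as the only genuinely delicate point, are accurate elaborations of what the paper leaves implicit.
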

\begin{proof}
Choose an exhaustion of $M$ by a sequence of compact manifolds with smooth boundary $M_j$.
Because of the sign of $h$, the inhomogeneous Dirichlet problem
\[
(\Delta_{\tg} - h) u_j = f,\quad \left. u \right|_{\del M_j} = \left. \dtphi\right|_{\del M_j}
\]
is uniquely solvable for every $j$. By the standard (weak) comparison principle, $\dtphi \leq u_j \leq \tphi$ on $M_j$.

Letting $j \to \infty$, we see that the sequence  $\{u_j\}$ is uniformly bounded on every
compact set $K \subset M$. Using local elliptic estimates and the Arzela-Ascoli theorem on each $M_i$,
a diagonalisation argument shows that some subsequence $u_{j'}$ converges in $\calC^\infty$ on every
compact set. The limit function $u$ satisfies the correct equation and is sandwiched
between the two barriers $\dtphi$ and $\utphi$.

Note that there is no a priori reason for this solution to be unique, although this may be true in certain circumstances.
\qed\end{proof}

A useful aspect of this is that the barriers need only be continuous rather than $\calC^2$.
The simplest situation in which such a more relaxed hypothesis may arise is the following:
\begin{Lemma}
 \label{L9XII11.1}
Suppose that $\tphi_1$ and $\tphi_2$ are two subsolutions for the equation $(\Delta_{\tg} - h) u = f$. Then
$\tphi = \max\{ \tphi_1, \tphi_2\}$ is also a subsolution in the sense that if $u$ is a solution
to this equation on a domain $D$ and if $u \geq \tphi$ on $\del D$ then $u \geq \tphi$ on $D$.
Similarly if $\tphi_1$ and $\tphi_2$ are supersolutions, then $\tphi = \min \{ \tphi_1, \tphi_2\}$
is also a supersolution.
\end{Lemma}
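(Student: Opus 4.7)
The lemma is essentially a reinterpretation of the standard weak comparison principle, phrased in terms of the (weak) notion of sub/supersolution that is actually used in applications of Proposition~\ref{P8XII11.1}. The plan is to prove the subsolution statement directly from the comparison principle for $\Delta_\tg - h$ with $h \geq 0$; the supersolution version is then obtained by reversing inequalities.

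Let $D \subset M$ be a domain, and suppose $u$ is a (classical) solution of $(\Delta_\tg - h) u = f$ on $D$ with $u \geq \tphi := \max\{\tphi_1,\tphi_2\}$ on $\partial D$. The first step is to apply the comparison principle individually: since $(\Delta_\tg - h)(u - \tphi_i) \leq 0$ on $D$ for each $i = 1,2$ (as a distributional inequality, $\tphi_i$ being only continuous a priori), and since $h \geq 0$, the standard weak maximum principle for $\Delta_\tg - h$ implies that $u - \tphi_i$ attains its minimum on $\partial D$. But on $\partial D$ we have $u \geq \tphi \geq \tphi_i$, so $u \geq \tphi_i$ throughout $D$. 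Taking the maximum over $i$ yields $u \geq \tphi$ on $D$, which is precisely the claimed subsolution property.

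For the supersolution claim, the argument is parallel: if $\tphi_1,\tphi_2$ satisfy $(\Delta_\tg - h)\tphi_i \leq f$ weakly and $u$ is a solution with $u \leq \min\{\tphi_1,\tphi_2\}$ on $\partial D$, then $(\Delta_\tg - h)(\tphi_i - u) \leq 0$ for each $i$; applying the same maximum principle as above to $\tphi_i - u$ gives $u \leq \tphi_i$ on $D$, and hence $u \leq \min\{\tphi_1,\tphi_2\}$ on $D$.

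The only point that deserves a small remark is the justification of the distributional inequality $(\Delta_\tg - h)(u - \tphi_i) \leq 0$ when $\tphi_i$ is merely continuous; this is built into the meaning of ``weak subsolution'' and was already used implicitly in Proposition~\ref{P8XII11.1}, so no separate argument is needed. There is no real obstacle in the proof; the content of the lemma is simply that the operative notion of sub/supersolution used in Proposition~\ref{P8XII11.1} is closed under $\max$ and $\min$ respectively, and this closure property is a one-line consequence of the comparison principle rather than a statement about the distributional sign of $(\Delta_\tg - h)\max\{\tphi_1,\tphi_2\}$ itself.
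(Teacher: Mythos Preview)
Your proof is correct and follows essentially the same route as the paper's. The paper's argument is even terser: it simply notes that $u \ge \tphi \ge \tphi_j$ on $\del D$ implies $u \ge \tphi_j$ on $D$ for each $j$ (using directly that each $\tphi_j$ is a subsolution in the comparison sense), and hence $u \ge \max\{\tphi_1,\tphi_2\}$ on $D$; you add the extra line explaining this individual comparison via the weak maximum principle for $\Delta_\tg - h$ with $h \ge 0$, but the logical structure is identical.
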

\begin{proof}
Observe that $u \geq \tphi \geq \tphi_j$ on $\del D$, hence $u \geq \tphi_j$ on $D$. Since this is
true for $j = 1,2$, we have $u \geq \tphi$ on $D$ too. \qed
\end{proof}

In the applications encountered in this paper, one of the subsolutions, say $\tphi_2$, is typically only defined
on some open subset of $M$ rather than on the whole space, so the argument above does not quite work.
Thus we formulate this result in a slightly more general way.

As in the proof of Proposition~\ref{P8XII11.1}, it suffices to consider barriers on a compact manifold
with boundary, since when $M$ is noncompact we construct solutions on a exhaustion of $M$
by compact manifolds with boundary $M_j$, and then extract a convergent sequence using
Arzela-Ascoli.

Thus let $M$ be a compact manifold with boundary, and suppose that $\partial M = \del_1 M \cup \del_2 M$
is a union of two components (which may themselves decompose further).  Suppose that $\calU$ is a relatively
open set $M$ containing $\del_2 M$, but which has closure disjoint from $\del_1 M$.
Let $\phi_1$ be a subsolution for the operator $L = \Delta - h$ which is defined on all of $M$,
and $\phi_2$ a subsolution for $L$ which is only defined on $\mathcal U$. We assume that $\phi_1$ and $\phi_2$
are continuous and subsolutions in the weak sense.

Consider the open set $\mathcal V = \{ \phi_2 > \phi_1\}$, and let us suppose that
\[
\del_2 M \subset \calV \subset \overline{\calV} \subset \calU.
\]
Define the function
\[
\psi = \begin{cases}
\max \{ \phi_1, \phi_2\} \  & \mbox{in} \  \calU \\
\ \ \phi_1 \ & \mbox{in} \  M \setminus \calU.
\end{cases}
\]
\begin{Lemma}
This function $\psi$ is continuous and a weak global subsolution for $ L$ on $M$.
\end{Lemma}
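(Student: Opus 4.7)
\medskip

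\noindent\textbf{Proof proposal.} I will establish the two assertions separately, starting with continuity. The function $\psi$ is manifestly continuous on the open set $\calU$, where it coincides with the maximum of two continuous functions, and on the open set $M\setminus\overline{\calU}$, where it coincides with $\phi_1$. The only question is continuity across $\partial\calU$. Here I exploit the hypothesis $\overline{\calV}\subset\calU$: since $\calV=\{\phi_2>\phi_1\}$ is open and its closure lies in the open set $\calU$, the boundary $\partial\calU$ admits an open neighborhood $W$ in $M$ with $W\cap\overline{\calV}=\emptyset$. On $W\cap\calU$ we have $\phi_2\leq\phi_1$, hence $\psi=\max\{\phi_1,\phi_2\}=\phi_1$, while on $W\setminus\calU$ we have $\psi=\phi_1$ by definition. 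So $\psi=\phi_1$ on the full neighborhood $W$ of $\partial\calU$, which gives continuity there.

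For the weak subsolution property, let $D\subset M$ be a relatively compact smooth subdomain and let $u$ satisfy $Lu=f$ on $D$ with $u\geq\psi$ on $\partial D$; the goal is $u\geq\psi$ on $D$. The argument proceeds in two stages. First, since $\psi\geq\phi_1$ on all of $M$, the hypothesis on $\partial D$ gives $u\geq\phi_1$ on $\partial D$, so the global subsolution property of $\phi_1$ yields $u\geq\phi_1$ on $D$.

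Second, I claim $u\geq\phi_2$ on $D\cap\calV$. Consider the open set $\Omega:=D\cap\calV$, which is relatively compact with $\overline{\Omega}\subset\overline{\calV}\subset\calU$, so $\phi_2$ is a bona fide weak subsolution in a neighborhood of $\overline{\Omega}$. The boundary decomposes as $\partial\Omega\subset(\partial D\cap\overline{\calV})\cup(\overline{D}\cap\partial\calV)$. On the first piece, $u\geq\psi=\max\{\phi_1,\phi_2\}\geq\phi_2$. On the second piece, continuity forces $\phi_2=\phi_1$ on $\partial\calV$, and by the first stage $u\geq\phi_1=\phi_2$ there. Thus $u\geq\phi_2$ on $\partial\Omega$, and the weak comparison principle for $L=\Delta_{\tg}-h$ (valid since $h\geq 0$) applied to the subsolution $\phi_2-u$ of $Lv\geq 0$ on $\Omega$ gives $u\geq\phi_2$ on $\Omega$. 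Combining: on $D\cap\calV$ we have $\psi=\phi_2\leq u$; on $D\cap(\calU\setminus\calV)$ we have $\phi_2\leq\phi_1$ so $\psi=\phi_1\leq u$; and on $D\setminus\calU$, $\psi=\phi_1\leq u$. Hence $u\geq\psi$ throughout $D$.

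The main delicate point is Step~2, specifically the justification of comparison on the possibly irregular domain $\Omega=D\cap\calV$: one must verify that the classical weak maximum principle applies despite the fact that $\partial\calV$ is merely a level set of $\phi_2-\phi_1$. This is handled by the standard device of approximating $\Omega$ from inside by subdomains $\Omega_\epsilon=D\cap\{\phi_2>\phi_1+\epsilon\}$, applying comparison on each (where now $u\geq\phi_2-\epsilon$ on $\partial\Omega_\epsilon$ follows from the same case analysis), and letting $\epsilon\to 0$. Continuity of $\phi_1,\phi_2,u$ and the fact that the subsolution inequalities are tested against nonnegative test functions make this passage to the limit routine.
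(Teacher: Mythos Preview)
Your proof is correct and follows essentially the same two-stage comparison argument as the paper: first use $\psi\geq\phi_1$ to get $u\geq\phi_1$ everywhere, then compare $u$ with $\phi_2$ on $\calV$ using that $\phi_1=\phi_2$ on the inner part of $\partial\calV$. Your version is slightly more careful---you treat arbitrary subdomains $D\subset M$ rather than only $D=M$, and you add the (unnecessary but harmless) $\Omega_\epsilon$ approximation---but the substance is identical to the paper's proof.
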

\begin{proof} The continuity of  $\psi$ is clear from the fact that $\phi_1 > \phi_2$ in a neighbourhood of the `inner'
boundary of $\calU$, i.e.\ $\del \calU \setminus \del_2 M$.

Next, suppose that $u$ is a solution defined on all of $M$ and that $u \geq \psi$ on $\del M$.
Thus $u \geq \phi_1$ on $\del_1 M$ and $u \geq \phi_2$ on $\del_2 M$.  By asumption,
$\phi_2 \geq \phi_1$ on $\del_2 M$ so $u \geq \phi_1$ on all of $\del M$, hence
since $\phi_1$ is a subsolution, $u \geq \phi_1$ on all of $M$. In particular, $u \geq \phi_1$ on
the set $Y = \{ \phi_1 = \phi_2\}$.

The set $Y$ is compactly contained in $\calU$, and furthermore, $\del \overline{\calV} = Y \cup \del_2 M$.
This means that $u \geq \phi_2$ on $\del \overline{\calV}$, hence $u \geq \phi_2$ on all of $\calV$.
Putting these facts together yields that $u \geq \psi$ on all of $M$ . \qed
\end{proof}

\medskip

\noindent{\bf Examples of barrier functions}

Now suppose that $(M, \tg)$ is a complete manifold with a finite number of ends, each of one of the six types
described in the introduction. We illustrate how the situation above arises by describing standard types of
sub- and supersolutions for the problem
\begin{equation}
(\Delta_{\tg} - h) u = f
\label{sss}
\end{equation}
on each of these types of ends. Here $f$ is a $\calC^{0,\mu}$ function which satisfies certain weighted
decay conditions which are implicit in each case.  In each of these geometries, the end $E$
is a product $\RR^+ \times N$, where $N$ is a compact manifold, but there is a different asymptotic structure
each time.

\begin{enumerate}
\item {\it Asymptotically conic ends (this includes asymptotically Euclidean ends)}: Here $\tilde g$ approaches the conic
metric $g_c := dr^2 + r^2 h$ for some metric $h$ on $N$ in the following sense. Using a fixed coordinate system
$(y_1, \ldots, y_{n-1})$ on $N$, augmented by $r = y_0 \geq 1$, we assume that
\bel{5I1.2}
\tg_{ij} - (g_c)_{ij} = o(1)\;, \quad \del_k (\tg_{ij} - (g_c)_{ij}) = o(r^{-1}).
\ee
Then
$$
u_\pm= \pm C \|r^{\alpha+2} f\|_{L^\infty} r^{-\alpha}
$$
are sub- and supersolutions of \eqref{sss} when $r \geq r_0$ and $C \gg 0$, provided $\alpha \in (0, n-2)$.

\item
{\it Conformally compact (asymptotically hyperbolic) ends}: We now assume that
that $x \in (0, x_0]$ and that $\overline{g} = x^2 \tg$ has components approaching those of $dx^2 + \zg$ as $x \to 0$,
where $\zg$ is a Riemannian metric on $N$, and that the derivatives of the coordinate components of $\overline{g}$
are $o(x^{-1})$. Now, if $\nu \in (0, n-1)$,
$$
u_\pm= \pm \|x^{-\nu} f\|_{L^\infty} x^{\nu}
$$
are sub- and supersolutions of  \eqref{sss} when $x_0 \ll 1$ and $C$ is sufficiently large.

\item {\it Asymptotically cylindrical ends}: Assume that on $[x_0,\infty)\times N$ the metric components of $\tg$
and their first coordinate derivatives approach those of $dx^2 + \zg$, where $\zg$ is a Riemannian metric on $N$.
We emphasize that no decay rate is required. Assume moreover that
$$
h\ge \eta^2 > 0,
$$
for some constant $\eta$. Then, if $\nu \in (-\eta, \eta)$,  the functions
$$
u_\pm= \pm C  \|e^{\nu x} f\|_{L^\infty} e^{-\nu x}
$$
are sub- and supersolutions of \eqref{sss} for $x_0 \gg 1$.

\item {\it Cylindrically bounded ends}: Consider a metric $\tg$ on $[x_0,\infty)\times N$.
To obtain exponentially decaying sub- and supersolutions we need to ensure that
\bel{5I1.5}
(\Delta_\tg - h ) e^{-\nu x}   \le  -C  e^{-\nu x}
\ee
for some constant $C$. Now
\bel{6I1.1}
(\Delta_\tg - h ) e^{-\nu x}  = (-\nu \Delta_\tg x + \nu^2 |d x|_\tg - h)  e^{-\nu x} \;,
\ee
and so \eq{5I1.5} holds if
\bel{5I1.6x}
 h \ge C -\nu \Delta_\tg x + \nu^2 |d x|_\tg   \;.
\ee
For example, this holds when $|\nu|$ is small enough provided there exists a constant $\epsilon>0$ such that
$$
h\ge \epsilon\;, \quad \Delta_\tg x \le \epsilon^{-1}\;, \quad |dx|_\tg \le \epsilon^{-1}
 \;,
$$
In particular, this holds for any cylindrically bounded metric (including conformally asymptotically cylindrical and
conformally asymptotically periodic metrics) provided $h\ge \eta^2 > 0$.
\end{enumerate}

\medskip

\noindent{\bf An alternate construction of barriers}

These various sub- and supersolutions take constant values at the boundary $\{x_0\} \times N$ of $E$,
and have the extra property that the gradient of $\mp u_\pm$ points into $E$ at the boundary
(in the cylindrical cases, one must assume that $\nu>0$ for this to hold).
If this sort of normal derivative condition holds, then there is an alternate
proof that these can be used to construct weak barriers.

Suppose that $(M,\tg)$ is the union of a smooth compact manifold with boundary $M_0$
with a finite number of ends $E_\ell$. Suppose too that on each end $E_\ell$ there are sub-
and supersolutions $u_{\ell, -} < 0 < u_{\ell,+}$ of \eq{sss} which take constant values on
$\del M_0$, and such that $\mp \nabla u_{\ell, \pm}$ is nonvanishing along $\del E_\ell$
and points into $E_\ell$. Possibly multiplying the $u_{\ell, \pm}$ by large constants,
we assume that all $u_{\ell,\pm}$ take the same constant value $\alpha_\pm$ on $\del M_0$.

Let $u_0$ be the solution of \eq{sss} on $M_0$ with $u_0 = 0$ on $\del M_0$.
Choose a large constant $C$ so that $|\nabla u_{\ell, \pm}| \geq C$ on $\del M_0$
gradient of each of the $u_\pm$ on $\partial M_0$ is everywhere
larger than $\sup_{\del M_0} |\nabla u_ 0|$. Then the functions
$$
\phi_\pm = \begin{cases}
u_0 + C \alpha_\pm\ \ & \mbox{on}\ M_0 \\
C u_{\ell, \pm} & \mbox{on}\ E_\ell,
\end{cases}
$$
are weak sub- and supersolutions of \eq{sss} on the entire manifold $M$. Indeed, the choice of $C$ guarantees that
the distributional second derivatives of $\phi_\pm$ have the appropriate signs at $\del M_0$.

\medskip

\noindent{\bf Using barriers to construct solutions of equations}

We turn, finally, to a consideration of how these barrier functions can be used to solve semilinear elliptic equations.

\begin{Proposition}[Monotone iteration scheme]
\label{P13XIII0.1}
Let $(M,\tg)$ be a smooth Riemannian manifold and $F$ a locally Lipschitz
function. Suppose that $\dtphi \le \utphi$ are continuous functions which satisfy
$$
\Delta_\tg \dtphi \ge F(z,\dtphi)\;,  \qquad  \Delta_\tg \utphi \le F(z,\utphi)
$$
weakly. Then there exists a smooth function $\tphi$ on $M$ such that
$$
\Delta_\tg \tphi = F(z,\tphi)\;, \qquad \dtphi \le \tphi \le  \utphi \;.
$$
\end{Proposition}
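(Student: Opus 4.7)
The plan is to reduce the semilinear problem to a sequence of linear problems of the type already handled by Proposition~\ref{P8XII11.1}, via the monotone iteration scheme. First, I would exhaust $M$ by a nested sequence of smoothly bounded compact domains $M_j$, and work on each $M_j$ separately, then extract a convergent subsequence at the end using Arzela-Ascoli and a diagonal argument, exactly as in the proof of Proposition~\ref{P8XII11.1}. So fix $M_j$ and write $\Omega = M_j$. Since $F$ is locally Lipschitz and $\dtphi, \utphi$ are continuous on $\overline{\Omega}$, there exists a constant $K \ge 0$ (depending on $j$) such that the function $u \mapsto F(z,u) + K u$ is monotone nondecreasing in $u$ for every $z \in \overline{\Omega}$ and every $u$ in the interval $[\min_{\overline{\Omega}} \dtphi, \max_{\overline{\Omega}} \utphi]$.

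Next, define the iteration $u_0 = \utphi$ and inductively let $u_{k+1}$ be the unique solution of the linear Dirichlet problem
\[
(\Delta_\tg - K) u_{k+1} = F(z, u_k) - K u_k \quad \text{in } \Omega, \qquad u_{k+1} = \utphi \text{ on } \partial \Omega,
\]
which exists and is smooth in the interior (with the given continuous boundary data) by standard linear theory, since $K \ge 0$. The key claim, proved by induction, is that
\[
\dtphi \le u_{k+1} \le u_k \le \utphi \quad \text{in } \Omega.
\]
Indeed, $u_1 \le u_0 = \utphi$ follows from the weak comparison principle applied to $(\Delta_\tg - K)(u_1 - \utphi) \ge F(z,\utphi) - K \utphi - (F(z,\utphi) - K \utphi) = 0$ with nonpositive boundary data. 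The inequality $u_{k+1} \le u_k$ uses the monotonicity of $F + K\, \mathrm{id}$: one has $(\Delta_\tg - K)(u_{k+1} - u_k) = (F(z,u_k) + Ku_k) - (F(z,u_{k-1}) + K u_{k-1}) \ge 0$ with zero boundary data (at step $k \ge 1$; step $k=0$ was handled separately), hence $u_{k+1} \le u_k$ by comparison. The lower bound $u_{k+1} \ge \dtphi$ is analogous, using that $\dtphi$ is a subsolution and $\dtphi \le \utphi = u_{k+1}$ on $\partial \Omega$.

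Thus $\{u_k\}$ is a monotone decreasing sequence uniformly bounded between $\dtphi$ and $\utphi$, so it converges pointwise to some function $\tphi_\Omega$ on $\Omega$. Standard interior elliptic estimates applied to the linear equation satisfied by $u_{k+1}$ (whose right-hand side is bounded in $L^\infty$ uniformly in $k$, hence in $\calC^{0,\mu}$ along subsequences via Arzela-Ascoli, by the local Lipschitz continuity of $F$) upgrade this to $\calC^{2,\mu}_{\mathrm{loc}}$-convergence on compact subsets of the interior of $\Omega$, and bootstrapping gives $\calC^\infty_{\mathrm{loc}}$-convergence. Passing to the limit in the iteration equation shows that $\tphi_\Omega$ solves $\Delta_\tg \tphi_\Omega = F(z, \tphi_\Omega)$ in $\Omega$, and by construction $\dtphi \le \tphi_\Omega \le \utphi$ there.

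Finally, letting $\Omega = M_j$ vary and diagonalizing, I extract a subsequence converging in $\calC^\infty$ on every compact subset of $M$ to a smooth function $\tphi$ satisfying the equation globally and sandwiched between the barriers, exactly as in the proof of Proposition~\ref{P8XII11.1}. The main subtlety is ensuring that the Lipschitz constant $K$ can be chosen uniformly on each $M_j$, which follows because $\dtphi$ and $\utphi$ are continuous hence bounded on each compact $\overline{M_j}$; one does not need uniform boundedness of the barriers on $M$. No other step presents a real obstacle, since the weak comparison principle for $\Delta_\tg - K$ with $K \ge 0$ accepts merely continuous sub/supersolutions and therefore handles the nonsmooth barriers that may arise in applications.
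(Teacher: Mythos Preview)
Your approach is essentially the same as the paper's: monotone iteration on an exhaustion by compact domains, followed by a diagonal Arzela--Ascoli argument. The only cosmetic difference is that the paper starts the iteration at the subsolution (with boundary data $\dtphi$) and produces an increasing sequence, whereas you start at the supersolution and aim for a decreasing one; both are standard.

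There is, however, a genuine sign slip in your monotonicity hypothesis and in the displayed inductive computation. From the iteration $(\Delta_\tg - K)u_{k+1} = F(z,u_k) - K u_k$ one gets
\[
(\Delta_\tg - K)(u_{k+1}-u_k) \;=\; \big[F(z,u_k) - K u_k\big] \;-\; \big[F(z,u_{k-1}) - K u_{k-1}\big],
\]
with \emph{minus} signs, not plus signs. For this to be $\ge 0$ when $u_k \le u_{k-1}$ you need $u \mapsto F(z,u) - Ku$ to be \emph{nonincreasing} (equivalently $K$ dominates $\partial_u F$ from above), which is exactly the paper's condition $\partial_{\tphi} F_A < 0$. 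Your stated condition, that $F(z,u) + Ku$ be nondecreasing, only says $K \ge -\partial_u F$ and does \emph{not} yield the needed inequality; with that hypothesis the induction can fail (e.g.\ $F(z,u)=u$, $K=0$). The same correct monotonicity is what makes the lower bound $u_{k+1}\ge \dtphi$ go through. The fix is immediate: take $K$ to be a Lipschitz constant for $F$ on $\overline\Omega \times [\min_{\overline\Omega}\dtphi,\max_{\overline\Omega}\utphi]$, so that $F - K\,\mathrm{id}$ is nonincreasing; everything else in your argument then stands.
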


As in the linear case, we do not assert that the solution is unique, and there are examples which show
that uniqueness may fail.

\medskip

\proof
When $M$ is compact, we proceed as follows.  Let $\underline{\alpha} = \inf_M \dtphi$ and
$\overline{\alpha} = \sup_M \utphi$. Rewrite the equation as
\[
(\Delta_{\tg} - A^2) \tphi = F_A(z,\tphi),
\]
where
\[
F_A(z, \tphi)  := F( z, \tphi) - A^2 \tphi.
\]
Choose $A$ so large that the function $F_A$ satisfies $\partial F_A(z, \mu )/\partial \tphi < 0$ for almost every $\mu \in
[\underline{\alpha}, \overline{\alpha}]$.

Now
set $\tphi_0 = \underline{\tphi}$, and define the sequence of functions $\tphi_j$ by
\[
(\Delta_\tg - A^2) \tphi_{j+1} = F_A(z, \tphi_j).
\]
To see that this is well-defined for every $j$, note simply that $\Delta_{\tg} - A^2$ is invertible and
furthermore, by the maximum principle and induction,
\[
\underline{\tphi}=\tphi_0 \leq \tphi_1 \leq \tphi_2 \leq \ldots < \overline{\tphi},
\]
for all $j$, which implies that $F_A$ is monotone for the same constant $A$ (which depends only on $\underline{\tphi}$
and $\overline{\tphi}$). Even though $\tphi_0$ is only continuous, standard elliptic regularity
shows that $\tphi_1 \in \calC^{0,\alpha}$ and that $\tphi_j \in \calC^{2,\alpha}$ for $j \geq 2$.

We have produced a sequence which is monotone and uniformly bounded away from $0$ and $\infty$,
so it is straightforward to extract a subsequence which converges in $\calC^{2,\alpha}$ for some $0 < \alpha < 1$.
If $F \in \calC^\infty$, then the subsequence converges in $\calC^\infty$ too.

All of this works equally well if $M$ is a compact manifold with boundary. To be concrete, we require at each
stage that $\tphi_j = \underline{\tphi}$ on $\del M$ and we obtain a solution in the limit which satisfies
the same boundary conditions.

Now consider a general manifold $(M,\tg)$. As in Lemma \ref{L9XII11.1}, choose an exhaustion $M_j$ of $M$
by compact submanifolds with smooth boundary.  For each $j$, choose $A_j$ so large that
$\partial F(z, \mu )/\partial \tphi -A_j^2< 0$ on $M_j$ for almost every
$$
\mu \in [\inf_{M_j}\dtphi, \sup_{M_j}\utphi]
 \;.
$$
We may as well assume that $A_j$ is a nondecreasing sequence.
%


Using the first part of the proof, for each $j$ we can solve the equation
\[
(\Delta_{\tg} - A_j^2) \tphi_j = F_{A_j}(z, \tphi_j), \quad \left. \tphi_j \right|_{\del M_j} = \underline{\tphi}
\]
Notice that by adding $A_j^2 \tphi_j$ to both sides, the functions $\tphi_j$ all satisfy the same equation
and are all trapped between the two fixed barrier functions $\underline{\tphi}$ and $\overline{\tphi}$,
albeit on an expanding sequence of domains.  Elliptic estimates for the fixed equation
$\Delta \tphi = F(z,\tphi)$ may now be used to obtain uniform a priori estimates for derivatives
of $\tphi_j$ on any fixed compact set.  From this we can use Arzela-Ascoli and a diagonalization
argument to find a subsequence which converges in $\calC^{2,\alpha}$ (or $\calC^\infty$)
on any compact set to a limit function which satisfies the equation and which lies between the
same two barrier functions.
\qed

\section{Some examples}
 \label{A5I1.2}

The flagship example of black holes with degenerate horizons is
provided by the \emph{Majumdar-Papapetrou} black holes, in
which the metric $^4 {}g$ and the electromagnetic potential $A$
take the form
\bea\label{ABHI.0} &  {}^4 {}g = -u^{-2}dt^2 +
u^2(dx^2+dy^2+dz^2)\,, &
\\
\label{I.0.1}
 &A = u^{-1}dt\,. &
\eea
The \emph{standard MP black holes} are obtained if the
coordinates $x^\mu$ of \eq{ABHI.0}--\eq{I.0.1} cover the range
$\R\times(\R^3\setminus\{\vec a_i\})$ for a finite set of
points $\vec a_i\in\R^3$, $i=1,\ldots,I$, with the function $u$
taking the form
\be \label{standard} u=1+\sum_{i=1}^I
\frac{m_i}{|\vec x - \vec a_i|}
 \,,
\ee
for some strictly positive constants $m_i$. Introducing radial
coordinates centered at a puncture $\vec a_i$, the   metric $g$
induced on the slices $t=\const$ by \eq{ABHI.0} is
\bel{5I1.11}
 g = \frac {m_i^2 }{r^2}(1 + O(r))\big(dr^2 + r^2 (d\theta^2 + \sin^2
 \theta \, d\varphi^2)\big)
 \;.
\ee
The new coordinate $x=-\ln r$ leads to a manifestly
asymptotically cylindrical metric:
\bel{5I1.11x}
 g =   {m_i^2} (1 + O(e^{-x}))\big(dx^2  +\underbrace{d\theta^2 + \sin^2
 \theta \, d\varphi^2}_{=:\zg}\big)
 \;.
\ee

In this example the slices $t=\const$ are totally geodesic, and
it follows from the scalar constraint equation (with Maxwell
sources) that the scalar curvature of $g$ is positive
everywhere. Furthermore  the scalar curvature of the metric
$\zh$ defined in \eq{5I1.11x} equals two, while $R$ approaches
$2/m_i^2$ as one moves out along the $i^{\mathrm{th}}$
cylindrical end.

A very similar analysis applies on the slices of constant time
in the Kastor-Traschen metrics~\cite{KastorTraschenBH},
solutions of the vacuum Einstein equations with positive
cosmological constant.

A simple example of a metric with two cylindrical ends with
\emph{toroidal transverse topology} is provided by Bianchi $I$
metrics in which two directions only have been compactified,
leading to a spatial topology $ \R \times {\Bbb T}^2$.

Another example of metrics with ends of cylindrical type is
provided by the extreme Kerr metrics.
The metric induced on Boyer-Lindquist sections of the event
horizons of the Kerr metric reads
\bel{15XII0.1}
 ds^2  = (R^2 + a^2 \cos^2 \theta  )\, d \theta^2 + \frac{ (R^2 + a^2 )^2 \sin^2 \theta }{ R^2 +
a^2 \cos^2 \theta } d \varphi^2
 \;,
\ee
where $R = m \pm \sqrt{m^2 -a^2}$. We note that its scalar curvature, denoted
here by $K$, is  \cite{KayllWWW}
$$
K =  \frac{(R^2 + a^2 ) (3 a^2 \cos^2 \theta  - R^2 )}{ (R^2 + a^2
 \cos^2 \theta  )^3}
 \;.
$$
We claim that the limiting metric, as one recedes to infinity
along the cylindrical end of the extreme Kerr metric, can be
obtained from \eq{15XII0.1} by setting $a=m$:
\bel{15XII0.2}
 \zg  = m^2 \bigg((1+ \cos^2 \theta  ) d \theta^2 + \frac{4 \sin^2 \theta }{1  +
 \cos^2 \theta } d \varphi^2\bigg)
 \;.
\ee
(This metric  has scalar curvature
$$
  K =  \frac {2 (3   \cos^2 \theta  - 1 )}{m^2 (1 +
 \cos^2 \theta  )^3}
 \;.
$$
and the reader should note that $K$ changes sign.)
Indeed, the \emph{extreme Kerr metrics} in Boyer-Lindquist
coordinates take the form, changing $\varphi$ to its negative
if necessary,
 \begin{eqnarray}
\label{Kerr2x} g & = & - dt^{2} + \frac{2  mr
}{r^{2}+m^{2}\cos^{2}\theta}(dt- m\sin^2 \theta d\varphi)^2 +  {(r^{2}+m^{2})
 \sin ^{2}\theta  d\varphi ^{2} }
 \nonumber
\\
& &+\frac{r^{2}+m^{2}\cos^{2}\theta}{(r-m)^2 }dr^{2}
+(r^{2}+m^{2}\cos^{2}\theta) d\theta ^{2}\;.
\end{eqnarray}
The metric induced on the slices $t=\const$ reads, keeping in
mind that $r>m$,
 \begin{eqnarray}
\label{Kerr}
 g & = &  \frac{r^{2}+m^{2}\cos^{2}\theta}{(r-m)^2}dr^{2}
 +(r^{2}+m^{2}\cos^{2}\theta)d\theta ^{2}
  \nonumber
\\
 &&
 + \frac{(r^{2}+m^{2})^{2}-(r-m)^2 m^{2}\sin ^{2}\theta
}{r^{2}+m^{2}\cos^{2}\theta}\sin ^{2}\theta  d\varphi ^{2}\;.
\end{eqnarray}
Introducing a new variable $x\in (-\infty,\infty)$ defined as
$$
 dx = -\frac{ dr}{r-m} \quad
 \Longrightarrow
 \quad
 x = -\ln {(r-m)}
 \;,
$$
so that $x$ tends to infinity as $r$ approaches $m$ from above,
the metric $g$ in \eq{Kerr} exponentially approaches
\begin{equation}
m^2 (1+\cos^2 \theta)\bigg(  dx^2 + d\theta^2 +  \frac{4 \sin^2 \theta }{(1  +
 \cos^2 \theta )^2} d \varphi^2\bigg)
\label{5I1.6}
\end{equation}
as $x \to \infty$. We thus see that the degenerate Kerr space-times contain CMC slices with
\emph{asymptotically conformally cylindrical} ends.

Recall that the scalar curvature $K$ of a metric of the form
$ d\theta^2 + e^{2f} d\varphi^2    $
equals
$$
K= -2( f'' + (f')^2)
 \;.
$$
Hence the transverse part $\zg$ of the limiting conformal
metric appearing in \eq{5I1.6} has scalar curvature
$$
 K=-\frac{4 \cos (2 \theta )}{\left(\cos ^2\theta +1\right)^2}
   \;,
   $$
which is negative on the northern hemisphere and positive on
the southern one. We note that the slices $t=\const$ are
maximal, and the scalar constraint equation shows that $R\ge
0$. This example clearly exhibits the lack of correlation
between the sign of the limit $\lim_{x\to\infty}R$ and that of
the scalar curvature of the transverse part of the asymptotic
metric (whether $\zg$ as defined in \eq{15XII0.2} or its
conformally rescaled version $\zh$ from \eq{5I1.6}), even when
the constraint equations hold.

\bibliographystyle{amsplain}
\bibliography{../references/bartnik,%
../references/myGR,%
../references/newbiblio,%
../references/newbib,%
../references/reffile,%
../references/bibl,%
../references/Energy,%
../references/hip_bib,%
../references/dp-BAMS,%
../references/MLproposal,%
../references/prop,%
../references/besse2,%
../references/netbiblio
}%
\end{document}